\newcommand*{\LONG}{}%
\newfont{\mycrnotice}{ptmr8t at 7pt}
\newfont{\myconfname}{ptmri8t at 7pt}
\def\squarebox#1{\hbox to #1{\hfill\vbox to #1{\vfill}}}
\newtheorem{theorem}{Theorem}
\newtheorem{lemma}{Lemma}
\newtheorem{property}{Property}
\newtheorem{definition}{Definition}
\newtheorem{claim}{Claim}
\newtheorem{corollary}{Corollary}
\newcommand{\namedref}[2]{\hyperref[#2]{#1~\ref*{#2}}}
\newcommand{\sectionref}[1]{\namedref{Section}{#1}}
\newcommand{\propertyref}[1]{\namedref{Property}{#1}}
\newcommand{\theoremref}[1]{\namedref{Theorem}{#1}}
\newcommand{\figureref}[1]{\namedref{Figure}{#1}}
\newcommand{\claimref}[1]{\namedref{Claim}{#1}}
\newcommand{\lemmaref}[1]{\namedref{Lemma}{#1}}
\newcommand{\corollaryref}[1]{\namedref{Corollary}{#1}}
\newcommand{\etal}{et~al.\xspace}
\newcommand{\false}{\mathit{false}}
\newcommand{\true}{\mathit{true}}
\newcommand{\ie}{\emph{i.e.,\xspace}}
\def\beginsmall#1{\vspace{-\parskip}\begin{#1}\itemsep-\parskip}
\def\endsmall#1{\end{#1}\vspace{-\parskip}}
\newcommand{\RT}{\mathcal{RT}\!}
\newcommand{\PT}{\mathcal{PT}\!}
\newcommand{\CT}{\mathcal{CT}\!}
\newcommand{\IT}{\mathcal{IT}\!}
\newcommand{\F}{\mathcal{F}}
\newcommand{\FA}{\mathcal{F\!A}}
\newcommand{\D}{\mathcal{D\!}}
\newcommand{\Df}{\mathcal{D}_{\!\phi}}
\newcommand{\bad}{\mbox{\sc bad}\xspace}
\newcommand{\tb}{\makebox[0.6cm]{}}
\newcommand{\hide}[1]{}
\newcommand{\commentout}[1]{}
\newcommand{\itrule}{\textsc{it-to-rt rule}\xspace} 
\newcommand{\earlyitrule}{\textsc{early\!\_it-to-rt rule}\xspace} 
\newcommand{\strongitrule}{\textsc{strong\_it-to-rt rule}\xspace} 
\newcommand{\decayrule}{\textsc{decay rule}\xspace} 
\newcommand{\gcrule}{\textsc{resolve rule}\xspace} 
\newcommand{\rgcrule}{\textsc{relaxed rule}\xspace} 
\newcommand{\srule}{\textsc{special-bot rule}\xspace} 
\newcommand{\srootrule}{\textsc{special-root-bot rule}\xspace} 
 \newcommand{\lastroundrule}{\textsc{round $\phi+1$ rule}\xspace} 
\begin{document}


\title{Byzantine Agreement with Optimal Early Stopping, Optimal Resilience and Polynomial Complexity}
\author{
Ittai Abraham\thanks{Part of the work was done
	at Microsoft Research Silicon Valley.}\\
VMware Research
\\Palo Alto, CA, USA\\
\texttt{iabraham@vmware.com}
\and
Danny Dolev\thanks{Part of the work was done
while the author visited Microsoft Research Silicon Valley.
Danny Dolev is Incumbent of the Berthold Badler Chair in Computer Science. 
This research project was supported in part by The Israeli Centers of Research Excellence (I-CORE) program, (Center  No. 4/11), and by grant 3/9778 of the Israeli Ministry of Science and Technology.
}\\
Hebrew University of Jerusalem\\
Jerusalem, Israel\\
\texttt{dolev@cs.huji.ac.il}
}

\maketitle


\begin{abstract}
We provide the first protocol that solves Byzantine agreement with optimal early stopping 
($\min\{f+2,t+1\}$ rounds) and optimal resilience ($n>3t$) using polynomial message size and computation.

All previous approaches obtained sub-optimal results and used resolve rules that looked only at the immediate children in the EIG (\emph{Exponential Information Gathering})  tree.
At the heart of our solution are new resolve rules that look at multiple layers of the EIG tree.
\end{abstract}


\section{Introduction}

In 1980 Pease, Shostak and Lamport \cite{PSL80, LPS82} introduced the problem of Byzantine agreement, a fundamental problem in fault-tolerant distributed computing.
In this problem $n$ processes each have some initial value and the goal is to have all correct processes decide on some common value. The network is reliable and synchronous. If all correct processes start with the same initial value then this must be the common decision value, and otherwise the value should either be an initial value of one of the correct processes or some pre-defined default value.\footnote{Other versions of the problem may not restrict to a value on one of the correct processes, if not all initial values are the same, or require agreement on a leader's initial value, which can be reduced to the version we defined.} This should be done in spite of at most $t$ corrupt processes that can behave arbitrarily (called Byzantine processes).
Byzantine agreement abstracts one of the core difficulties in distributed computing and secure multi-party computation --- that of coordinating a joint decision.
Pease \etal \cite{PSL80} prove that Byzantine agreement cannot be solved for $n \le 3t$. Therefore we say that a protocol that solves Byzantine agreement for $n>3t$ has \emph{optimal resilience}. Fisher and Lynch \cite{FL82} prove that any protocol that solves Byzantine agreement must have an execution that runs for $t+1$ rounds.  Dolev \etal \cite{DRS90} prove that any protocol must have executions that run for $\min \{ f+2,t+1\}$ rounds, where $f$ is the actual number of corrupt processes. Therefore we say that a protocol that solves Byzantine agreement with  $\min \{ f+2,t+1\}$ rounds has \emph{optimal early stopping}.

The protocol of  \cite{PSL80} has optimal resilience and optimal worst case $t+1$ rounds. However the message complexity of their protocol is exponential. Following this result, many have studied the question of obtaining a protocol with optimal resilience and optimal worst case rounds that uses only polynomial-sized messages (and computation).

Dolev and Strong~\cite{DS82} obtained the first polynomial protocol with optimal resilience.
The problem of obtaining a protocol with optimal resilience, optimal worst case rounds and polynomial-sized messages turned out to be surprisingly challenging. Building on a long sequence of results, Berman and Garay \cite{BG93-votes} presented a protocol with optimal worst case rounds and polynomial-sized messages for $n>4t$. In an exceptional tour de force, Garay and Moses \cite{GM93, GM98}, presented a protocol for binary-valued Byzantine agreement obtaining optimal resilience, polynomial-sized messages and $\min \{f+5,t+1\}$ rounds. We refer the reader to \cite{GM98} for a detailed and full account of the related work. Recently Kowalski and Most{\'e}faoui \cite {KM03} improved the message complexity to $\tilde{O}(n^3)$ but their solution does not provide early stopping and requires exponential computation.

Worst case running of $t+1$ rounds is the best possible if the protocol is to be resilient to an adversary that controls $t$ processes. However, in executions where the adversary controls only $f<t$ processes, the optimal worst case can be improved to $f+2$ rounds. Berman \etal~\cite{BGP92} were the first to obtain optimal resilience and optimal early stopping (i.e. $\min \{ f+2,t+1\}$ rounds) using exponential size messages. Early stopping is an extremely desirable property in real world replication systems. In fact, agreement in a small number of rounds when $f=0$ is a core advantage of several practical state machine replication protocols (for example \cite{CCL99} and \cite{ZYZ07} focus on optimizing early stopping in the fault free case).

Somewhat surprisingly, after more than 30 years of research on Byzantine Agreement, the problem of obtaining the best of all worlds is still open. There is no protocol with optimal resilience, optimal early stopping and polynomial-sized message. The conference version of \cite{GM98} claimed to have solved this problem but the journal version only proves a $\min \{f+5,t+1\}$ round protocol, then says it is \emph{possible} to obtain a $\min \{ f+3,t+1\}$ round protocol and finally the authors say they  \emph{believe} it should be possible to obtain a $\min \{ f+2,t+1\}$ round protocol. We could not see how to directly extend the approach of  \cite{GM98} to obtain optimal early stopping. The main contribution of this paper is solving this long standing open question
and providing the optimal $\min \{ f+2,t+1\}$ rounds with optimal resilience and polynomial complexity. 
Moreover, our result applies directly for arbitrary 
initial values and not only to binary initial values, as some of the previous results.

Our Byzantine agreement protocol obtains a stronger notion of \emph{multi-valued validity}. If $v \neq \bot$ is the decision value then at least $t+1$ correct processes started with value $v$. 
The multi-valued validity property is crucial in our solution for early stopping with monitors. This property is also more suitable in proving that Byzantine agreement implements an ideal world centralized decider that uses the majority value. We note that several previous solutions (in particular \cite{GM98}) are inherently binary and their extension to multi-valued agreement does not have the stronger multi-valued validity property.

\begin{theorem}\label{thm:full}
	Given $n$ processes, there exists a protocol that solves Byzantine agreement. The protocol is resilient to any Byzantine adversary of size $t<n/3$. For any such adversary, the total number of bits sent by any correct process is polynomial in $n$ and the number of rounds is $\min \{f+2,t+1\}$ where $f$ is the actual size of the adversary.
\end{theorem}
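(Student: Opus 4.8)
The plan is to establish the three required properties for a protocol built on the Exponential Information Gathering (EIG) paradigm, instrumented with the new multi-layer resolve rules together with a pruning and monitoring layer: agreement with multi-valued validity, the round bound $\min\{f+2,t+1\}$, and polynomial complexity. First I would fix the tree structure: a node is labeled by a string $\sigma$ of distinct process identifiers, a child $\sigma i$ of $\sigma$ records what process $i$ relayed in round $|\sigma i|$ about the value it had resolved at $\sigma$, and each correct process resolves values bottom-up. The central objects are a \emph{common} node --- one on which all correct processes agree --- and a \emph{common frontier}, a set of common nodes carrying a single value that cuts every root-to-leaf path. The heart of correctness is the two classical facts that (i) every root-to-leaf path of length $t+1$ contains a node whose last identifier belongs to a correct process, and any such node is common, and (ii) common-ness propagates upward through the majority-based resolve, so that if all children of a node are common then the node is common; together these force the root to be common, which yields agreement. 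Multi-valued validity --- that a decision $v\neq\bot$ implies at least $t+1$ correct processes began with $v$ --- I would obtain by tracking how the initial values of correct processes populate the first level and survive the majority resolve.

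Next I would prove termination by round $\min\{f+2,t+1\}$, and this is where the multi-layer resolve rules become essential. Rather than certifying a node as common by inspecting only its immediate children, a correct process certifies a common frontier by examining several layers of descendants, which lets it detect agreement as soon as a single fault-free round has elapsed. I would formalize a monitoring mechanism in which each process tracks, round by round, whether any new faulty behavior has surfaced, and a halting rule that fires once a common frontier has been certified; the quantitative claim to prove is that with only $f$ faults a certified frontier provably exists by round $f+1$, giving a decision by round $f+2$. The delicate point --- and the one I expect to be the main obstacle --- is the classical ``one-round spread'' issue: if one correct process halts and decides while another keeps running, the messages the halted process no longer sends must not be able to drive a still-running process toward a different value. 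I would resolve this by layering a gradecast-style consistency guarantee over the halting decision, so that a process halts only after broadcasting a decision all correct processes are forced to respect within one further round, and then verifying that this coordination neither delays the $f+2$ bound nor conflicts with the frontier certification above.

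Finally I would bound the complexity. The naive EIG tree is exponential, so the protocol must prune: a node is expanded in the next round only when it retains sufficient support, and processes whose relayed values are exposed as inconsistent (again through the gradecast layer) are excluded from further expansion. I would show that this pruning bounds the number of surviving nodes per level by a polynomial in $n$, that each surviving node is certified or discarded using only polynomially many bits, and --- the subtle part of the complexity argument --- that pruning never discards a node needed to witness a common frontier, so that both correctness and early stopping are preserved. Combining the agreement and validity analysis, the round bound, and the polynomial bound on active nodes and on the bits each correct process sends then yields the theorem.
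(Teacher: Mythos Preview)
Your outline is essentially the classical EIG-with-pruning story, and that story is precisely what thirty years of prior work could not push past $\min\{f+5,t+1\}$ at $n>3t$. The genuine gap is in the complexity argument, and it propagates backward into the resolve rules you would need.

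Concretely: you say pruning keeps the number of surviving nodes per level polynomial because inconsistent relays are exposed and excluded. But at resilience $n>3t$ a corrupt process can force branching at level $i$ and be publicly exposed only at level $i+2$; in the intervening round it can branch again. If the adversary has two fresh corrupt processes become active at some round $i_1$, then one per round, then none at $i_2$ (the pattern $2\,1^*\,0$), the waste does not grow, the monitor never fires, and a naive prune lets both special branches survive --- the tree doubles. The paper kills exactly one of those two branches with a dedicated mechanism: a \emph{special-$\bot$ rule} that forces a single unresolved child to $\bot$ once all its siblings are fixed, combined with a \emph{relaxed resolve rule} whose threshold is $n-t-1$ rather than $n-t$ so that the parent can still be fixed after one child has been overwritten to $\bot$. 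The correctness of this pair is delicate (the $\bot$-coloring can conflict with a value another correct process already committed to for that child), and it is what drives the elaborate Safety induction in Theorem~\ref{thm:main} and the branch-counting Lemma~\ref{lem:two-bad-in-two-rounds-ok}. Nothing in ``gradecast-style consistency'' or ``majority-based resolve'' gives you this; standard gradecast is explicitly broken by the confirmed-echoer requirement the paper needs.

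A second gap is the coloring mechanism you do not mention: when the protocol fixes a node $\sigma$ it \emph{overwrites} all descendants of $\sigma$ to the same value, even descendants whose last label is a correct process that sent something else. Without this, a fixed node can have a child on which correct processes disagree, and since the parent's value depends on grandchildren, disagreement leaks upward. Your ``common-ness propagates upward through majority'' is exactly the single-layer reasoning that fails here; the paper's resolve rules look two levels down and must reconcile the case where a grandchild was first fixed to $d$ and later recolored to $d'\neq d$ by its parent. Until you have the special-$\bot$ rule, the two-threshold pair, and the coloring discipline --- and have re-proved safety under their interaction --- the polynomial bound and the $f+2$ halting cannot both go through at $n>3t$.
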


\textbf{Overview of our solution.}
At a high level we follow the framework set by Berman and Garay \cite{BG93-votes}. In this framework, if at a given round all processes seem to behave correctly then the protocol stops quickly thereafter. So if the adversary wants to cause the protocol to continue for many rounds it must have at least one corrupt process behave in a faulty manner in each round. However, behaving in a faulty manner will expose the process and in a few rounds the mis-behaving process will become publicly exposed as corrupt.

This puts the adversary between a rock and a hard place: if too few 
corrupt processes are publicly exposed then the protocol reaches agreement quickly, if too many corrupt process are exposed then a ``monitor" framework (also called ``cloture votes") that runs in the background
causes the protocol to reach agreement in a few rounds. So the only path the adversary can take in order to generate a long execution is to publicly expose exactly one corrupt process each round. In the $t<n/4$ case, this type of adversary behavior keeps the communication polynomial.

For $t<n/3$ a central challenge is that a corrupt process can cause communication to grow in round $i$ but will be  
publicly exposed only in round $i+2$. Naively, such a corrupt process may also cause communication to grow both in round $i$ and $i+1$ and this may cause exponential communication blowup. Garay and Moses \cite{GM98} overcome this challenge by providing a protocol such that, if there are at most two new corrupt processes in round $i$ and no new corrupt process in round $i+1$ then even though they are publicly exposed in round $i+2$ they cannot increase communication in round $i+1$ (also known as preventing ``cross corruption").

At the core of the binary-valued protocol of Garay and Moses is the property that one value can only be decided on even rounds and the other only on odd rounds.  This property seems to raise several unsolved challenges for obtaining optimal early stopping. We could not see how to overcome these challenges and obtain optimal early stopping using this property. Our approach allows values to be fixed in a way that is indifferent to the parity of the round number (and is not restricted to binary values).

Two key properties of our protocol that makes it quite different from all previous protocols. First, the value of a node is determined by the values of its children and grandchildren in the EIG tree (\cite{BDDR92}). Second, if agreement is reached on a node then the value of all its children is changed to be the value of the node. This second property is crucial because otherwise even though a node is fixed there could be disagreement about the value of its child. Since the value of the parent of the fixed node depends on its children and grandchildren, the disagreement on the grandchild may cause disagreement on the parent and this disagreement could propagate to the root.

The decision to change the value of the children when their parent is fixed is non-trivial. Consider the following scenario with a node $\sigma$, child $\sigma p$ and grandchild $\sigma p q$: some correct reach agreement that the value of $\sigma p q$ is $d$, then some correct reach agreement that the value of $\sigma p$ is $d'\ne d$ and hence the value of $\sigma pq$ is changed (colored) to $d'$. So it may happen that some correct decide the value of $\sigma$ based on $\sigma p q$ being fixed on $d$ and some other correct decide the value of $\sigma $ based on $\sigma p q$ being colored to $d'$. Making sure that agreement is reached in all such scenarios requires us to have a relatively complex set of complementary agreement rules. 

To bound the size of the tree by a polynomial size we prove that the adversary is still between a rock and a hard place: roughly speaking there are three cases. If just one new process is publicly exposed in a given round then the tree grows mildly (remains polynomial). If three or more new processes are exposed in the same round then this increases the size of the tree but can happen at most a constant number of times before a monitor process will cause the protocol to stop quickly.

The remaining case is when exactly two new processes are exposed, then a sequence of (possibly zero) rounds where just one new process is exposed in each round, followed by a round where no new process is exposed. This is a generalized version of the ``cross corruption" case of \cite{GM98} where the adversary does not face increased risk of being caught by the monitor process. We prove that in these cases the tree essentially grows mildly (remains polynomial).

In order to deal with this generalized ``cross corruption'' we introduce a special resolve rule ($\srule$) tailored to this scenario. In particular, in some cases we fix the value of a node $\sigma$ to $\bot$ (a special default value) if we detect enough support. This solves the generalized ``cross corruption'' problem but adds significant complications. Recall that when we fix a value to a node then we also fix (color) the children of this node with the same value.

Suppose a process fixes a node $\sigma$ to $\bot$. The risk is that some correct processes may have used a child $\sigma p$ with value $d$ but some other correct process will see $\bot$ for $\sigma p$ (because when $\sigma$ is fixed to $\bot$ we color all its children to $\bot$). Roughly speaking, we overcome this difficulty by having two resolve rule thresholds. The base is the $n-t$ threshold ($\gcrule$, $\itrule$) and the other is with a $n-t-1$ threshold ($\rgcrule$). In essence this $n-t-1$ rule is resilient to 
disagreement on one child node 
(that may occur due to coloring). We then make sure that the $\srule$ can indeed change only one child value. This delicate interplay between the resolve rules is at the core of our new approach.

\textbf{The adversary.}
Given $n>3t$ and $\phi \le t$, as in \cite{GM98}, we will consider a \emph{$(t,\phi)$-adversary} - an adversary that can control up to $\phi$ corrupt processes that behave arbitrarily and at most $t-\phi$ corrupt processes that are always silent (send some default value $\bot$ to all processes every round).
The $(t,\phi)$-adversary will be useful to model executions in which all correct processes have detected beforehand some common set of at least $t-\phi$ corrupt processes  and hence ignore them throughout the protocol. Note that the standard $t$-adversary is just a $(t,t)$-adversary.

\section{The EIG structure and rules}\label{sec:eig}

In this section we define the EIG structure and rules.

Let $N$ be the set of processes, $n = |N|$ and assume that $n>3t$. Let $D$ be a set of possible decision values. We assume some decision $\bot \in D$ is the designated default decision. 

Let $\Sigma_r$ be the set of all sequences  of length $r$ of elements of $N$ without repetition. Let $\Sigma_0=\epsilon$, the empty sequence. Let $\Sigma = \bigcup_{0 \le j \le t+1} \Sigma_j$. An \emph{Exponential Information Gathering} tree (EIG in short) is a tree whose nodes are elements in $\Sigma$ and whose edges connect each node to the node representing its longest proper prefix.  Thus, node $\epsilon$ has $n$ children, and a node from $\Sigma_k$ has exactly $n-k$ children.

We will typically use the Greek letter $\sigma$ to denote a sequence (possibly empty) of labels corresponding to a node in an EIG tree. We use the notation $\sigma q$ to denote the node in the EIG tree that corresponds
to the child of node $\sigma$ that corresponds
to the sequence $\sigma$ concatenated with $q \in N$.
We denote by $\bar\epsilon$ the root node of the tree that corresponds to
the empty sequence. 
Given two sequences $\sigma, \sigma' \in \Sigma$, let $\sigma' \sqsubset \sigma$ denote that $\sigma'$ is a proper prefix of $\sigma$ and $\sigma' \sqsubseteq \sigma$ denote that $\sigma'$ is a prefix of $\sigma$ (potentially $\sigma'=\sigma$).

In the EIG consensus protocol each process maintains a dynamic tree data structure  $\IT$. This data structure maps a set of nodes in $\sigma$ to values in $D$. Intuitively, this tree contains all the information the process has heard so far. Each process $z$ also maintains two global dynamic sets $\F,\FA$. 
The set $\F$ contains processes that $z$ detected as faulty, and $\FA$ contains processes that $z$ knows are detected by all correct processes. The protocol for updating $\F,\FA$ is straightforward: 
\beginsmall{itemize}
\item In each round the processes exchange their $\F$ 
lists
and update their $\F$ and $\FA$ sets once a faulty process appears in $t+1$ or $2t+1$ lists, respectively. 
\item When a process is detected as faulty every correct process masks
its future messages to $\bot$. 
\endsmall{itemize}

The basic EIG 
protocol will be invoked repeatedly, and several copies of the EIG 
protocol may be running concurrently.  The accumulated set of faulty processes will be used across all copies (the rest of the variables and data structures are local to each EIG invocation).  Therefore, we assume that when the protocol is invoked the following property holds:

\begin{property}\label{prop:init-fail}
	When the protocol is invoked, no correct process appears in the faulty sets of any other correct process. Moreover, $\FA_p\subseteq\F_p$ and $\FA_p\subseteq\F_q$ for any two correct processes $p$ and $q$, 
\end{property}

Each  invocation of the EIG protocol is tagged with a parameter $\phi$, known to all  processes. An EIG protocol with parameter $\phi$, will run for at most $\phi+1$ rounds. 
At the beginning of the agreement protocol the faulty sets are empty at all correct processes and the EIG protocol with parameter $\phi=t$ is executed. Each additional invocation of the EIG protocol is with a smaller value of $\phi.$
In the non-trivial case, when the EIG protocol with parameter $\phi$ is invoked then $|\bigcap_i \FA_i |\ge t-\phi.$ There will be one exception to this assumption, and it is handled in \lemmaref{lem:input-agree}.  Thus, other than in that specific case, 
it is assumed that we have a $(t,\phi)$-adversary during the execution of the EIG protocol with parameter $\phi$.

The basic EIG protocol for a correct process $z$ with initial value $d_z\in D$ is very simple:
\beginsmall{enumerate}
\item \textbf{Init:} Set $\IT(\bar\epsilon) := d_z$, so $\IT(\bar\epsilon)$ is set to be the initial value.

\item \textbf{Send:} in each round $r$, $1\le r\le \phi+1$, for every $\sigma \in \IT \cap \Sigma_{r-1}$, such that $z \notin \sigma$, send the message $\langle \sigma, z, \IT(\sigma) \rangle$ to every process.

\item \textbf{Receive set:} in each round $r$, let $\mathcal{S}_r:=\{ \sigma x \in \Sigma_r\}$.

\item \textbf{Receive rule:} in each round $r$, for all $\sigma x \in \mathcal{S}_r$ set \\
$\IT(\sigma x) :=
\begin{cases}
\bot &\mbox{if }x \in \F \\
d & \mbox{if } x \not\in\F \mbox{ sent }  \langle \sigma, x, d \rangle \mbox{ and } d\in D;\\
\IT(\sigma) & 
\mbox{otherwise.}
\end{cases}$

\endsmall{enumerate}

\textit{Note:} assigning of $\IT(\sigma x) := \IT(\sigma)$ when $x \notin \F$ is crucial 
for the case where
$x$ is correct and has halted in the previous round. Thus, if a process is silent but is not detected (possibly because it has halted due to early stopping) $z$ assigns it the value it heard in the previous round. 

We use a second dynamic EIG tree data structure  $\RT$. Intuitively, if a process puts a value in a node of this tree then, essentially, all correct processes will put the same value in the same node in at most 2 more rounds. Processes use several rules to close branches of the $\IT$ tree whose value in $\RT$ is already determined by all. We present later the rules for closing branches of the $\IT$ tree.
To handle this,  we modify lines 2 and 3 as described below (and keep lines 1 and 4 as above).
\beginsmall{enumerate}

\item [2.] \textbf{Send:} in each round $r$, $1\le r\le \phi+1$, for every $\sigma \in \IT \cap \Sigma_{r-1}$, such that
$z \mbox{$\notin \sigma$}$, and the branch $\sigma$ is not closed
send the message $\langle \sigma, z, \IT(\sigma) \rangle$ to every process.

\item [3.] \textbf{Receive set:} in each round $r$, let $\mathcal{S}_r = \{\sigma x \in \Sigma_r \mid \mbox{branch }\sigma x \mbox{ is not closed} \}$.

\endsmall{enumerate}

Informally, $\IT_z(\sigma p) =d$ 
(where $\IT_z$ denotes the $\IT$ tree at process $z$)
 indicates that  process $z$ received a message from process $p$ that said that his value for $\sigma$ was $d$. $\RT_z(\sigma p)=d$ indicates, essentially, that  process $z$ knows that every correct process $x$ will agree and have $d \in \RT_x(\sigma p)$ in at most two more rounds.

Observe that we record in the EIG tree only information from sequences of nodes that do not contain repetition, therefore, not every message a process receives will be recorded.

At the end of each round, we  apply the rules below to determine whether to assign values to nodes in $\RT$, assigning that value in $\RT$ is called {\em resolving} the node.


\subsection{The Resolve Rules}\label{sub:EGC}

A key feature of our algorithm is that whenever we put a value into $\RT(\sigma)$ we also color (assign) all the descendants of $\sigma$ in $\RT$ with the same value. Observe that this means we may color a node $\sigma w$ in $\RT$ to $d$ even if $w$ is correct and sent $d' \neq d$ to all other correct processes.

\textbf{Rules for IT-to-RT resolve:}
The following definitions and rules cause a node to be resolved based on information in $\IT$.

\beginsmall{enumerate}

\item If  $\IT(\sigma w)=d$ \textbf{then} we say: (1) $w$ is a voter of $(\sigma,w,d)$; (2) $w$ is confirmed on $(\sigma,w,d)$; (3) For all $v \in N\setminus\{\sigma\}$, $w$ is a supporter of $v$ on $(\sigma,w,d)$.

\textit{Note:} the reason that we count $w$ as a voter, as confirmed and as a supporter for all its echoers is that due to the EIG structure $w$ does not appear  in the subtree of $\sigma w$.\\

\item If $\IT(\sigma w v)=d$ , \textbf{then} we say that
$v$ is a supporter of $v$ for $(\sigma,w,d)$.

\textit{Note:} again we need $v$ to be a supporter of itself because of the EIG structure.\\

\item If  $\IT(\sigma w v u) =d$  \textbf{then} we say that $u$ is a supporter of $v$ for $(\sigma,w,d)$.\\

\item If there is a set $|U|=n-t$, such that for each $u' \in U$, $u'$ is a supporter of $v$ on $(\sigma,w,d)$ \textbf{then} we say that $v$ is confirmed on $(\sigma,w,d)$.

\textit{Note:} if $\sigma$ contains no correct and $w$ is correct, then any correct child $v$ (of $\sigma w$) will indeed have $n-t$ supporters for $\sigma w$ and hence will be confirmed. Note that one supporter is $w$, the other is $v$ and the remaining are all the $n-t-2$ correct children of $\sigma w v$. Also note that $w$ is confirmed, so all $n-t$ correct will be confirmed on $(\sigma,w,d)$.\\

\item If $u \neq w$ has a set $|V|=n-t$, such that for each $v' \in V$, $u$ is a supporter of $v'$ on $(\sigma,w,d)$ and $v'$ is confirmed on $(\sigma,w,d)$ \textbf{then}
$u$ is a voter of $(\sigma,w,d)$.

\textit{Note:} this is somewhat similar to the notion of a Voter in grade-cast (\cite{FM97,FM88}). But there is a crucial difference: all the $n-t$ echoers need to be \textit{confirmed}. Also note that $w$ is a voter for itself.\\

\item \itrule: 
If $w$ has a set $|U|=n-t$, such that for each $u' \in U$, $u'$ is a voter of $(\sigma,w,d)$ \textbf{then} if $\sigma w \notin \RT$, then put $\RT(\sigma w):=d$ and color descendants of $\sigma w$ with $d$ as well.

\textit{Note:} this is somewhat similar to the notion of a grade 2 in grade-cast. A crucial difference is that the $n-t$ voters needed are defined with respect to \textit{supported} echoers. This is a non-trivial change that breaks the standard grade-cast properties. Also note that we not only put a value in $\sigma w$ but also color all the descendants. \\

\item \lastroundrule: 
if $\IT(\sigma w)=d$ and $\sigma \in \Sigma_{t}$ \textbf{then} if  $\sigma w \notin \RT$, then put $\RT(\sigma w):=d$.

\textit{Note:} this is a standard rule to deal with the last round.

\endsmall{enumerate}	

\textbf{Rules for $\RT$ tree resolve:}
The following definitions and rules cause a node to be resolved based only on information in $\RT$ (these rules do not look at $\IT$).

\beginsmall{enumerate}

\item If there is a set $|U|=t+1$, such that for each $u' \in U$, $\RT(\sigma w v u')=d$  \textbf{then} we say
$v$ is $\RT$-confirmed on $(\sigma,w,d)$.

\textit{Note:} if any correct sees a node as confirmed then it has $n-t$ that echo its value. At least $t+1$ of them are correct and they all cause all correct to see the node as $\RT$-confirmed. Of course a node may become $\RT$-confirmed even if it was never confirmed by any correct.
Observe that if $\RT(\sigma w u)=d$ then, by coloring, $u$ is $\RT$-confirmed on $(\sigma,w,d)$.\\

\item If $u \neq w$ has a set $|V|=n-t$, such that each $v' \in V$ is $\RT$-confirmed on $(\sigma,w,d)$ and
for each $v' \in V\setminus\{u\}$, $\RT(\sigma w v' u)=d$ 
and  if $u \in V$ then also $\RT(\sigma w u)=d$,
 \textbf{then}
$u$ is $\RT$-voter of $(\sigma,w,d)$.

\textit{Note:} if any correct process sees a node as a voter then it has $n-t$ echoers that are confirmed. So each of these $n-t$ echoers will be $\RT$-confirmed. So  all correct processes will see this node as $\RT$-voter. Of course a node can become $\RT$-voter even if it was never a voter at any correct process.\\

\item \gcrule: 
If $w$ has a set $|U|=t+1$, such that for each $u' \in U$, $u'$ is a $\RT$-voter of $(\sigma,w,d)$ \textbf{then}
if $\sigma w \notin \RT$, then put $\RT(\sigma w):=d$,   and color descendants of $\sigma w$ with $d$ as well. The rule applies also for node $\sigma w=\bar\epsilon$.

\textit{Note:} if any correct process does \itrule 
then this rule tries to guarantee that all correct processes will also put this node in $\RT$. 
The problem is that \srule 
(see below) may be applied to one of the echoers and this may cause some of the $\RT$-voters to lose their required support. The following rule fixes this situation. It reduces the threshold to $n-t-1$ but requires that all children nodes
are fixed.\\

\item \rgcrule: 
If all the children of $\sigma w$ are in $\RT$ (\ie  $\forall \sigma w v \in \Sigma$: $\sigma w v \in \RT$) and exists a set $|V|=n-t-1$, such that for each $v' \in V$, $\RT(\sigma w v')=d$, \textbf{then} if $\sigma w \notin \RT$, then put $\RT(\sigma w):=d$,   and color descendants of $\sigma w$ with $d$ as well. The rule applies only for nodes $|\sigma w|\ge 1$.

\textit{Note:} as mentioned above, the 
\rgcrule 
requires a threshold of $n-t-1$ so that it can take into account the possibility of one value changing to $\bot$ due to the following rule:\\

\item \srule: 
If there is a set $|V|=t+2 - |\sigma w u|$ such that for all $v \in V$, $\RT(\sigma w u v)=\bot$ and for all $u' \neq u$ such that $\sigma w u' \in\Sigma$, $\sigma w u' \in \RT$  \textbf{then} if $\sigma w u \notin \RT$, then put $\RT(\sigma w u):=\bot$,   and color descendants of $\sigma w u$ with $\bot$ as well. The rule applies only for  $|\sigma w u|\ge 2$.

\textit{Note:}  
This rule can be applied to at most one child.\\

\item \srootrule: 
If exists a set $|U|=t+1$ such that for each $u \in U$, $\RT(u)=\bot$ then if $\bar\epsilon \notin \RT$, then put $\RT(\bar\epsilon):=\bot$,   and color descendants of $\bar\epsilon$ with $\bot$ as well.

\textit{Note:} this rule is important in order to stop quickly if $t+1$ correct processes start with the value $\bot$.

\endsmall{enumerate}

To prevent the data structures from expanding too much processes close branches of the tree, and from that point on they do not send messages related to the closed branches. 
We use the notation $\{\sigma \in \RT[r]\}$ to denote an indicator variable that equals  true if $\RT(\sigma)$ was assigned some value by the end of round $r$, and false otherwise. 

\textbf{Branch Closing and Early Resolve rules:}
There are three rules to close a branch in  $\IT$ two of them also trigger an early resolve. By the end of round $r$, $r\le \phi,$  
\beginsmall{enumerate}
\item \decayrule: 
if $\mbox{$\exists$} \sigma' \sqsubseteq \sigma$ such that $\sigma' \in \RT[r-1]$, then close the branch  $\sigma\in\IT$.

\textit{Note:} this is the simple case: if a process already fixed the value of $\sigma'$ in $\RT$ in round $r-1$ then it stops in the end of round $r$, since by 
the end of
round $r+1$ all correct processes will put $\sigma'$ in $\RT$ (and will interpret this process's silence in the right way 
during
round $r+1$).  There is no need to continue. Coloring will fix all the values of this subtree.\\

\item  \earlyitrule: 
if $\sigma\in\Sigma_{r-1}$ and exists $U\subseteq N$, $U\cap\{u' \mid u'\in \sigma\}=\emptyset$,
$|U|=n-r$, 
such that for every $u,v \in U\setminus\F$,
$\IT(\sigma u)= \IT(\sigma v)$, then  if $\sigma \notin \RT$, then put $\RT(\sigma):=\IT(\sigma)$ and close the branch  $\sigma\in\IT$.

\textit{Note:} this is a case where the process can forecast that all correct processes will put $\sigma$ in $\RT$ in the next round (because the process sees that 
all children nodes 
agree). So the process can fix $\sigma$ in this round and stop now, because all correct processes will fix $\sigma$ in $\RT$ next round (and will interpret this process's silence in the right way).  \\

\item  \strongitrule: 
if $\sigma\in\Sigma_{r-2}$ and exists $U \subseteq N$, 
$U\cap\{u' \mid u'\in \sigma\}=\emptyset$,
$|U|=n-r+1$ such that for every $u,v \in U\setminus\F$, 
where $v\not=u,$
$\IT(\sigma uv)= \IT(\sigma vu)$ 
then,  if $\sigma \notin \RT$, then put $\RT(\sigma):=\IT(\sigma)$ and close the branch  $\sigma\in\IT$. 

\textit{Note:} in this case all the correct children of $\sigma$ except for at most one will be fixed in the next round to the same value, so the 
\rgcrule 
will be applied to $\sigma$ in the next round. So we can fix $\sigma$ in this round and stop now.\\

\endsmall{enumerate}

In each round all the above rules are applied repeatedly until none holds any more.

The rules above imply that there are two ways to give a value to a node in $\RT$. One is assigning it a value using the various rules, and the other is coloring it as a result of assigning a value to one of its predecessors.  We will use the term {\em color} for the second one and the term {\em put} for the first one.

\textbf{Rules for fault detection and masking:}
The following definitions and rules are used to detect faulty processes, put them into $\F$ and hence mask them (all messages from $\F$ are masked to $\bot$). The last rule also defines an additional masking.
The process first updates its $\F$ and $\FA$ sets using the sets received from the other processes during the current round. 
A  process is added to $\F$ or $\FA$  once it appears in $t+1$ or $2t+1$ sets, respectively.
Next the process applies the following fault detection rules.
The fault detection is executed before applying any of the resolve rules above.  When a new process is added to $\F$, the new masking is applied and the fault detection is repeated until no new process can be added.  Only then the resolve rules above are applied. 

At process $z$ by the end of round $r$:

\beginsmall{enumerate}

\item \textbf{Not Voter}: If $\exists \sigma w \in \Sigma_{r-1}$ and $w \neq z$ and $\not \exists \sigma' \sqsubseteq \sigma w$ such that $\sigma' \in \RT$ and it is not the case that there exists a set $|U|=n-t-1$ such that for each $u' \in U$, $\IT(\sigma w u') = \IT(\sigma w)$ \textbf{then} add $w$ to $\F$.

\textit{Note:} this is the standard detection rule after one round - if anything looks suspicions then detect. \\

\item \textbf{Not IT-to-RT}: If $\exists \sigma w \in \Sigma_{r-2}$ for which  
$w$ does not have a set $|U|=n-t$, such that for each $u' \in U$, $u'$ is a voter of $(\sigma,w,d)$, and
$\not \exists \sigma' \sqsubseteq \sigma $ such that $\sigma' \in \RT$  \textbf{then} add $w$ to $\F$.

\textit{Note:} this is the standard detection rule after two rounds - if anything looks suspicions then detect.\\

\item If $u,$ $u \neq w$, has a set $|V|=n-t$, such that for each $v' \in V$, $u$ is a supporter of $v'$ on $(\sigma,w,d)$  \textbf{then} we say that 
$u$ is an \textit{unconfirmed voter} of $(\sigma,w,d)$.

\textit{Note:} the notion of an \textit{unconfirmed voter} is exactly that of a voter in the standard grade-cast protocol.\\

\item  If $w$ has a set $|U|=t+1$, such that for each $u' \in U$, $u'$ is an unconfirmed voter of $(\sigma,w,d)$ \textbf{then} we say that $\sigma w$ is \textit{leaning towards} $d$.

\textit{Note:} the notion of \textit{leaning towards} is exactly that of getting grade $\ge 1$ in  the standard grade-cast protocol.\\

\item \textbf{Not Masking}: If $\sigma w \in \Sigma_{r-3}$ is leaning towards $d$ and there exists $u$, $|V|=t+1$, and $d' \neq d$  such that for each $v' \in V$, $\IT(\sigma w u v')=d'$ and there exists $|\sigma''|>|\sigma|$ such that $\IT(\sigma''w u) \neq \bot$ then
\beginsmall{enumerate}
\item
$\IT(\sigma''w u) =\bot$;
\item
if by the end of the round $\not \exists \sigma' \sqsubseteq \sigma'' w$ such that $\sigma' \in \RT$ \textbf{then} add $u$ to $\F$.
\endsmall{enumerate}

\textit{Note:}  If $\sigma w$ is leaning towards $d$ then $u$ must have heard at least $t+1$ say $d$ on $\sigma w$. If  $t+1$ say $u$ said $d'$ then $u$ must have said $d'$ to some correct. So $u$ must have received $d'$ from $\sigma w$ but in the next round $u$ hears $t+1$ say $\sigma w$ said $d$. So $u$ must conclude that $w$ is faulty and $u$ must mask him from the next round. If $u$ did not mask some $\sigma'' w u$ then the \textbf{Not Masking} rule will detect $u$ as faulty and mask all such $\sigma'' w u$ 
for you and also mark you as faulty. The reason we wait until the end of the round to add that node to $\F$ is that it might be a node of a correct process that stopped in the previous round and hence did not send any messages in the current round, and therefore did not send masking. In such a case we mask its virtual sending, but do not add it to $\F.$

\endsmall{enumerate}

\textbf{Finalized Output:}
By the end of each round (after applying all the resolve rules), 
the process checks whether there is a frontier  in $\RT$. A \textit{frontier} (also called a cut) is said to exist if for all $\sigma \in \Sigma_{\phi+1}$ there exists some sub-sequence $\sigma' \sqsubseteq \sigma$ such that $\sigma' \in \RT$. 
\beginsmall{enumerate}
\item  \textbf{Early Output rule}:
By the end of a round, if $\bar\epsilon\in\RT$, {\bf output} $\RT(\bar\epsilon)$. 

\item \textbf{Final Output rule}:
Otherwise, if there is a frontier, 
{\bf output} $\bot.$

\endsmall{enumerate}

Observe that the existence of a frontier can be tested from the current $\IT$ in $O(|\IT|)$ time.

\textbf{Stopping rule:} If all branches of $IT$ are closed, stop the protocol.

\section{The Consensus Protocol Analysis}\label{sec:protocol}

The EIG protocol implicitly presented in the previous section is a consensus protocol $\Df$, where
$\phi $, 
$1 \le \phi \le t$ is a parameter. Protocol $\Df$ runs for at most $\phi+1$ rounds and solves Byzantine agreement against a $(t,\phi)$-adversary.
Denote by $G$ the set of correct processes, $|G|\ge n-t$, where $n=|N|$, and by $S$, $S=\bigcap_{q\in G}\F_q$,  the set of processes that are masked to $\bot$ by all correct processes.
Let $s:=|S|$.  

Our solution invokes several copies of the EIG protocol.  For each invoked protocol, $\Df$, there are two cases: either 
 $s\ge t-\phi$, or we are guaranteed that the input of all correct processes that start the protocol is the same
(in particular, it may be that some correct processes have halted and do not start the protocol).  The following lemma deals with this latter case.

\begin{lemma}\label{lem:input-agree}[Validity and Fast Termination]
For any $(t,t)$-adversary, and $n\ge 3t+1$,
\beginsmall{enumerate}
\item
if every correct process that starts the protocol holds the same input value $d$ then $d$ is the output value of all correct processes that start the protocol, by the end of round 2, and all of them complete the protocol by the end of round 3.
\item
if all correct processes start the protocol and $t+1$ correct processes start with $\bot$ then all correct processes output $\bot$ by the end of round 3 and stop the protocol by the end of round 4.
\item For $p,q\in G$, no $p$ will add $q$ to $\F_p$ in either of the above cases.
\endsmall{enumerate}
\end{lemma}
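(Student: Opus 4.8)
The plan is to prove the three parts together, since the resolve rules and the fault-detection rules are interleaved inside every round; concretely I would maintain Part~3 as a round-by-round invariant and use it to justify Parts~1 and~2. The one structural fact that drives everything is that a correct process broadcasts a single value per node and therefore looks consistent to every other correct process: in round~1 each correct $z$ sends $\langle\bar\epsilon,z,d_z\rangle$, so at every correct $y$ one has $\IT_y(x)=d_x$ for each correct $x$, and more generally every node $\sigma q$ owned by a correct $q$ has its $\ge n-t-1$ correct children equal to $\IT(\sigma q)$. For Part~3 I would then argue rule by rule that a correct $q$ never enters $\F_p$ for a correct $p$: membership arises only from appearing in $t+1$ received $\F$-lists or from a detection rule, and by \pref{prop:init-fail} only the at most $t<t+1$ faulty processes can list $q$, which rules out the list mechanism as long as no correct process detects $q$. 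The consistency fact falsifies the \textbf{Not Voter} precondition for $w=q$ (the $n-t-1$ correct children already agree with $\IT(\sigma q)$), it lets $q$ accumulate the $n-t$ voters demanded by \textbf{Not IT-to-RT} (each correct process is a voter of $(\sigma,q,\IT(\sigma q))$ via its own correct supporters), and it prevents any correct node from \emph{leaning} towards a second value, disarming \textbf{Not Masking}. The induction closes because these statements hold exactly while no correct process has been masked.

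For Part~1 all correct start with $d$, so at least $n-t\ge 2t+1$ depth-one nodes carry $d$ at every correct process. Validity is the easy half: every resolve rule requires agreement of a quorum of size at least $n-t-1>t$ (or $t+1$ for the $\bot$-rules), whereas any value $d'\ne d$, including the default $\bot$ when $d\ne\bot$, can be backed by at most the $t$ faulty nodes; hence $d'$ can never be \emph{put} into $\RT$, nothing is put so nothing is ever \emph{colored} to $d'$, and the only value that can reach $\RT(\bar\epsilon)$ is $d$. For fast termination I would show $\bar\epsilon\in\RT$ by the end of round~2: either the \earlyitrule already fixes the root to $d$ in round~1, or the correct depth-one nodes are fixed to $d$ in round~2 and the root-level \gcrule (stated to apply to $\bar\epsilon$) fixes $\RT(\bar\epsilon)=d$, with the \strongitrule giving the complementary route. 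Once $\bar\epsilon\in\RT$ at the end of round~2 the \textbf{Early Output rule} outputs $d$; in round~3 the \decayrule closes every branch because $\bar\epsilon\sqsubseteq\sigma$ for all $\sigma$, and the \textbf{Stopping rule} then halts, giving completion by round~3.

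For Part~2 exactly $t+1$ correct processes broadcast $\bot$. Each such $w$ is, at every correct process, a depth-one node of value $\bot$ whose correct children are all $\bot$, so $w$ is fixed to $\bot$ in $\RT$ --- consistently across correct processes --- within two rounds (by the \earlyitrule, or one round later by the \itrule backed by its $n-t$ correct voters). As soon as $t+1$ depth-one nodes carry $\bot$, the \srootrule puts $\RT(\bar\epsilon):=\bot$; I would show this happens by the end of round~3, so the \textbf{Early Output rule} outputs $\bot$ by round~3 and, as in Part~1, the \decayrule and \textbf{Stopping rule} yield completion by round~4. I must also check that no competing value captures the root first: because the correct processes are split, neither the \earlyitrule (round~1) nor the \strongitrule (round~2) can fix $\bar\epsilon$, and any non-$\bot$ value can reach the root only through the slower \gcrule, which cannot fire before the $\bot$-driven \srootrule --- this ordering is what forces the common output $\bot$.

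The step I expect to be the main obstacle is pinning down the exact round in which $\RT(\bar\epsilon)$ becomes fixed. Up to $t$ faulty processes can, while remaining undetected through round~2, report values about correct depth-one nodes that differ both from the node's true value and from the faulty process's own round-1 value, and such reports simultaneously threaten the depth-one \earlyitrule and the root-level \strongitrule. Making the round-2 bound of Part~1 and the round-3 bound of Part~2 tight therefore requires a careful case analysis showing that these early routes cannot all be blocked without the responsible faulty processes failing to reach the quorum that would also be needed to block the \gcrule; that is, a blocked early rule must always leave enough consistent correct support for the next rule to fire on time. This counting, carried out in lockstep with the Part~3 invariant that faulty misreports can never dissolve the $n-t$ or $t+1$ quorum formed by the correct processes, is the part I would write most carefully.
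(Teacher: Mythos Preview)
Your overall structure is sound, and Parts~2 and~3 are essentially aligned with the paper. The genuine gap is in Part~1, and it is exactly the ``main obstacle'' you flag at the end --- but it is an obstacle you have created for yourself by choosing the wrong resolve rule.

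You propose to reach $\RT(\bar\epsilon)=d$ by round~2 via the \gcrule (after first fixing the correct depth-one nodes in $\RT$), with \earlyitrule and \strongitrule as backups. All three of these routes can indeed be blocked by round~2: faulty children can break the unanimity that \earlyitrule and \strongitrule require, and \gcrule needs $\RT$-voters, hence depth-two nodes already in $\RT$, which in general does not happen until round~3 or later. Your worry that faulty misreports ``simultaneously threaten the depth-one \earlyitrule and the root-level \strongitrule'' is therefore correct, and the counting argument you sketch to show that \emph{some} route must survive does not obviously close.

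The paper sidesteps this entirely by applying the \itrule \emph{directly to the root} at the end of round~2. The \itrule is purely an $\IT$-rule: a voter $u$ of $(\bar\epsilon,\bar\epsilon,d)$ is determined by the depth-two values $\IT(vu)$, which are available after round~2. Since every correct $x,y$ satisfies $\IT_z(xy)=d$ at every correct $z$, each correct $v$ is confirmed (it has $n-t$ supporters, counting $v$ itself) and each correct $u$ is a voter (it supports $n-t$ confirmed echoers). That gives $n-t$ voters unconditionally, so \itrule fires on $\bar\epsilon$ regardless of what the faulty processes send --- no case analysis on blocked early rules is needed. You should also track explicitly the set $G_2$ of correct processes that do \emph{not} start the protocol: the receive rule's default $\IT(\sigma x):=\IT(\sigma)$ is what makes their silence look like the value~$d$ to every participating correct process, and this is what keeps the $n-t$ counts intact.
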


\ifdefined\LONG

\begin{proof}
To prove the first item, 
let us follow the protocol.  
Let $G_1$ be the set of correct processes that start the protocol and let $G_2=G\setminus G_1$, be the remaining correct processes that remain silent throughout the protocol.

Initially, for every $z\in G_1$, $\IT_z(\bar\epsilon)=d_z.$

In the 1st round every correct process $z\in G_1$ sends $\langle \bar\epsilon, z, d_z \rangle$ to every process.
By the end of the 1st round, every correct process applies the receive rule for  all the other  processes. 
Thus, every correct process $z\in G_1$ has $\IT_z(x) := d_x$, for every $x\in G,$ since it completes the missing values from correct processes in $G_2$ to be its own input value.
Thus, the receive rule assigns at each $z\in G_1$, $\IT_z(\sigma x):=\IT_z(\sigma)$ for a missing value by $x\in G_2$ for $\sigma$.
\earlyitrule,  may be applied by some correct processes at the end of the first round, and as a result will put $\RT(\bar\epsilon)=d$ and will output $d$.

Since $\IT_z(x)=d$ for all $x\in G,$
by the end of the 1st round, every $z\in G_1$ sees every $x\in G$ as supporter of $x$ for $(\bar\epsilon,\bar\epsilon,d)$. 

In the 2nd round, every correct process $z\in G_1$ that did not apply \earlyitrule by the end of the 1st round, sends $\langle x, z, d \rangle$ for every process $x\in G$ to
every process.  Again, if any correct process did not send a message, its  missing value for any $x\in G$ will be assigned the same value at all correct processes. Notice that some additional correct processes may not send in the second round.

By the end of the 2nd round, after applying the receive rule,  at each $z\in G_1$ that did not apply \earlyitrule by the end of the 1st round,  $\IT_z( xy)=d$ for every $x,y\in G.$ 
Thus, 
for every such $x$, every $y\in G\setminus\{x\}$, is a supporter of $x$ for $(\bar\epsilon,\bar\epsilon,d)$. 
As a result, 
for the set $|G|=n-t$, for each $u' \in G$, $u'$ is a supporter of $v$ for $(\bar\epsilon,\bar\epsilon,d)$, for every $v\in G$. 
Therefore, every $v\in G$ is confirmed on $(\bar\epsilon,\epsilon,d)$.
Therefore, 
every process $z\in G_1$, that did not apply \earlyitrule by the end of the 1st round,  sees every process $x\in G$ as a voter of $(\bar\epsilon,\bar\epsilon,d)$.  
This implies that it can apply the \itrule and will put $\RT(\bar\epsilon)=d,$  will output $d$, and will stop the protocol by the end of round $3.$

For the second claim: by the end of the 1st round,  every correct process $z$ has $\IT_z(x) := \bot$, for at least $t+1$ processes $x\in G.$ 
Let $A=\{x\mid x\in G\ \&\ d_x=\bot\}.$  If $\bot$ was the input value to all correct processes, we are done by the previous claim. Otherwise, no correct process will apply \earlyitrule to a value that is not $\bot.$

In the 2nd round, every correct process $z$ sends $\langle x, z, d_x \rangle$ for every process $x\in G$
 to every process.  By the end of the 2nd round, 
 after applying the receive rule,  at each $z\in G$, $\IT_z( xy)=d_x$, for every $x,y\in G.$
Thus, every process $z\in G$ sees each process  $v\in A$  both as supporter of $v$ for $(\bar\epsilon,v,\bot)$, 
and also as supporter of $u$ for $(\bar\epsilon,v,\bot)$
for every $u\in G$.

In the 3rd round, every correct process $z$ sends $\langle vx, z, \bot \rangle$ for every process $v\in A$ and $x\in G$
 to every process.
 If any correct process applied \earlyitrule in the previous round, then its missing value regarding other correct processes will be identical at all correct processes.
 By the end of the 3rd round, 
 after applying the receive rule,  at each $z\in G$, $\IT_z( vxy)=\bot$, for every $v\in A$, and $x,y\in G.$
 Thus,
 every process $z\in G$ sees every process $u\in G\setminus \{v\}$ as a supporter of $u'$ for $(\bar\epsilon,v,\bot)$, for every $u'\in G\setminus\{v\}$ and $v\in A$. For every such $v$ and $u'$, $v$ is also a supporter of $u'$ for $(\bar\epsilon,v,\bot)$. Thus, every such $u'$ is confirmed on $(\bar\epsilon,v,\bot)$, for every $v\in A$.  Moreover, by definition, every such $v$ is also confirmed on $(\bar\epsilon,v,\bot)$.
 
As a result, every process $z\in G$ sees every process $u\in G$ as a voter to $(\bar\epsilon,v,\bot)$, for every $v\in A$. Thus, $v\in\RT_z$ for every $v\in A$.
 Thus, it can apply the \srootrule and will put $\RT(\bar\epsilon)=\bot$ by the end of round $3$, and will stop by the end of round $4$.
 
To prove the 3rd claim, observe that the fault detection rules can be applied only in rounds $2$ or $3$. If a correct process did not send any message in round $2$, it is because of applying  \earlyitrule, and it's missing values will not cause any other correct process to be suspected as a faulty process, neither the correct process that did not send.
By the end of the 2nd round $\bar{\epsilon}$ will be in $\RT_q$ for every $q\in G$, and no one will apply any fault detection rules anymore.
 
If all correct processes  participated in round $2$, then Not-Voter will not apply to any correct process.  If any correct process did not send any message in round $3$, it's missing values will not harm any correct process or itself and all correct processes  will be in $\RT$ by the end of the round.  For similar reasons, the Not-Masking rule will not cause any correct process to be added to $\F.$
\end{proof}

\else 
\fi

The only case in which not all correct processes invoke  a $\Df$ protocol 
is when some of the background running monitors are being invoked by some of the correct processes, while others may have already stopped. This special case is guaranteed to be when the inputs of all participating correct processes is $\bot$, and consensus can be still be achieved. \lemmaref{lem:input-agree} implies the following:

\begin{corollary}\label{cor:partial}
For any $(t,t)$-adversary, and $n\ge 3t+1$, 
if every correct process that invokes the protocol start with input $\bot$, then $\bot$ is the output value at each participating correct process by the end of round 2, and each participating correct process completes the protocol by the end of round 3. Moreover, for $p,q\in G$, no $p$ will add $q$ to $\F_p$.
\end{corollary}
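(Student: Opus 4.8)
The plan is to derive this corollary directly from \lemmaref{lem:input-agree}, recognizing it as the special case $d=\bot$ of the lemma's first item. First I would set up the partition used in that lemma's proof: let $G_1\subseteq G$ be the set of correct processes that actually invoke this copy of $\Df$, and let $G_2=G\setminus G_1$ be the remaining correct processes, which have already halted and are therefore silent for the entire execution. By hypothesis every process in $G_1$ starts with input $\bot$, so the precondition of the first item of \lemmaref{lem:input-agree} --- that all correct processes that start the protocol hold a common input value --- is met with $d=\bot$.

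Next I would invoke that first item verbatim. It guarantees that $\bot$ is the output value of every process in $G_1$ by the end of round $2$ and that all of them complete the protocol by the end of round $3$, which is exactly the first conclusion of the corollary. For the fault-detection claim, I would appeal to the third item of \lemmaref{lem:input-agree}, which asserts that no correct $p$ adds a correct $q$ to $\F_p$ in either of the two scenarios of the lemma; since we are in the first scenario (common input $d=\bot$), this immediately yields the ``moreover'' clause.

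The one point that needs a careful check --- and the only place where the corollary is not a literal instance of the lemma --- is that the silent processes of $G_2$ here are correct processes that have \emph{halted} in an earlier invocation rather than processes that merely choose not to send. I would verify that this distinction is immaterial: the lemma's proof already treats $G_2$ as correct processes that remain silent throughout the protocol, and the receive rule clause $\IT(\sigma x):=\IT(\sigma)$ for $x\notin\F$ ensures that a halted correct process's missing messages are filled in consistently at every correct process and never trigger a fault-detection rule. Thus the halted monitors behave exactly like the $G_2$ processes in \lemmaref{lem:input-agree}, and no new argument is required; the main (and only) obstacle is confirming this correspondence, after which the cited items close the proof.
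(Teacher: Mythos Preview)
Your proposal is correct and matches the paper's own treatment: the paper does not give a separate proof but simply states that \lemmaref{lem:input-agree} implies the corollary, which is exactly your derivation via item~1 (with $d=\bot$) and item~3. Your extra care in checking that halted correct processes in $G_2$ behave identically to the silent $G_2$ processes in the lemma's proof is a welcome elaboration of a point the paper leaves implicit.
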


The gossip exchange among correct processes about identified faults ensures the following:
\begin{lemma}\label{lem:FA}
For a $(t,\phi)$-adversary and protocol $\Df$, $n\ge 3t+1$, assuming \propertyref{prop:init-fail}, for any $k$, $1\le k\le \phi+1$, 
by the end of round $k$, for every two correct processes $p,q$, $\FA_p\subseteq\F_q$ and $\FA_p[k-1]\subseteq\FA_p[k]$. 
\end{lemma}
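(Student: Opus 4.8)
The plan is to prove the two inclusions separately, both relying only on the gossip mechanics by which $\F$ and $\FA$ are updated: a process enters $\F$ (resp.\ $\FA$) of a correct process exactly when it has appeared in at least $t+1$ (resp.\ $2t+1$) of the $\F$-lists that process received in a single round, and these sets are only ever grown, never shrunk.

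The monotonicity claim $\FA_p[k-1]\subseteq\FA_p[k]$ is the easy half. Since the update rule only ever \emph{adds} a process to $\FA_p$ (once it crosses the $2t+1$ threshold) and never removes one, $\FA_p$ is non-decreasing across rounds, where $\FA_p[0]$ is the set inherited at invocation. For $k=1$ this is $\FA_p[0]\subseteq\FA_p[1]$, and for larger $k$ it follows by the same observation applied round by round (a one-line induction on $k$).

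For the inclusion $\FA_p\subseteq\F_q$ I would fix $w\in\FA_p$ at the end of round $k$ and split on how $w$ got there. If $w$ already belonged to $\FA_p$ at invocation, then \propertyref{prop:init-fail} gives $w\in\F_q$ at invocation, and since $\F_q$ never shrinks we still have $w\in\F_q$ at the end of round $k$. Otherwise $w$ was added to $\FA_p$ at the end of some round $j$ with $1\le j\le k$, which by the update rule means $w$ appeared in at least $2t+1$ of the $\F$-lists $p$ received in round $j$. Because the adversary controls at most $t$ processes, at least $(2t+1)-t=t+1$ of these lists were sent by correct processes. In a synchronous reliable network each such correct process broadcasts the \emph{same} $\F$-list to everyone, so $q$ also receives these $\ge t+1$ lists containing $w$ in round $j$; hence $q$ crosses its own $t+1$ threshold and adds $w$ to $\F_q$ by the end of round $j\le k$, and monotonicity of $\F_q$ finishes the argument.

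The only place with real content is the timing synchronization in the second case: one must check that $q$ adds $w$ to $\F_q$ no later than the round in which $p$ adds $w$ to $\FA_p$. This rests on the two facts highlighted above, that at least $t+1$ of the $2t+1$ reporting lists are from correct processes and that correct processes' $\F$-lists are delivered identically to all correct processes in the same round (in particular the $\F$/$\FA$ gossip is not suppressed by the masking of $w$, which only affects $w$'s own EIG value messages). Note that no property of $w$ being genuinely faulty is used — the inclusion is purely a consequence of the counting thresholds.
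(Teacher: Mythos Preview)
Your proof is correct and follows essentially the same approach as the paper's own proof: both rest on the monotonicity of the sets, \propertyref{prop:init-fail} for the base case, and the counting argument that a $2t+1$ threshold at $p$ forces at least $t+1$ correct senders, hence a $t+1$ threshold at every correct $q$ in the same round. Your write-up is in fact more explicit than the paper's, which compresses the whole argument into a few lines ending with ``it is easy to see that by the end of each round the claim holds.''
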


\ifdefined\LONG
\begin{proof}
	Prior to invocation the claim holds by  \propertyref{prop:init-fail}. In each round processes exchange their $\F$ sets.  If a process finds out that some process $b$ appears in the lists of at least $t+1$ processes it adds $b$ to $\F$, and if it appears in $2t+1$ lists it adds it to both $\F$ and $\FA.$ 
	The $\F$ and $\FA$ sets are never decreased, and $\FA$ is updated only through gossiping. Therefore, it is easy to see that by the end of each  round the claim holds. 
\end{proof}
\else 
\fi

A node may initially assign a value using one of the ``put'' rules and later it may color it to a different value.  In the arguments below we sometimes need to refer to the value that was put to a node rather than the value it might be colored to.  
Once a node has a value it is not assigned a value using any put rule any more. 
Thus, the value assigned using a put rule is an initial value that may be assigned to a node before it is colored, or  that node may never have a value put to it.  
To focus on these put operations, we will add, for proof purposes, that whenever a node $p$ uses a put rule for some $\sigma$, except \lastroundrule, it also puts $\sigma$ in  $\PT_p$ (The ``Put-Tree'') and as a result at that moment, $\PT_p(\sigma)=\RT_p(\sigma)$.  
We do not color nodes in $\PT_p$, thus for $\sigma$ that is colored, but was not assigned a value prior to that,  $\PT_p(\sigma)$ is undefined. We exclude \lastroundrule from $\PT$ on purpose.


The following is the core statement of the technical properties of the protocol. The only way we found to prove all these is via an induction argument that proves all properties together.  The theorem contains four items. 

The detection part proves that correct processes are never suspected as faulty.  The challenge is that the various rules instruct processes when to stop sending messages, and that might cause other correct processes to be suspected as faulty.

The validity part proves that if a correct process sends a value, it will reach the $\RT$ of every other correct process within two rounds. It also proves that if a correct process decides not to send a value (thus, closed a branch), the appropriate node will be in $\RT$ of every correct process. 
The third claim in the validity part is that if a process appears in $\FA$, then it appears in $\RT$ of every correct process within two rounds. 

The safety part intends to prove consistency in the $\RT$.  The challenge is that coloring may cause the trees of correct processes to defer.  
Therefore the careful statements looks at $\PT$, and which rule was used in order to assign the value to it. 
The $\bot$ value is a default value, therefore there is a special consideration of whether the value the process puts is $\bot$ or not.  
The end result is that if a node appears in $\PT$ of two correct processes, it carries the same value.

The liveness part shows that if a node appears in $\RT$ of a correct process, it will appear in $\RT$ of any other correct process within two rounds.

\begin{theorem}\label{thm:main}
For a $(t,\phi)$-adversary and protocol $\Df$, $n\ge 3t+1$, assuming \propertyref{prop:init-fail} and that all correct processes participate in the protocol, then for any $1\le k\le \phi+1:$

\beginsmall{enumerate}
\item\label{c:good-safe} {\bf No False Detection:} For $p,q\in G$, no $q$ will add $p$ to $\F_q$ in round $k$.

\item\label{c:good-sending} {\bf Validity:} 
\beginsmall{enumerate}
\item\label{c:good-sent}  For $\sigma\in\Sigma_{k-3}$ if $p \in G$,  sends $\langle \sigma, p, d_p \rangle$, then at the end of round $k$, at  every correct  process $x$, either $\RT_x(\sigma p)=d_p$ or $\exists \sigma' \sqsubset \sigma$ such that $\sigma' \in \RT_x$.  For $k=\phi+1$, the property holds also for any $\sigma\in\Sigma_{k-2}$ and for any $\sigma\in\Sigma_{k-1}$.

\item\label{c:masking}  If $z \in \FA$ in the beginning of round $k-2$, then by the end of round $k$, at  every correct process, either $\RT(\sigma z)=\bot$ or $\exists \sigma' \sqsubset \sigma$ such that $\sigma' \in \RT$.  For $k=\phi+1$, the property holds for $z \in \FA$ in the beginning of rounds $k-1$ or $k$.

\item\label{c:good-silent} For $\sigma\in\Sigma_{k-1}$, if $p \in G$, does not send $\langle \sigma, p, d \rangle$ for any $d\in D$, then at the end of round $k$,  at  every correct process $x$,  
 $\exists \sigma' \sqsubset \sigma $ such that $\sigma' \in \RT_x$.
\endsmall{enumerate}

\item\label{c:safety} {\bf Safety:} For $p,q\in G$, $x\in N$, $|\sigma x|\le\phi,$$\sigma x\in\PT_p[k]$, then
\beginsmall{enumerate}

\item\label{c:lock2d} if $p$ applies \gcrule to put  $\PT_p(\sigma x)=d,$ $d\neq \bot $, and $v$ is one of the  $\RT$-confirmed nodes on $(\sigma, x, d)$ in $\RT_p$ used in applying this rule in $\RT_p$, and in addition $\PT_q(\sigma x v)=\bot$, then $q$ applied \srule to put $\sigma x v$;

\item\label{c:not-bot} if  $|\sigma x|\ge1$ and  $\PT_p(\sigma x)=d,$ $d\neq \bot $, then,  by the end of round $k$,   $|V_q|\le t,$ where $V_q=\{u\mid \PT_q(\sigma xu)=\bot\}$;

\item\label{c:no-srule} if  $|\sigma x|\ge1$ and  $\PT_p(\sigma x)=\bot $ and it wasn't put using \srule, then, by the end of round $k$,   $|V_q|\le t,$ where $V_q=\{u\mid \PT_q(\sigma xu)\neq\bot\}$;

\item\label{c:full-safety} 
if  $\sigma x\in\PT_q[k],$ then $\PT_p(\sigma x)=\PT_q(\sigma x)$.
\endsmall{enumerate}

\item\label{c:two-rounds}{\bf Liveness:} For $p,q\in G$,  if $\sigma \in\RT_p[k-2]$ then $\sigma \in\RT_q[k]$.  For $k=\phi+1$,  if $\sigma \in\RT_p$ then $\sigma \in\RT_q$.

\endsmall{enumerate}

\end{theorem}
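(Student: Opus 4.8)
The plan is to prove all four parts at once by a single induction on the round number $k$, because the parts are genuinely interdependent and none can be pushed through in isolation. Validity (\ref{c:good-sent}) needs detection (\ref{c:good-safe}), so that a correct sender is never masked, together with propagation of votes, which is exactly liveness (\ref{c:two-rounds}); safety (\ref{c:full-safety}) and liveness both require knowing precisely which resolve rule produced each value, and that bookkeeping uses validity in turn. I would therefore assume all of parts \ref{c:good-safe}--\ref{c:two-rounds} for every round smaller than $k$ and re-establish them for round $k$, using \propertyref{prop:init-fail} and \lemmaref{lem:FA} freely for the consistency of the $\F$/$\FA$ sets. The small-$k$ base cases are mostly vacuous (clause \ref{c:good-sent} quantifies over $\Sigma_{k-3}$, the safety clauses demand $|\sigma x|\ge1$), and the remaining low-round facts are handled exactly as in \lemmaref{lem:input-agree} while tracking the four invariants.

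Within the inductive step I would dispatch detection, validity and liveness first. The workhorse is a one-round fact: a value sent by a correct process lands in every correct process's $\IT$ in the same round unless a prefix is already resolved (in which case the branch is closed and the inductive liveness covers it). From this, Not-Voter cannot fire against a correct $p$ because its $n-t-1$ correct echoers agree, and Not-IT-to-RT cannot fire because $p$'s value becomes confirmed and then accrues $n-t$ voters within two rounds; Not-Masking is the subtle one, and there I would use the inductive validity and safety to argue that the ``leaning toward $d$ but told $d'$'' configuration never arises for a correct sender. Validity then follows by letting the $n-t$ correct processes all become confirmed and then voters, firing the \itrule at every correct process, with \earlyitrule and \strongitrule handled by pushing the already-resolved value through the inductive liveness; clause \ref{c:masking} comes from \lemmaref{lem:FA} plus the same propagation, and \ref{c:good-silent} from the receive rule filling an unsent value with the uniform $\IT(\sigma)$. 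Liveness is by tracing how $\sigma$ entered $\RT_p[k-2]$: a put via one of the rules hands $p$ a quorum of voters of which enough are correct, so that by the inductive liveness they become $\RT$-voters at $q$ and $q$ fires the \gcrule within two rounds, while a colored node is inherited from its ancestor.

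Safety is the heart of the argument and where I expect most of the work. I would prove clauses \ref{c:lock2d}--\ref{c:full-safety} together. The mechanism to control is coloring: when $p$ puts $\sigma x:=d\neq\bot$ and $q$ disagrees on a child, the disagreement must be pinned down. Clause \ref{c:lock2d} says the only way a correct $q$ can put a confirming child $\sigma x v$ to $\bot$ is via the \srule; since that rule touches at most one child, clause \ref{c:not-bot} caps the $\bot$-children of a non-$\bot$ put at $t$, and dually \ref{c:no-srule} caps the non-$\bot$ children of a genuine $\bot$-put. These caps are precisely what reconciles the $n-t-1$ threshold of \rgcrule with the $n-t$ threshold of \gcrule: exactly one flipped child is tolerated, no more. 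Clause \ref{c:full-safety} is then assembled by a case analysis on the pair of put-rules used by $p$ and $q$ on $\sigma x$; in each case two supporting quorums of size at least $n-t-1$ intersect in more than $t$ processes (as $n>3t$), forcing a common correct witness and hence a common value.

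The single hardest point, I expect, is \ref{c:lock2d}: ruling out every way that a correct $q$ could drive a child to $\bot$ other than the \srule, while $p$ is simultaneously relying on that very child being $\RT$-confirmed on $d$. Establishing this requires chasing the confirmation quorums two levels down the EIG tree and invoking the inductive safety at the grandchild level, which is exactly the multi-layer look-ahead the paper advertises as its new idea; getting the thresholds to line up so that no stray rule can flip a second child is the delicate interplay that the whole rule set is engineered to guarantee.
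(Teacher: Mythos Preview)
Your high-level plan---a single simultaneous induction on $k$ over all four items---matches the paper, and your account of the interdependencies (detection needs validity, liveness feeds safety, etc.) is accurate. But there is a structural gap in how you propose to handle Safety and Liveness within a fixed round $k$.

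The paper's proof of Item~\ref{c:safety} is not a flat case analysis on the pair of put-rules; it is a \emph{backward induction on the depth} $\ell=|\sigma x|$, from $\ell=k$ down to $\ell=1$, nested inside the induction on $k$. This is essential: to establish \ref{c:lock2d} for $\sigma x$ you must invoke \ref{c:full-safety} and \ref{c:not-bot} at the children (level $\ell+1$) and grandchildren (level $\ell+2$), and those nodes may well have been put in the \emph{same} round $k$, so the outer induction on $k$ gives you nothing. Your phrase ``invoking the inductive safety at the grandchild level'' is exactly the right move, but only if ``inductive'' refers to a depth induction you have not set up. The same inner backward induction on $\ell$ is used for Liveness: when $p$ applied, say, \gcrule to $\sigma x$ in round $r\le k-2$, the $\RT$-voters live at depth $\ell+3$, and you need liveness for those deeper nodes first.

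Your one-line plan for \ref{c:full-safety}---``two supporting quorums of size at least $n-t-1$ intersect in more than $t$ processes, forcing a common correct witness''---is also too coarse. The paper does not argue by direct quorum intersection; it routes through \ref{c:not-bot} and \ref{c:no-srule} (which bound the number of children on which $q$ can disagree) and then shows rule-by-rule that $q$'s put must land on the same value. Several rule pairs (e.g.\ \itrule versus \srootrule at the root, or anything versus \srule) do not reduce to a symmetric intersection at all, and the root $\bar\epsilon$ and the last two rounds $k\in\{\phi,\phi+1\}$ require separate treatment because the set of applicable rules changes (no \srule at depth~$1$, \lastroundrule only at depth $\phi+1$). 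Finally, the case $\phi=1$ is handled entirely separately in the paper. None of these are insurmountable, but your sketch as written would not close.
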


\ifdefined\LONG
\begin{proof} [Proof of \theoremref{thm:main}]
	We  prove the theorem by induction on $k$. 
	We first prove the theorem assuming $\phi>1$ and will conclude by proving the theorem for the case $\phi=1.$
	
	As the proof is quite complex, we split it into three ranges, $k=1$,  $k \le \phi-1$, and $k \le \phi+1$. We will prove the following claims, where each handles the appropriate range:
	
	\begin{claim}\label{c:range1}
\theoremref{thm:main} holds for $k=1$.
	\end{claim}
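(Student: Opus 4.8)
The plan is to prove \theoremref{thm:main} for $k=1$ by first pinning down how little information is available after a single round and then checking each of the four parts against it; almost everything turns out vacuous, and the only part with real content is \textbf{No False Detection}. The preliminary observation I would establish is that at the end of round $1$ a correct process $z$ has recorded $\IT_z$-values only at the root $\bar\epsilon$ (its own input $d_z$) and at the depth-one nodes $\epsilon x\in\Sigma_1$ (each $x$'s round-$1$ message, or a masked/default value); there is no $\IT_z$-information at depth $\ge 2$. From this I would rule out every put rule but one: \itrule, \gcrule, \rgcrule, \srule and \strongitrule all need either $\IT$-support at depth $\ge 2$ (supporter chains $\IT(\sigma w v u)$, confirmed/voter structure) or pre-existing $\RT$-structure strictly below $\sigma$, none of which exists yet; \srootrule needs $t+1$ children already in $\RT$ with value $\bot$, which cannot have been produced; and \decayrule triggers only on $\RT[0]=\emptyset$. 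Hence the only rule that can put anything by the end of round $1$ is \earlyitrule at the root ($r=1$, $\sigma=\epsilon$), which puts $\RT_z(\bar\epsilon):=\IT_z(\bar\epsilon)$ when the children agree. Consequently the only node that can enter any $\PT_z[1]$ is $\bar\epsilon$.

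For \textbf{No False Detection} I would show that for correct $p,q$ no mechanism can add $p$ to $\F_q$ during round $1$. The gossip mechanism adds $p$ only when $p$ appears in $\ge t+1$ received $\F$-lists; by \propertyref{prop:init-fail} no correct process lists a correct $p$ at invocation (and no correct process has applied a detection rule yet this round), so $p$ occurs only in the $\le t$ lists of faulty processes, strictly fewer than $t+1$. The rule-based mechanisms all have empty scope at $r=1$: \textbf{Not Voter} ranges over $\sigma w\in\Sigma_{r-1}=\Sigma_0$, whose sole element $\epsilon$ is not of the form $\sigma w$ with an appended process $w$; \textbf{Not IT-to-RT} ranges over $\Sigma_{r-2}=\Sigma_{-1}=\emptyset$; and \textbf{Not Masking} ranges over $\Sigma_{r-3}=\Sigma_{-2}=\emptyset$, so even its masking action never fires. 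Since no mechanism applies, no correct process enters any correct process's $\F$ in round $1$.

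The remaining three parts are vacuous, and I would spell out each. For \textbf{Validity}: clause \ref{c:good-sent} quantifies over $\sigma\in\Sigma_{k-3}=\Sigma_{-2}=\emptyset$; clause \ref{c:masking} refers to $\FA$-membership "at the beginning of round $k-2$", a round that does not exist; and clause \ref{c:good-silent} quantifies over $\sigma\in\Sigma_{k-1}=\Sigma_0=\{\epsilon\}$ but its hypothesis fails, since a correct participating $p$ does send $\langle\epsilon,p,d_p\rangle$ in round $1$ (the branch $\epsilon$ is still open during the send phase, as \earlyitrule can close it only afterward). The special "$k=\phi+1$" sub-clauses are irrelevant because $\phi\ge 1$ forces $k=1\neq\phi+1$. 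For \textbf{Safety}: every clause is indexed by a node $\sigma x$ carrying a last element $x\in N$ (and clauses \ref{c:not-bot}, \ref{c:no-srule} further require $|\sigma x|\ge 1$), whereas by the depth observation the only node put at round $1$ is the root $\bar\epsilon$, which has no such last element; thus the hypothesis $\sigma x\in\PT_p[1]$ is never met. For \textbf{Liveness}, the hypothesis $\sigma\in\RT_p[k-2]=\RT_p[-1]$ concerns an empty set.

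I expect the only point requiring care — more a bookkeeping subtlety than an obstacle — to be \earlyitrule at the root: although it genuinely puts $\bar\epsilon$ into $\PT$, the root lies outside the reach of every Safety clause (each names an appended $x\in N$), so no consistency obligation between $\PT_p(\bar\epsilon)$ and $\PT_q(\bar\epsilon)$ is incurred at $k=1$; agreement on the root is instead discharged in the later ranges of the induction. The one genuinely positive statement to confirm is that the index sets $\Sigma_{-2},\Sigma_{-1}$ and the appended-process reading of $\Sigma_0$ are empty for the detection rules, which is exactly what makes No False Detection hold automatically at the base case.
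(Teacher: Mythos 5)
Your proposal matches the paper's proof in structure and is in places more careful than the paper itself: Detection via gossip only plus \propertyref{prop:init-fail}, the emptiness of the index sets for \textbf{Not Voter}, \textbf{Not IT-to-RT} and \textbf{Not Masking}, the observation that only \earlyitrule can fire in round $1$ and only at $\bar\epsilon$, the vacuity of Liveness, and the correct reason why Statement~\ref{c:good-silent} is vacuous (a participating correct process does send in round $1$, since branches close only at round ends) --- the paper is looser on this last point, and it additionally proves a non-vacuous consequence of a round-$1$ \earlyitrule via \lemmaref{lem:input-agree}, which your literal reading of Statement~\ref{c:good-sent} ($\Sigma_{-2}=\emptyset$) legitimately sidesteps.

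The one genuine gap is your treatment of Safety. You argue that $\bar\epsilon$ lies outside the scope of Item~\ref{c:safety} because each clause names a node $\sigma x$ with an appended $x\in N$, and you defer root agreement to ``later ranges of the induction.'' The paper does not read the statement that way: its own proof of \claimref{c:range1} explicitly establishes Statement~\ref{c:full-safety} for the root (``the only case in which $\sigma x\in\PT_p[1]$ is when $p$ applies \earlyitrule\ldots''), and the proof of \claimref{c:range2} contains the case ``$|\sigma x|=0$, thus $\sigma x=\bar\epsilon$,'' so the root is intended to be in scope. More importantly, your deferral is circular: in the proof of \claimref{c:range2}, the case where both $p$ and $q$ put a value to a node \emph{before} round $k$ is discharged by induction on smaller $k$, so the scenario in which both $p$ and $q$ put $\bar\epsilon$ at the end of round $1$ bottoms out precisely at the $k=1$ base case you skipped; the later ranges never re-prove it. The missing argument is short and you should supply it: if $p$ applies \earlyitrule in round $1$, its witness set $U_p$ has size $n-1$ and $\F_p$ contains no correct process (by \propertyref{prop:init-fail} and your Detection item), so $U_p\setminus\F_p$ contains at least $n-t-1\ge 2t$ correct processes whose inputs coincide; doing the same for $q$ and using that the two correct witness sets intersect in a correct process (and that each of $p,q$, being correct, either lies in the other's witness set or sees the other's value there) forces $\PT_p(\bar\epsilon)=\PT_q(\bar\epsilon)$, even when a witness set omits the putting process itself. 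With that paragraph added, your base case is complete and otherwise follows the paper's route.
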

	
	The general case. This is where most of the technical challenge lies: 
	
	\begin{claim}\label{c:range2}
\theoremref{thm:main} holds for $1< k \le \phi-1$.
	\end{claim}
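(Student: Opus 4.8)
The plan is to carry out the strong induction promised at the start of the theorem's proof: I assume all four parts of \theoremref{thm:main} hold for every round $k' < k$ and establish them for round $k$. The four parts reference one another only across a two-round gap—Liveness and Validity speak of rounds $k-2$ and $k$—so the inductive hypothesis supplies exactly the statements about the past that each part needs. Within round $k$ I would establish the four parts in the order Detection, Validity, Liveness, Safety, since Detection feeds Validity (a correct process is never masked, so its honest echoes survive in $\IT$), Validity and Liveness together certify when a branch may legitimately be closed, and Safety is the most delicate and should be argued last.

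For No False Detection I would examine each of the three detection rules (Not Voter, Not IT-to-RT, Not Masking). The crux is that every rule is guarded by a clause of the form ``$\not\exists\,\sigma'\sqsubseteq\sigma$ with $\sigma'\in\RT$''. A correct process $p$ stops sending on a branch only by applying the branch-closing rules (\decayrule, \earlyitrule, \strongitrule), and each of those fires only after the relevant prefix has been placed in $\RT$. So whenever $p$'s silence could trigger its own detection, the guarding clause is already satisfied at the would-be detector by \theoremref{thm:main}(\ref{c:two-rounds}) for round $k-1$, and no correct process is added to $\F$. When $p$ does send, the receive rule together with the masking discipline guarantee that $p$ accumulates the agreeing supporters that each rule's threshold demands, so the detection predicate is false.

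For Validity I would push a value sent by a correct $p$ up through the supporter/confirmed/voter hierarchy: by Detection and the receive rule, all correct children of $\sigma p$ record $d_p$, every correct grandchild supports them, every correct child becomes confirmed, every correct node becomes a voter, and the \itrule fires at each correct process within two rounds—unless a prefix was already resolved, which is exactly the permitted disjunct of (\ref{c:good-sent}). The masking clause (\ref{c:masking}) is the same argument run with the constant value $\bot$, once $z\in\FA$ forces $\IT(\sigma z)=\bot$ everywhere via \lemmaref{lem:FA}, and the silence clause (\ref{c:good-silent}) is immediate from Liveness applied to the prefix that caused $p$ to close the branch. Liveness itself is the symmetric statement: a node put by $p$ via \itrule (resp. \gcrule) has $n-t$ voters (resp. $t+1$ $\RT$-voters), each backed by $n-t$ confirmed echoers; the $\ge t+1$ correct witnesses inside each quorum force the corresponding $\RT$-confirmed and $\RT$-voter predicates at every correct $q$, so $q$ applies \gcrule within two rounds, and coloring handles the descendants. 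The \rgcrule, \srule and \srootrule cases are run analogously with their own thresholds.

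I expect the Safety part to be the main obstacle, and within it the combination of (\ref{c:lock2d})–(\ref{c:full-safety}) under coloring. The hard scenario is exactly the one flagged in the overview: a grandchild fixed to $d$ and later colored to $d'$, so that two correct processes apply different resolve rules to the parent. My plan is a case analysis on the pair of rules used to put $\PT_p(\sigma x)$ and $\PT_q(\sigma x)$, controlling disagreement on the children via (\ref{c:not-bot}) and (\ref{c:no-srule}): these bound by $t$ the number of children on which two correct processes can disagree in $\PT$, after which quorum intersection among the $n-t$, $n-t-1$ and $t+1$ thresholds forces a common value. The delicate point is that the \srule colors at most one child to $\bot$; I would lean on that ``at most one child'' guarantee to argue that the relaxed threshold $n-t-1$ of the \rgcrule absorbs precisely this single possible discrepancy, so that a \gcrule put at $p$ and a \rgcrule put at $q$ still agree—this is what (\ref{c:lock2d}) is engineered to track. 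Establishing (\ref{c:full-safety}) then reduces to checking that no two put-rules can fix $\sigma x$ to genuinely different non-$\bot$ values, and that a $\bot$ put other than via \srule cannot coexist with a non-$\bot$ put; both follow from the counting bounds in (\ref{c:not-bot})/(\ref{c:no-srule}) and the quorum intersections, with the $\bot$-specific rules (\srule, \srootrule) requiring separate but parallel treatment.
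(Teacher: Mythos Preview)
Your outer induction on $k$ matches the paper, but there is a structural gap: both Safety and Liveness require a \emph{nested backward induction on the node depth} $\ell=|\sigma x|$, from $\ell=k$ down to $\ell=1$ (and then $\ell=0$ for the root). You propose to control disagreement on the children of $\sigma x$ via (\ref{c:not-bot}) and (\ref{c:no-srule}), but those are precisely the Safety statements you are proving at round $k$, applied one level deeper; without the inner induction on $\ell$ this is circular. The paper makes the backward induction explicit: the base case $\ell=k$ is vacuous (no node of depth $k$ can be put by round $k<\phi$), and the step from $\ell+1$ to $\ell$ is where the bounds on children become legitimately available. The same inner induction is needed for Liveness.

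This also forces the order within round $k$: Liveness invokes Safety at the \emph{same} round $k$---specifically (\ref{c:full-safety}) and (\ref{c:lock2d})---to handle the case where $p$ applied \gcrule but $q$ applied \srule to one child in $V_p$, so that $q$ must fall back to \rgcrule rather than \gcrule. Hence Safety must be established before Liveness, not after as you propose. A secondary gap is your treatment of (\ref{c:good-silent}) as ``immediate from Liveness'': this works for \decayrule (the prefix lies in $\RT_p[k-2]$, so Liveness delivers $\RT_q[k]$), but when $p$ closes via \earlyitrule\ or \strongitrule\ the prefix is only in $\RT_p[k-1]$, and Liveness yields $\RT_q[k+1]$---one round too late. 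The paper handles those two closing rules by a direct argument on $\IT$: for \earlyitrule\ every correct $q$ sees enough agreeing children to apply \itrule\ at the end of round $k$; for \strongitrule\ $q$ sees $n-t-1$ correct children fixed to the same value and applies \rgcrule.
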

	
	The final two rounds, when the resolve rules are slightly different:
	
	\begin{claim}\label{c:range3}
\theoremref{thm:main} holds for $\phi-1< k \le \phi+1$.
	\end{claim}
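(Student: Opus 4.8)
The plan is to carry the combined induction through its last two steps, $k=\phi$ and $k=\phi+1$, taking \claimref{c:range1} and \claimref{c:range2} (all four items for every $k\le\phi-1$) as the hypothesis. What distinguishes this window is that the branch-closing/early-resolve rules \decayrule, \earlyitrule and \strongitrule fire only through round $\phi$, so at the final round $\phi+1$ the only resolution available to the deepest nodes is \lastroundrule together with the purely $\RT$-based rules \gcrule, \rgcrule, \srule, \srootrule. I would therefore anchor the whole argument at the leaves: at the end of round $\phi+1$ every leaf $\sigma\in\Sigma_{\phi+1}$ on an open branch is fixed by \lastroundrule at \emph{every} correct process (its $\IT$-value is always defined), while every leaf on a closed branch has an ancestor already in $\RT$ and is colored from it; hence the leaf layer of $\RT$ has common membership across all correct processes. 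Everything else in the window is obtained by propagating this anchor upward.

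For \textbf{Liveness} at $k=\phi+1$ the two-round conclusion is strengthened to ``$\sigma\in\RT_p\Rightarrow\sigma\in\RT_q$'' within the same round, and this is the heart of the claim. Nodes first put by round $\phi-1$ are handled by the unconditional two-round clause, so the work is to catch $q$ up on nodes first put in rounds $\phi$ and $\phi+1$. Since in each round the rules are applied repeatedly to a fixpoint, I would order these late puts by the iteration in which they first fire and prove, by induction along that order, that every put at $p$ is matched by the same node entering $\RT_q$ in round $\phi+1$. The base is the \lastroundrule leaf layer, common to all correct processes as above; at each later step the rule $p$ uses consumes only $\RT$-entries resolved earlier, which by the induction and by the Safety item carried in the joint induction are already present at $q$ with matching values, so $q$ can fire the corresponding rule. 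The delicate point --- and the main obstacle --- is the \gcrule/\rgcrule interface: \srule may color one child of $\sigma$ to $\bot$ at $q$ but not at $p$, so $p$'s \gcrule (threshold $n-t$ $\RT$-voters) can lose a voter at $q$; the $n-t-1$ threshold of \rgcrule, built precisely to absorb a single flipped child, then still lets $q$ put $\sigma$. Checking that \srule can disturb at most one child and that this lone discrepancy never cascades up the tree is where the care concentrates.

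For \textbf{Validity} the standard clause (\ref{c:good-sent}) at $k=\phi$ goes through the ordinary \itrule argument, since the support of a correct sender on $\sigma\in\Sigma_{\phi-3}$ reaches depth $\phi$ exactly in round $\phi$. The two new clauses at $k=\phi+1$ I would obtain from the anchor: a correct $p$'s message on $\sigma\in\Sigma_{\phi}$ is fixed directly at the leaf $\sigma p$ by \lastroundrule (using No False Detection, so $p\notin\F_x$ and $\IT_x(\sigma p)=d_p$); and for $\sigma\in\Sigma_{\phi-1}$ the only non-trivial case is when the prefix $\sigma$ is entirely faulty --- otherwise some correct process lies on the path and forces an ancestor of $\sigma p$ into $\RT$, so the second disjunct of the conclusion holds --- in which case the $|G|-1\ge n-t-1$ correct children of $\sigma p$ all report $d_p$ (even the silent ones, via the receive rule), their leaves are fixed by \lastroundrule, and \rgcrule then puts $\RT(\sigma p)=d_p$ in round $\phi+1$. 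The masking clause (\ref{c:masking}) is the same argument with $d=\bot$, now also invoking \srule/\srootrule, and the extended $\FA$ window follows from \lemmaref{lem:FA}.

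Finally, \textbf{No False Detection} and \textbf{Safety} are continuations rather than new phenomena. A correct $p$ is silent about $\sigma$ only because it closed that branch, i.e.\ some $\sigma'\sqsubseteq\sigma$ is in $\RT_p$ by round $\phi$; by the liveness just established the guard ``$\not\exists\,\sigma'\sqsubseteq\sigma w$ with $\sigma'\in\RT$'' in the Not Voter, Not IT-to-RT and Not Masking rules holds at every other correct process, so $p$ is never added to $\F$. For Safety I would simply observe that \lastroundrule is excluded from $\PT$, so the leaf fixings that may disagree in value never enter $\PT$; the only rules that write $\PT$ in this window are the same ones analysed for $k\le\phi-1$, so invariants (\ref{c:lock2d})--(\ref{c:full-safety}) are preserved by the identical case analysis. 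I would close by noting that a frontier necessarily exists at the end of round $\phi+1$ (each leaf is resolved or colored), so every correct process outputs, and by the safety and liveness above the outputs agree --- all emit the common $\RT(\bar\epsilon)$ via the Early Output rule, or all emit $\bot$ via the Final Output rule.
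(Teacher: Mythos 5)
Your outline matches the paper's proof in broad strokes: anchoring round $\phi+1$ at the \lastroundrule leaf layer, handling Validity for $\sigma\in\Sigma_\phi$ directly by \lastroundrule and for $\sigma\in\Sigma_{\phi-1}$ by the $n-t-1$ correct children plus \rgcrule, and treating Detection as a continuation are all essentially what the paper does. But there is a genuine gap in how you dispose of Safety (and it leaks into your Liveness step). You argue that since \lastroundrule is excluded from $\PT$, "the only rules that write $\PT$ in this window are the same ones analysed for $k\le\phi-1$, so invariants (\ref{c:lock2d})--(\ref{c:full-safety}) are preserved by the identical case analysis." This misses the one genuinely new phenomenon of the last round: two correct processes can \lastroundrule-fix the \emph{same} leaf $\sigma x y\in\Sigma_{\phi+1}$ to \emph{different values}, because an undetected actively faulty $y$ can equivocate in round $\phi+1$ and there is no subsequent round to reconcile. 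You attribute the possible single-child discrepancy entirely to \srule; here it arises without \srule. These disagreeing leaf values feed the \emph{premises} of the \rgcrule and \srule applications that do write level-$\phi$ nodes into $\PT$, and the \claimref{c:range2} analysis you want to reuse relied on Statement~\ref{c:full-safety} (put-value agreement) for the children --- which fails at the leaf layer. So "identical case analysis" is not available; a new argument is required.

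The paper closes exactly this hole with a $(t,\phi)$-adversary counting argument that your proposal never invokes. By \propertyref{prop:init-fail} the $t-\phi$ silent faults are in everyone's $\FA$ and are uniformly masked to $\bot$, so below a level-$\phi$ node at most \emph{one} leaf can carry conflicting values across correct processes; since level $\phi+1$ has $n-\phi-1$ children, a \rgcrule put based on $n-t-1$ agreeing children leaves at most $n-\phi-1-(n-t-2)=t-\phi+1<t+1$ deviating children, which is precisely what re-establishes Statements~\ref{c:not-bot}, \ref{c:no-srule} and hence \ref{c:full-safety} at level $\phi$. The same accounting rescues Liveness, where your fixpoint-ordered induction assumes consumed $\RT$-entries are "present at $q$ with matching values": the paper instead case-splits --- if no correct process appears in the length-$\phi$ path $\sigma x$, the entire budget of $\phi$ actively faulty processes is exhausted on the path, so all children are correct or silent and the leaf values genuinely agree at all correct processes; if $x\in G$, the $n-t-1$ correct children agree and every correct process applies \rgcrule. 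Without this adversary-structure argument, your induction step fails precisely at \rgcrule at level $\phi$, where a single equivocating leaf could otherwise drop $q$ below the $n-t-1$ threshold that $p$ met.
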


	\begin{proof}[Proof of \claimref{c:range1}]
We will prove each of the four items separately.

\begin{proof}[Proof of Item~\ref{c:good-safe} for \claimref{c:range1}](Detection)
By the end of round 1, a process may add another to $\F$ only through gossiping.  \propertyref{prop:init-fail} implies that no correct process will suspect any other correct process.  The rest of the fault detection rules are not applicable in the first round. 
\end{proof}

\begin{proof}[Proof of Item~\ref{c:good-sending} for \claimref{c:range1}](Validity)
For $k=1$,   
Statement~\ref{c:good-silent} and Statement~\ref{c:masking} vacuously hold, since there was no such round. 
For proving Statement~\ref{c:good-sent} observe that in the first round only \earlyitrule is applicable. Assume that a correct process $p \in G$ applies  \earlyitrule by the end of round $1$, thus
node $\sigma$ is $\bar\epsilon$, since $\sigma=\epsilon$.
This implies that for every $x\in N\setminus\F_p$, $\IT_p(x)=d$.  Since $G\cap\F_p=\emptyset,$ we conclude that for every correct process $q$, $d_q=d$, and by the end of the 2nd round, by \lemmaref{lem:input-agree}, all correct processes  will have $\RT(\bar\epsilon)=d=d_p$ and we are done.	
\end{proof}

\begin{proof}[Proof of Item~\ref{c:safety} for \claimref{c:range1}](Safety)
Only Statement ~\ref{c:full-safety} is applicable for $k=1$.  Notice that the only case in which $\sigma x\in\PT_p[1]$ is when $p$ applies \earlyitrule  in the end of the 1st round and  as a result puts some value $d$ to the root node in its $\RT_p$. In such a case, it is clear that if $\sigma\in\PT_q[1]$ then $\PT_p(\sigma)=\PT_q(\sigma)$.
\end{proof}

\begin{proof}[Proof of Item~\ref{c:two-rounds} for \claimref{c:range1}](Liveness)
This item vacuously holds.
\end{proof}

This completes the proof of  \claimref{c:range1}.
	\end{proof}
	
	Now we move to proving the main part of the theorem.
	\begin{proof}[Proof of \claimref{c:range2}]
	The proof is by induction. The base case is  \claimref{c:range1}.
Assume correctness for any $k''$, $1\le k''<k,$  and we will prove the claim for $k$, $k\le \phi-1.$ 

\begin{proof}[Proof of Item~\ref{c:good-safe} for \claimref{c:range2}](Detection)
The fault detection takes place in every round before any resolve rule is applied.  
By induction we know that a  correct process will not add another correct process to $\F$ using gossiping from other processes.   The three rules to add a process to $\F$ are based on the messages accumulated in $\IT$. 
The induction on $k-1$ allows us to determine what messages correct processes will be sending in round $k$. 

Let round $k$ be the first round at which a process $p$ is not sending messages related to the branch of $\sigma.$  
There are three cases in which a correct process, $p$, stops sending, by using \decayrule, \earlyitrule and \strongitrule. If $p$ closes the branch of $\sigma$ at the end of  round $k-1$ and is not sending messages related to it in round $k$, the receive rule instructs correct processes what values to add to their $\IT$.

Let's consider the three fault detection rules. Not-Voter is not applicable, since in the previous round $p$ sent its messages appropriately.  Since $p$ is correct every correct process that sends messages echo's the message it sent, and whenever a correct process applies the receiving rule to assign messages to processes that did not send messages in the current round it adds the message $p$ originally sent. For the similar reason Not-IT-to-RT is not applicable.  

The last fault detection rule is Not-Masking.  Assume that a correct process $q$ is expecting process $p$ to mask away some process $w.$  
The Not-Masking rule allows $q$ to mask the non-sending by $\bot$, but $q$ will not add $p$ to  $\F$ if by the end of the round $q$ will have $ \exists \sigma' \sqsubseteq \sigma'' w$ such that $\sigma' \in \RT$.  Thus, $p$ will not be in $\F$ during the processing of all the rules below. 
Statement~\ref{c:good-silent} that is proved next guarantees that also by the end of the round a correct process $p$ will not be added to $\F.$
\end{proof}

\begin{proof}[Proof of Item~\ref{c:good-sending} for \claimref{c:range2}](Validity)

For Statement~\ref{c:good-sent}, the case of $k=\phi+1$ is excluded for now. Assume that $p$ sends $\langle \sigma, p, d_p \rangle$ in round $k-2$.  
If any correct process $x$ is not sending a message $\langle \sigma, x, d_x \rangle$, then by the protocol it should have set either
$\sigma'\in\PT_x[k-4]$ (if used \decayrule)  or $\sigma'\in\PT_x[k-3]$ (if used \earlyitrule or \strongitrule)  for some  $\sigma' \sqsubset \sigma$, and we are done by induction (Statement~\ref{c:full-safety}).  
If there is a correct node $x\in\sigma$, then the claim holds by induction (Statement~\ref{c:good-sent}). So we are left with the case  that no correct node appears in $\sigma$ and all correct processes  are participating in round $k-2$.  
By the end of round $k-2$ every correct process $x$ will apply the receiving rule and will have $\IT_x(\sigma p)=d_p.$

If any correct, $x$ ($x\not= p$), doesn't send a message $\langle \sigma p, x, d_p \rangle$ then we are done by induction, using similar argument as above.   Therefore, by the end of round $k-1$, every correct process will apply the receiving rule and will have $n-t-1$ children nodes for $p$ in its $\IT$.  
Thus,  by the end of round $k-1$, for every $x\in G$, at every $y\in G$, $x$ is a supporter of $x$ for $(\sigma, p, d_p).$ And for every $x,y\in G$, where $x\not= y \not= v$, $\IT_x(\sigma p y)=d_p$.  
In round $k$ some correct process (including $p$)  may not send messages and all the rest will send identical value $d_p$ messages.  The above implies that the receiving rule will  assign
to each correct process that does not send messages the identical value $d$ at every correct process that still process messages for this branch. 

As we argued before, since $p$ itself is confirmed on each node it echoes, every correct process will be a voter and therefore, 
by the end of round $k$, at every correct process $x\in G$, that still process messages for this branch either $\RT_x(\sigma p)=d_p$, or $\exists \sigma' \sqsubset \sigma p$ such that $\sigma' \in \RT_x$.

The proof of Statement~\ref{c:masking} is identical to the above, as if it is the case of a correct process sending $\bot$.

Proving Statement~\ref{c:good-silent}:
Let $\sigma\in\Sigma_{k-1}$ and $p\in G$. If $p$ does not send any message $\langle \sigma, p, d \rangle$ for any $d\in D$ in round $k$, then either the branch was closed earlier and we are done by induction, or this is the first round any correct process doesn't send a message on this branch. 
Thus, $p$ applied \decayrule, \earlyitrule or \strongitrule by the end of round $k-1.$ 
 
We will cover each of the closing rules separately. 

Proving the claim in case $p\in G$ uses \decayrule:   
by definition $\mbox{$\exists$} \sigma' \sqsubset \sigma$ such that $\sigma' \in \RT_p[k-2]$, which results in closing the branch by the end of round $k-1$ and not sending in round $k$. 
If $\mbox{$\exists$} \sigma'' \sqsubset \sigma$, $\sigma''\in\RT_p[k-3]$, then we are done by induction.  
Otherwise, it must be because of messages received in round $k-2.$  All such messages are reflected in $\IT_p$.  
To influence a $\sigma'\in\RT_p,$ it should be as a result of applying \itrule, \lastroundrule, \earlyitrule, or \strongitrule. 
Since $k-2\neq\phi+1,$ we conclude that it is not a result of applying \lastroundrule.  
If it is a result of $p$ applying  \earlyitrule, or \strongitrule in round $k-2$ then this branch would be closed already be the end of round $k-1$ and we are done by induction. 
Similarly, if any other correct process closed the branch by the end of round $k-2$, we are done by induction.

Assume now the case that it is a result of $p$'s using \itrule. 
Thus, there should be some $\bar\sigma w,$ such that $\sigma'\subseteq\bar\sigma,$ $\bar\sigma w\in\Sigma_{k-4}$ and $p$ applied \itrule in round $k-2$ to put it in $\RT_p$ (and $\PT_p$, for proof purposes). 
Let $d$ be the value assigned by $p$ to $\PT_p(\bar\sigma w)$ as a result of processing $\IT_p$ by the end of round $k-2.$.	
If there is a correct node in $\bar\sigma w$, we are done by induction.  
Since this is not the case, then when $p$ applied  \itrule it observed a set $U$ of $n-t$ processes in $\IT_p$ that are voters  of $(\bar\sigma, w, d)$, of which at least $t+1$ are correct processes.  	
Let $\bar U$ be the set of correct voters in $U$.

For each voter $v\in \bar U$ there is a set of $W_v$ of $n-t$ processes that are confirmed on $(\bar\sigma, w, d)$, where $v$ is a supporter to each $u\in W_v$ on $(\bar\sigma, w, d)$. Since we assume that there is no correct nodes in $\bar\sigma w,$ $v\not=w$.

By definition, 
for each  $u\in W_v\setminus\{w,v\}$, $\IT_p(\bar\sigma w u v)=d$, and since $v\in G$ and no correct process closed the branch or stopped sending yet, then
by the end of round $k-2$,  for every $x\in G\setminus\{u,v\}$, $\IT_x(\bar\sigma w u v)=d$. If $v\in W_v$, then all will also have $\IT_x(\bar\sigma w v)=d$. 

For $u\in G,$ since $\bar\sigma w\in\Sigma_{k-4}$, by induction, $\RT_x(\bar\sigma w u v)=d$, or $\exists \sigma' \sqsubset \bar\sigma w u v$ such that $\sigma' \in \RT_x$, at every $x\in G.$

For $u\not\in G,$ by the end of round $k-1$, for every $x\in G$, at every $y\in G$, $x$ is a supporter of $x$ for $(\bar\sigma w u, v, d).$ And for every $x,y\in G$, where $x\not= y \not= v$, $\IT_x(\bar\sigma w u v y)=d$.  
In round $k$ some correct process (including $p$)  may not send messages and all the rest will send identical value $d$ messages.  The above implies that the receiving rule will  assign
to each correct process that does not send messages the identical value $d$ at every correct process that still process messages for this branch. 

As we argued before, since $v$ itself is confirmed on each node it echoes, every correct node will be a voters and therefore, 
by the end of round $k$, at every correct process $x\in G$, that still process messages for this branch, and for every $u\in W_v$, either $\RT_x(\bar\sigma w u v)=d$, or $\exists \sigma' \sqsubset \bar\sigma w u v$ such that $\sigma' \in \RT_x$.

Now observe that each  $u\in W_v$, being confirmed on $(\bar\sigma, w, d)$, has a set $U_u$ of $n-t$ of  supporters in $\IT_p$ of $u$ for $(\bar\sigma, w, d)$ (one of which is $u$ itself). 
Let $\bar U_u$ be the set of correct processes in $U_u$.  
By definition, for each  $u\in W_v$, $\IT_p(\bar\sigma w u)=d$ 
and for each $u'\in\bar U_u\setminus\{u\}$, $\IT_p(\bar\sigma w u u')=d$.  
Since no correct process closed the brach or stopped sending,  at every $x\in G$,
$\IT_x(\bar\sigma w u u')=d$, and if $u\in G$, then $\IT_x(\bar\sigma w u)=d$.  
Thus,  by the end of round $k$, 
at every correct process $x\in G\setminus\{u, u'\}$, that still process messages for this branch, $\PT_x(\bar\sigma w u u')=d$, where $u\in W_v$, and $u'\in\bar U_u\setminus\{u\}$.
Thus, each  $u\in W_v$, is $\RT$-confirmed on $(\bar\sigma, w, d)$ and each $v\in \bar U$ is $\RT$-voter on $(\bar\sigma, w, d)$.
The same holds, by definition, for $u$ and $u'$ if they did not closed the branch earlier.
This implies that such $x$ will apply \gcrule to assign $\PT_x(\bar\sigma w)=d$ (or would observe by that time $\exists \sigma' \sqsubset \bar\sigma w$ such that $\sigma' \in \RT_x$), which completes the proof for this case.

Proving the claim in case $p\in G$ uses \earlyitrule:
Assume that a correct process $p \in G$ applies \earlyitrule  by the end of round $k-1$.
Let $\sigma\in\Sigma_{k-2}$ and denote $\sigma=\tau u$.
The assumption of $p$'s closing the branch implies, among other things,  that for every $x,y\in N\setminus\F_p$, such that $\tau ux , \tau uy \in\Sigma_k,$  $\IT_p(\tau u x)=\IT_p(\tau u y)=d$, for some $d\in D,$ and thus $\PT_p(\tau u)=d$.
This also implies that every correct process $x$ that applies the receiving rule in round $k$ will assign $\IT_x(\sigma p)=\IT_x(\sigma)=d$. If  there is any correct process in $\tau$, we are done by induction (Statement~\ref{c:good-sent} on $k-1$, since the correct processes sent in $k-3$ or earlier).
If this is not the case, whether $u$ is correct or not,  we conclude that by the end of round $k-1$ every correct process $x\in G$ will have $\IT_x(\tau u y)=d$ for every $y\in G\setminus\{u,x\},$ and  if $u\in G$ then also $\IT_x(\tau u)=d$.
This is true since by \lemmaref{lem:FA}, and Item~\ref{c:good-safe}, $G\cap\F_q=\emptyset.$
Thus,  by the end of round $k$ every correct process $x$ that did not close the branch will use \itrule to obtain $\RT_x(\tau u)=d$ (or would observe by that time $\exists \sigma' \sqsubset \tau u$ such that $\sigma' \in \RT_x$), and we are done.

Proving the claim in case $p\in G$ uses \strongitrule: 
Assume that $p$ applies \strongitrule  by the end of round $k-1$. 
If there is a correct process in $\sigma$, we are done by induction. 
If $\mbox{$\exists$} \sigma' \sqsubseteq \sigma$ such that $\sigma'\in\RT_q[k-1]$ for any correct $q$, we are also done.
Otherwise, let $\sigma\in\Sigma_{k-3}$. 
By definition there exists $U$, $U\cap\sigma=\emptyset$, $|U|=n-r+2$ such that for every $u,v \in U\setminus\F$, 
where $v\not=u,$
$\IT_p(\sigma uv)= \IT_p(\sigma vu)$.  
Let $x$ be the node such that $\sigma x\in\Sigma$, but $x\not\in U\cup\F.$ 
Since we assume that there is no correct process in $\sigma$, $G\subseteq U\cup\{x\}.$  
Assume first that $x$ is not correct.  
If this is the case, then the assumption on $U$ implies that all members of $U$ are supporters and voters and by the end of round $k-1$, $\sigma$ would be in $\RT$ of every correct process.  
If this is not the case, we are left with the option that $x$ is correct but doesn't agree with  some of the values all members of $U$ sent. 
Denote by $\bar U$ the correct member of $U$, and it is clear that $|\bar U|=n-t-1$ and $|U|\ge n-t.$ 
The definition of the set $U$ implies that by the end of round $k-1$, either $p$ puts $\sigma$ in $\PT_p$, or $\exists \sigma' \sqsubset \sigma$ such that $\sigma' \in \RT_p$.
Moreover, by induction,  for every member $u$ of $\bar U$, $\sigma u\in \RT_q$ of every correct process $q$ by the end of round $k.$  
Thus, by the end of round $k$ every correct process $q$ that doesn't already have $\sigma\in\RT_q$ will be able to apply \rgcrule to put $\bar\sigma\in\RT_q$, and we are done.
\end{proof}

\begin{proof}[Proof of Item~\ref{c:safety} for \claimref{c:range2}](Safety)
Notice that when a process $p$ puts a value to a node $\sigma x$, say in round $k$, then at that point in time $\mbox{$\not\exists$} \sigma' \sqsubset \sigma \text{, such that } \sigma' \in \RT_p[k] $.  

Observe that if both $p$ and $q$  put values to $\sigma x$ prior to round $k$, then the claims hold by induction on $k$.
Therefore we limit ourselves to  nodes which value $q$ puts in its $\PT$ in round $k$ and $p$ had put a value to that node in its $\PT$ in some round $k' \le k.$ Moreover, we limit ourselves to the case where no correct process had put a value to that node in its $\PT$ in any round $k''<k'.$

We prove Item~\ref{c:safety} by backward induction on the length $\ell=|\sigma x|$ from $\ell=k$ to $1$. 
For each $\ell$ we will go through all the put rules $p$ could have applied in setting the value to $\sigma x$ in round $k$ or earlier, and for each rule we consider the relevant rules  $q$ could have apply, and we will prove that the four statements hold in each case.

The rules to put a value to a node in $\RT$ (and $\PT$) are:
1) \itrule,
2) \gcrule,
3) \rgcrule,
4) \srule,
5) \srootrule,
6) \earlyitrule,
7) \strongitrule
and
8) \lastroundrule.

The case $\ell=k$:  A node of level $k$, where $k\le\phi $, cannot be put in $\PT\,$ by the end of round $k$. 

The case $1\le\ell<k$:  let $|\sigma x|=\ell$ and assume correctness for every $\ell'>\ell$ . Since $k<\phi$, \lastroundrule is not applicable.

If there is a correct predecessor in $\sigma$, we are done by Item~\ref{c:good-sending}, since by the end of round $\ell+1$ process $q$ will have $\sigma x\in\RT_q$ (due to coloring), hence no node $\sigma x u$ will be in $\PT_q$ and all four statements clearly hold.

Otherwise, if process $x$ is correct, by  Item~\ref{c:good-sending}, by the end of $\ell+2$ process $q$ will have $\sigma x\in\RT_q$. A node $\sigma x u$ can be in $\PT_q$ only if $q$ applied \earlyitrule,  or \strongitrule  in that round, so any such node will also be set to the value of $\sigma x$, which is the same at both $p$ and $q,$ thus all four statements hold.

Otherwise, there is no correct process in $\sigma x$. 
Thus, node $x$ has $n-\ell$ children nodes, out of which at least $n-t$ are correct and out of the $t-\ell$ others, at most $\phi-\ell$ are actively faulty and at lease $t-\phi$ are silent.

We start by proving the first three statements and after that we will prove the fourth statement.

 \ding{228}  Consider the case that $p$ used \gcrule to put $\sigma x$:
 \gcrule implies that there are $t+1$ $\RT$-voters.  Each $\RT$-voter has a set of $n-t$ children nodes $\RT$-confirmed for $(\sigma, x, d^\prime)$.  
 Define, in such a case, by $V_p$ the set of children nodes of $\sigma x$ that are $\RT$-confirmed to $d^\prime$ in $\PT_p$. By definition,   each confirmed node in $V_p$ has $t+1$ children nodes in $RT_p$ with the same value $d^\prime\!.$

\begin{proof} [Proof of Statement~\ref{c:lock2d} of Item~\ref{c:safety} for \claimref{c:range2}]
 By definition, confirmed is defined for $\ell+2\le k < \phi+1$.
 For node $v$ being $\RT$-confirmed implies that there is a set $V_d$, such that for each $v'\in V_d$, 
 $\RT_p(\sigma x v v')=d,$ where $|V_d|=t+1.$ 
If $\sigma x v\in \RT_p$ when $p$ puts $\sigma x$, then also $\sigma x v\in \PT_p$, otherwise $\sigma x$ should be in $\RT_p$ already.  
Moreover, if $\sigma x v\in \RT_p$, it should be that $\RT_p(\sigma x v)=d$, otherwise, by coloring, $\RT_p(\sigma x v v')$ would also not be equal $d$. 
By induction, level $\ell+1$, Statement~\ref{c:full-safety}, we conclude that $q$ can't put $\sigma x v$ to $\bot.$  Therefore, it should be the case that when $p$ puts $\sigma x$ to $\PT_p$, $\sigma x v \not\in\PT_p.$  In such a case, all children nodes of $\sigma x v $ that are in $\RT_p$ are in $\PT_p.$ Specifically, every $v'\in V_d$ is in $\PT_p.$ This also implies that $v\not\in G.$
 By induction, on level $\ell+2$, Statement~\ref{c:full-safety}, we conclude that for every $v'\in V_d$, if $\sigma x v v' \in\RT_q$ then $\PT_p(\sigma x v v')=\PT_q(\sigma x v v')=d.$ 
 
Node $v$ has exactly $n-\ell-1$ children nodes. When $q$ puts a value to node $\sigma x v$, all children nodes of node $\sigma x v$ are not colored. 
 There are at most $n-\ell-1-(t+1)<n-t$ children nodes of $\sigma x v$ in $\PT_q$ that  are not in $V_d.$
 
 Look at the rules $q$ may use in order to put $\sigma x v $ to $\bot.$ 
 
 --- Consider the case that $q$ used \itrule to put $\sigma x v$ to $\bot$: 
 If $q$ applies \itrule, then it should have for each voter $e$ a set $U_e$ of $n-t$   processes confirmed on $(\sigma x, v,\bot)$ in $\IT_q$. There is at least one process in the intersection of $U_e$ and $V_d$.  Denote it by $u$, $u\in U_e\cap V_d.$  
 Observe that the definition of $V_d$ implies that $u\not=v$.
 Being confirmed implies that $u$ has a set of at least $t+1$ correct processes $U_u$ such that $\IT_q(\sigma x v u u')=\bot$ for every $u'\in U_u.$
 For any such $u'$ that sends messages in this round, $\IT_p(\sigma x v u u')=\bot$.  
 If there is $u'$ that closed the branch using \decayrule, then it did so before round $k-2$, and we are done by induction.
 Otherwise it used   \earlyitrule and both $q$ and $p$ would assign it the same value, and therefore we also conclude that $\IT_p(\sigma x v u u')=\bot$.
 If it used \strongitrule, then there is a set $U'$ of at least $t+1$ correct processes such that $\IT_p(\sigma x v u u')=\IT_q(\sigma x v u u')=\bot$, since all but one send the same value, and $q$ saw $n-t$ of them.
 
 We now argue that if $p$ has such a set of children node, it implies that if $\sigma x v u \in \PT_p$, then $\PT_p(\sigma x v u)=\bot.$ 
 
Consider the various put rules $p$ can use to put a value to  $\PT_p(\sigma x v u)$. 
Thus, if $p$ uses  \earlyitrule in round $\ell+3$ it should be to the value $\IT_p(\sigma x v u)=\bot.$ 
 If $p$ applies \itrule in round $\ell+4$ it should be the case that $\IT_p(\sigma x v u)=\bot.$ 
 By the end of round $\ell+5$, all the correct children nodes in $U_u$ (or $U'$), by  Item~\ref{c:good-sending}, will be in $\PT_p$ with value $\bot$ and will color their subtrees in $\RT_p$ to $\bot$.  Therefore, if $p$ applies any rule to put the value of $\sigma x v u$, it will be to $\bot.$ 
This contradicts the fact that $u\in V_d.$

 ----  Consider the case that $q$ used \gcrule to put $\sigma x v$ to $\bot$: 
 If $q$ applies \gcrule, then it should have a set $U_e$ of $\RT$-confirmed on $(\sigma x, v,\bot)$ in $\PT_q$.
 Each $u$ in $U_e$ has a set $W_u$ of size $t+1$ such that for each $u'\in W_u$ $PT_q(\sigma xvuu')=\bot.$ 
 Since, at least one of the nodes in $U_e$ is in $V_d$, there is a contradiction to the induction on Statement~\ref{c:not-bot}.

 ---  Consider the case that $q$ used \rgcrule to put $\sigma x v$ to $\bot$: 
 Contradiction, to Statement~\ref{c:full-safety}.
 
 Thus, we are left with the option of $q$ applying \srule  put $\sigma x v$ to $\bot,$
 proving the statement.
\end{proof}

\begin{proof}[Proof of Statement~\ref{c:not-bot} of Item~\ref{c:safety} for \claimref{c:range2}]
 In this case, potentially some nodes from $V_p$ (though at most one) may resolve to $\bot.$  Observe that node $\sigma x$ in $\RT_q$ has at most $n-\ell-(n-t)=t-\ell$ children nodes outside $V_p.$  Since $\ell\ge 1,$ for the claim not to hold there should be at least 2 nodes form $V_p$ that resolve to $\bot$. Statement~\ref{c:lock2d} and the definition of \srule imply that at most one node can be resolved using \srule. We are done since $t-\ell+1<t+1.$
\end{proof}

\begin{proof}[Proof of Statement~\ref{c:no-srule} of Item~\ref{c:safety} for \claimref{c:range2}]
 The observation above implies that every node in $V_p$ that is put in $\RT_q$ should be with value $\bot$. Thus, proving this case.
\end{proof}

\ding{228} Consider the case that $p$ used \itrule to put $\sigma x$:\\
Statement~\ref{c:lock2d} is not applicable in this case, and the rest of the cases we discuss next.

\begin{proof}[Proof of Statement~\ref{c:not-bot} of Item~\ref{c:safety} for \claimref{c:range2}]  
 The \itrule implies that in $\IT_p$ there is a set $V$ of $n-t$ voters of $(\sigma, x, d)$, where $d\neq\bot$. Each $v\in V$ has a set $W_v$ of $n-t$ processes that are confirmed on $(\sigma, x, d)$, where $v$ is a supporter to each $u\in W_v$ on $(\sigma, x, d)$. Each  such $u$, being confirmed on $(\sigma, x, d)$, has a set $U_u$ of $n-t$ supporters in $\IT_p$ to $u$ on $(\sigma, x, d)$. 
 Each of these sets of size $n-t$ contains at least $t+1$ correct nodes. Let $U_{\!\tiny{\mbox{$\bot$}}}$ be the set of children nodes, where each one has at most $t$ correct supporters to  $(\sigma, x, d)$  in $\IT_p.$ The above implies that $|U_{\!\tiny{\mbox{$\bot$}}}|\le t-\ell$ (notice that  $U_{\!\tiny{\mbox{$\bot$}}}\subseteq N\setminus W_v$).
 
 Assume by contradiction that $q$ has $|V_q|> t$ children nodes of $\sigma x$ in $\PT_q$ such that $\PT_q(\sigma x u)=\bot$  for each $u\in V_q$.  Therefore, there must 
 exist  two  nodes, $y_1,y_2\in V_q$  that are not from the set $U_{\!\tiny{\mbox{$\bot$}}}$ (because $\ell\ge1$ so $|U_{\!\tiny{\mbox{$\bot$}}}|\le t-\ell\le t-1$).

We now go through the put rules $q$ can apply to put values to the children nodes of $\sigma x$. We will also study the minimal round at which $q$ can apply these put rules.
Since  $\RT_q(\sigma x)$ can't have a value when the put rule is applied by $q$ to the children nodes of  $\sigma x$, the earliest round at which $q$ can use any other rule to put it's value, if at all, is the end of $\ell+2$ .
 
For round $\ell+2:$ 
If $\ell+2\le k$, then by the end of round $\ell+2$, it can't be that all echoing processes to $y_1$ or $y_2$ have sent the  value $\bot$.
 Thus, by the end of round $\ell+2$, $q$ cannot have  either $y_1$ or $y_2$ in $\RT_q$ with value $\bot$, and the claim holds.

For round $\ell+3:$
If $\ell+3\le k$, then by the end of round $\ell+3$, each $y\in\{y_1,y_2\}$ has $t+1$ correct children that are supporters for $d$ in $\IT_q$ and therefore $y$ can't be in $\PT_q(\sigma x y)$ for value $\bot$.
Thus, by the end of round $\ell+3$, $q$ can have at most $t-\ell$ children nodes of $\sigma x$ in $\RT_q$ with value $\bot$, and the claim holds.
 
 If $\ell+4\le k$, then by the end of round $\ell+4$, by  Item~\ref{c:good-sending}, the value of all correct
 nodes in all sets  $V$, $W_v$ and $U_u$ above are already in $\RT_q$. This implies that $y_1$ and $y_2$ each has
 at least $t+1$ children nodes in $\RT_q$ with value $d$. The value of neither $y_1$ nor $y_2$ can be put to $\bot$
 using rules \gcrule, or \rgcrule, and clearly not \srootrule. We already excluded \earlyitrule, \strongitrule, and \itrule, so the
 only rule that may be applied is \srule. But \srule can be applied only when all other sibling nodes are already in
 $\RT_q,$ so it can be applied to either $y_1$ or $y_2$ but not to both. A contradiction. This completes the proof of
 Statement~\ref{c:not-bot} for this case, assuming $p$ used \itrule to put the value of $\RT_p(\sigma x).$
\end{proof}

\begin{proof}[Proof of Statement~\ref{c:no-srule} of Item~\ref{c:safety} for \claimref{c:range2}]
 The proof is identical to the proof of Statement~\ref{c:not-bot} with a small change, except that \srule does not produce a value that is different than $\bot$. 
 \end{proof}

\ding{228} Consider the case that $p$ used  \earlyitrule  in order to put $\sigma x$.

\begin{proof}[Proof of Statement~\ref{c:not-bot} of Item~\ref{c:safety} for \claimref{c:range2}]  
The \earlyitrule implies that in $\IT_p$ there is a set 
 $U$, $U\cap\{u' \mid u'\in \sigma x\}=\emptyset$,
$|U|=n-\ell$, 
such that for every $u,v \in U\setminus\F$, 
$\IT(\sigma x u)= \IT(\sigma x v)$.
Assume first that no correct process closes the branch by the end of round $\ell+1.$
This implies that $q$ will also see all correct processes sending the same value.  Therefore, it can't apply any rule on it's $\IT_q$ to put any child of 
$\sigma x$ $\PT_q$ with a value of $\bot$.   
By the end of round $\ell+3$, by Item~\ref{c:good-sending}, the value of all correct
 nodes in $U$ are 
 already in $\RT_q$. This implies that $q$ can't have a set $V_q$ of more than size $t$ for any different value.
 
Now, if there is $u\in U$ that closed the branch and did not send in round $\ell+1$, then by the end of round $\ell+1$, 
by Statement~\ref{c:good-silent},
 at every correct process $q$, $\exists \sigma' \sqsubset \sigma x $ such that $\sigma' \in \RT_q$, which implies that by the end of $\ell+1$, 
$ \sigma x \in \RT_q$, contradicting our assumption.
\end{proof}

\begin{proof}[Proof of Statement~\ref{c:no-srule} of Item~\ref{c:safety} for \claimref{c:range2}]
 The proof is identical to the proof of Statement~\ref{c:not-bot} with a small change, except that \srule does not produce a value that is different than $\bot$. 
\end{proof}	
 
 \ding{228} Consider the case that $p$ used  \strongitrule in order to put $\sigma x$. 
 
\begin{proof}[Proof of Statement~\ref{c:not-bot} of Item~\ref{c:safety} for \claimref{c:range2}]  
Assume for contradiction that  $\exists V_q$ such that $|V_q|= t+1,$ where $V_q=\{u\mid \PT_q(\sigma xu)=\bot\}$.
The \strongitrule implies that in $\IT_p$, by the end of round $\ell+2$, there is a set 
$U$, 
$U\cap\{u' \mid u'\in \sigma\}=\emptyset$,
$|U|=n-\ell+1$ such that for every $u,v \in U\setminus\F$, 
where $v\not=u,$
$\IT_p(\sigma x uv)= \IT_p(\sigma x vu)=d$.

Assume first that no correct process closes the branch by the end of round $\ell+2.$
This implies that $\IT_q$ will include all values above appearing in $\IT_p$ for correct processes. 
We assume that there is no correct process in $\sigma x$, and that  $|\sigma x|\ge 1$.  Therefore $|\F_q\setminus\{u' \mid u'\in \sigma x\}|<t$. 
Moreover, also $|(\F_p\cup\F_q)\setminus\{u' \mid u'\in \sigma x\}|<t$. 
Therefore, there should be at least two processes in $V_q$ that are not in $\F_q$.
Therefore, there should be  $y\in V_q$ such that $y\in U\setminus(\F_p\cup\F_q)$.   
By the definition of $U$, there is a set $U_q$ of $n-t-1$ correct processes such that $\IT_q(\sigma x y u)=d$, for $u\in U_q$.  
Therefore, by the end of round $\ell+3$ $q$ can't have $\PT_q(\sigma x y)=\bot$.  

Since $p$ applied \strongitrule by the end of round $\ell+2$, by the end of $\ell+3$, by Statement~\ref{c:good-silent}, $\sigma x\in\RT_q$, so 
$\PT_q(\sigma x y)$ will never be set to $\bot$. A contradiction.
 \end{proof}

\begin{proof}[Proof of Statement~\ref{c:no-srule} of Item~\ref{c:safety} for \claimref{c:range2}]
 The proof is identical to the proof of Statement~\ref{c:not-bot} with a small change, except that \srule does not produce a value that is different than $\bot$. 
 \end{proof}

 \ding{228}  Consider the case that $p$ used \rgcrule to put $\sigma x$:  In this case when $p$ applies the rule, all its children nodes are in $\PT_p.$ By induction (Statement~\ref{c:full-safety}), none of these children nodes will appear with a conflicting value in $\PT_q.$ Since $n-t-1$ of them are with the same value $d^\prime$, then at most $n-\ell -(n-t-1)=t-\ell+1$ are with a different value. \rgcrule is applied only when $\ell\ge 1.$ This immediately implies that Statement~\ref{c:not-bot} and Statement~\ref{c:no-srule} hold. 

We completed the proof of the first 3 statements. We now prove the last one.

\begin{proof}[Proof of Statement~\ref{c:full-safety} of Item~\ref{c:safety} for \claimref{c:range2}]
 If $|\sigma x|\ge1,$ then Statement~\ref{c:not-bot} and Statement~\ref{c:no-srule} clearly prove that Statement~\ref{c:full-safety} holds, unless $p$ uses  \srule to put $\sigma x$. The proof above covers the case that $q$ uses any rule other than \srule, by symmetry between $p$ and $q$ in this statement.  Thus, we are left with the case that  both are using \srule, and clearly both put $\bot.$
 
 We are left to consider the case $|\sigma x|=0,$ thus $\sigma x = \bar\epsilon$. For that we need to consider all put rules that $p$ and $q$ may have applied.  
 There are 3 applicable rules, \itrule, \gcrule, and \srootrule,  to put a value to $\bar\epsilon$. Notice that \srule and  \rgcrule are not applicable and \earlyitrule,  or \strongitrule  were covered in Statement~\ref{c:good-silent}.

Node $\bar\epsilon$ has $n$ children nodes, out of which at least $n-t$ are correct and out of the $t$ others, at most $\phi$ are actively faulty and at lease $t-\phi$ are silent. Notice that for $\bar\epsilon$, every child node that is in $\RT_p$ is also in $\PT_p$, since once we assign a value to  $\bar\epsilon$ we do not process any other node.
  
 \ding{228} Consider the case that $p$ used \earlyitrule or \strongitrule to put a value to $\bar\epsilon$:
Both rules imply that $p$ sees a unanimous echoing by all $n$ processes, with the exception of at most one process.  Since we assume that all correct processes participate, there is no way that $q$ will put a different value to $\bar\epsilon$.
  
 \ding{228} Consider the case that $p$ used \itrule to put $\bar\epsilon$ and that $\RT_p(\bar\epsilon)=\bot$:
 The basic arguments are the same as in the case $\ell\ge1$, but the set of put rules that $q$ may apply differ. If $q$ also uses \itrule, then the claim clearly holds. If $q$ uses \srootrule, then it obtains the same value. So we are left with the case of $q$ using \gcrule.  The arguments are the same as in the case  $\ell\ge1$,  which exclude the possibility that $q$ puts any value other than $\bot$ to $\bar\epsilon$, completing the proof of this case.
 
 \ding{228} Consider the case that $p$ used \itrule to put $\bar\epsilon$ and that $\RT_p(\bar\epsilon)=d,$ $d\neq\bot$: We now need to consider the possibility of $q$ using \itrule, \gcrule and \srootrule.  The arguments for the  first two  are the same as above and are left out. 
 
For using \srootrule node $q$ should have a set $V_q$, $|V_q|=t+1,$ such that for each $v\in V_q,$ $\RT_q(v)=\bot.$  Notice that  also here there is no difference between $\RT_q(v)$ and $\PT_q(v).$

 The \itrule implies that in $\IT_p$ there is a set $V_p$ of $n-t$ voters of $(\bar\epsilon, \epsilon, d)$. 
 Each $v\in V_p$ has a set of $W_v$ of $n-t$ children nodes such that $v$ is a supporter to each $u\in W_v$ on $(\bar\epsilon, \epsilon, d)$, and each  such $u$ has a set $U_u$ of $n-t$ supporters in $\IT_p$ to $(\bar\epsilon, \epsilon, d)$. 
 Each of these sets of size $n-t$ contains at least $t+1$ correct nodes. Thus, there is a set $U_{\!\tiny{\mbox{$\bot$}}}$ of size at most  $t$ that does not have at least $t+1$ correct supporters to $(\bar\epsilon, \epsilon, d)$.
 
 Thus, there should be a process $x\in V_q$ that has  a set $U_x$ of $n-t$ supporters in $\IT_p$ to $(\bar\epsilon, \epsilon, d)$.  The set contains a set $\bar{U_x}$ of at least $t+1$ correct processes that are also supporters in $\IT_q$ to $(\bar\epsilon, \epsilon, d)$.  Consider the various rules $q$ can apply to put a value $\bot$ to $\RT_q(x).$ By the end of the 2nd round it can apply \earlyitrule  to set a $\bot$ to it, because all processes in  $\bar{U_x}$ send a different value.  For that reason it can't apply \itrule in the end of round 3 to put  value $\bot$ to $\RT_q(x)$. By the end of round 4 for every process $y\in\bar{U_x}$ $\PT_q(xy)=d.$ 
 Process $q$ can't apply \srule to put a value $\bot$ to $x$, since that rule is not applicable for $|x|=1.$ \gcrule, \rgcrule or \strongitrule, can't be used to put $\bot.$  
 Since we assume that $k<\phi$, the case $\phi=1$ is not relevant, Therefore also \lastroundrule can't be applied either - and we are done.
 
 \ding{228}  Consider the case that $p$ used \gcrule to put $\bar\epsilon$: If $q$ uses \itrule, by symmetry we are done.  If $q$ also uses \gcrule, by definition both obtain the same value.  We are left with the case that $q$ uses \srootrule. The interesting case is that $\RT_p(\bar\epsilon)=d$, $d\neq\bot.$  
 Observe that we cannot use the induction on $k=1$ since the set of applicable rules differ.
 The arguments for proving the case are similar to the previous case, the case of \itrule, since $q$ can't apply \srule to any node in level $1.$
 
 \ding{228}  Consider the case that $p$ used \srootrule to put $\bar\epsilon$: If $q$ also uses it the claim holds. Otherwise it falls into the other rules discussed above.
 
 This completes the proof of Statement~\ref{c:full-safety} .
\end{proof}

This completes the proof of Item~\ref{c:safety} (Safety) for \claimref{c:range2}.
\end{proof}

\begin{proof}[Proof of Item~\ref{c:two-rounds} for \claimref{c:range2} ](Liveness)
It is enough to prove that if $p\in G$ puts $\sigma x\in\PT_p$ in some round $r\le k$, then by the end of round $\max(r+2,\phi+1)$ $\sigma x\in\RT_q$, for every $q\in G.$

We prove the lemma by backward induction on $\ell=|\sigma x|,$ from $\ell=k$ to $\ell=1.$ 
As in the proof of  Item~\ref{c:safety}, the claim clearly holds for $\ell=k$, since  no node of level $k$, $k<\phi+1$ can be added to $\PT$ by the end of round $k.$  The case $\ell=k-1$ is applicable only to \earlyitrule, and is covered by the proof of Statement~\ref{c:good-silent}.

Assume the induction for any $k\ge \ell'>\ell$ and we will prove for $\ell,$ $\ell\le\phi-1.$  
If $\sigma x$ contains a correct node then by induction on Item~\ref{c:good-sending} we are done.  So assume that there is no correct process in $\sigma x.$
Let $p$ be the first to put $\sigma x$, where $\sigma x\in\PT_p$, and let $r$ be the round at which it did that.  Consider the various possible put rules.

\ding{228}  Case $p$ applied \earlyitrule,  or \strongitrule  Statement~\ref{c:good-silent} implies the proof.

\ding{228}  Case $p$ applied \itrule:  By definition, this can happen only in round $r=\ell+2$. The \itrule implies that there are $t+1$ correct
voters of $(\sigma,x,d)$ in $\IT_p$, each having $n-t$  nodes, each of which is confirmed on  $(\sigma,x,d)$ in $\IT_p$. Let $U_x$ be the set of the confirmed nodes on $(\sigma,x,d)$ in $\IT_p$ and $V_e$ the set of correct voters. Observe that $U_x$ contains at least $t+1$ correct processes.

If by round $r+1$ $\sigma x\in\RT_q$ we are done. If not, then if for any $u\in U_x$ $\sigma x u\in \RT_q$, it should be in $\PT_q$, and it should be with a value $d$, because of using either \itrule, \earlyitrule,  or \strongitrule  by $q$, and it can't obtain a different value, because of the correct processes in $U_x$ and $V_e.$

If by round $r+2$ $\sigma x\in\RT_q$ we are done. If not, 
Item~\ref{c:good-sending} implies that by $\max(r+2,k)$ all
voters in $V_e$ will appear in $\RT_q$ as $\RT$-voters on $(\sigma,x,d)$, since the nodes in $U_x$ will be confirmed to $(\sigma,x,d)$. These arguments and Item~\ref{c:safety} imply that if any of them is colored, it should be colored to $d$. Therefore,  $q$ can apply \gcrule  to add $\sigma x$ to $\PT_q$ and we are done.

\ding{228}  Case $p$ applied \gcrule: By definition, assuming that no branch closing took place, this can happen only in some round $r\ge\ell+4$.  Assume first that $\ell\ge 1$, we later deal with smaller values of $\ell$.
If by the end of round $r+2$ process $q$ puts a value to $\sigma x$ or to a predecessor of $\sigma x$, we are done.  Otherwise, by the induction hypothesis, by $r+2$, each node involved in applying \gcrule by $p$ to $\sigma x$ in $\PT_p$ is either colored or its value put by $q$ in $\RT_{q}.$ We will show that by $r+2$ process $q$ can apply one of the rules to put a value to $\sigma x$ in $\PT_q.$

Let  $V_p$ be that set of children nodes of $\sigma x$ that are $\RT$-confirmed to $d^\prime$ in $\RT_p$. By Statement~\ref{c:full-safety}, for every $v\in V_p$, if $\sigma xv\in\PT_p$ and $\sigma xv\in\PT_q$, then $\PT_p(\sigma xv)=\PT_q(\sigma xv)$.

None of the nodes in $V_p$ can be confirmed to a different  value than $d^\prime$ in $\RT_q$, unless it was put by $q$ to a different value.  If $d^\prime\neq\bot$ this can happen to the value of $\bot$ and by
Statement~\ref{c:lock2d}, this can happen only using \srule. Thus, there can be at most one such node $z\in V_p$ that was set to $\bot$ by $q$.

If none was set using \srule, then by $r+2$ process $q$ should see the same set of voters that $p$ did and is able to apply \gcrule. Otherwise, it should have applied the \srule to one of the nodes in $V_p.$
Before it can apply \srule, all other children nodes of $\sigma x$ should be put to a value. After applying \srule to node $\sigma x z$ all the children nodes of $\sigma x$ have a value in $\RT_q.$
For every $y\in V_p$ the value is $d^\prime$, so $q$, by that time, would have at least $n-t-1$ children nodes set to $d^\prime.$ Thus, $q$ can apply \rgcrule to put a value to $\sigma x$ and the claim holds.

In the case of $\ell=1$, by definition $q$ can't apply  \srule to set a value to $z$.  And therefore it should have been able to use \gcrule to set a value to $\sigma x$.
The case of $\ell=0$ is similar to the case of $\ell=1$.

\ding{228}  Case $p$ applied \rgcrule: since $p$ applies this rule, all the children nodes of $\sigma x$ are put to a value in $\PT_p$ and by induction by $r+2$ also at $q$. If  $\sigma x\in\RT_q$, we are done.  Otherwise, by Statement~\ref{c:full-safety} their value is the same as for $p$ and process $q$ can also apply \rgcrule.

\ding{228}  Case $p$ applied \srule or \srootrule: exactly as in the previous case.

This completes the proof of Item~\ref{c:two-rounds} (Liveness) for \claimref{c:range2}.
\end{proof}

This completes the proof of  \claimref{c:range2}.
	\end{proof}
	
	We can now complete the proof of the Theorem by covering the case of $k\in\{\phi,\phi+1\}$
	\begin{proof}[Proof of \claimref{c:range3}]
We cover both cases for each item.

\begin{proof}[Proof of Item~\ref{c:good-safe} for \claimref{c:range3}](Detection)
There is no special issues that surface in the last two round regarding detection, and the proof for the case $k<\phi$ holds.
\end{proof}

\begin{proof}[Proof of Item~\ref{c:good-sending} for \claimref{c:range3}](Liveness)

\ding{228} Consider the case $k= \phi$: There is no difference between the arguments for this case and those of $k<\phi.$ 

\ding{228} Consider the case $k= \phi+1$: If $|\sigma x|= \phi$ and if any correct process is not sending in this round it is because of applying the $\RT[r-3]$ limitation, and by induction we are done.  Otherwise, if $z$ sends, then \lastroundrule completes the proof. The case 
$|\sigma x|< \phi$ is identical to that of  $k< \phi$.

The proof of Statement~\ref{c:masking} is similar to the case in which a correct process sends $\bot$ in the first round (Statement~\ref{c:good-sent}).
\end{proof}

\begin{proof}[Proof of Item~\ref{c:safety} for \claimref{c:range3}](Safety)
Item~\ref{c:safety} is not applicable in case $k=\phi+1$.

Consider the case $k=\phi$. The case $|\sigma x|=\phi$: A value to a node at this level can't be put at any round $\le \phi.$
Observe that  two correct processes may put conflicting values in their $\RT$ to a node $\sigma xy$
at level $\phi+1$  that is associated with a faulty process, since they may have conflicting values in their
$\IT$ for that node. This may happen only if there wasn't any correct predecessor of $x$ in $\sigma$, since Item~\ref{c:good-sending} implies that before assigning $y$ a value it would already be colored.
By \propertyref{prop:init-fail}, there is no conflict on all the $t-\phi$ faulty nodes that are initially in $\FA.$ Thus, there can be at most one faulty node in level $\phi+1.$
Item~\ref{c:good-sending} also implies  that during round $\phi+1$  node $\sigma x$ will be assigned a value by all correct processes, and therefore so will node $\sigma xy.$

Statement~\ref{c:lock2d} is not applicable in the case of $|\sigma x|=\phi$.

\begin{proof}[Proof of Statement~\ref{c:no-srule} for \claimref{c:range3}]
 
 Node $\sigma x$ was put to $\bot$ by process $p.$ By the  assumption of Statement~\ref{c:no-srule}, \srule wasn't applied. \lastroundrule is not applicable, since we are in level $\phi.$  \itrule and \gcrule are not relevant, since there is only a single level of nodes in $\IT$ or $\RT$.
 \srootrule is relevant only for the case of  $\sigma x=\bar\epsilon$, which can't happen for $|\sigma x|=\phi$. 
 If  process $p$ uses \earlyitrule,  or \strongitrule, then similar arguments to those used in the proof of Statement~\ref{c:good-silent} can be used.
 
 We are left with  \rgcrule.  Node $\sigma x$ has $n-t-1$ children nodes in $\RT_p$ all having the value $\bot$ and all but one are clearly correct nodes.  Since there are exactly $n-\phi-1$ nodes in level $\phi+1$, and there can be at most $n-\phi-1-(n-t-2)=t-\phi+1$ nodes holding a non $\bot$ value.  Thus, node $q$ can't have $t+1$ or more children nodes with a value not $\bot$ when it applies it's put operation;  Completing the arguments for Statement~\ref{c:no-srule}.
\end{proof}

\begin{proof}[Proof of Statement~\ref{c:not-bot} for \claimref{c:range3}]
 
 Node $\sigma x$ was put to $d,$ $d\neq\bot$ by process $p.$ Thus, \srule is not applicable and, as in Statement~\ref{c:no-srule}, we are left with \rgcrule. Node $\sigma x$ has $n-t-1$ children nodes in $\RT_q$ all having the value $d$ and all but one are clearly correct nodes.  Since there are exactly $n-\phi-1$ nodes in level $\phi+1$, and there can be at most $n-\phi-1-(n-t-2)=t-\phi+1$ nodes holding a non $d$ value.  Thus, node $q$ can't have $t+1$ or more children nodes with a value not $d$ when it applies it's put operation;  completing the arguments for Statement~\ref{c:not-bot}.
\end{proof}

\begin{proof}[Proof of Statement~\ref{c:full-safety} for \claimref{c:range3}]
 
 Case $d=\bot$, if $\sigma x\in\PT_q$, then by Statement~\ref{c:no-srule}, the only applicable rules for $q$ are \rgcrule, or \srule. Both will result in validating the claim.  Case $d=\mbox{$\not\!\!\bot$}$, if $\sigma x\in\PT_q$, then by  Statement~\ref{c:not-bot} it is clear that the only possible rule to be applied is \rgcrule, which results in validating the claim for Statement~\ref{c:full-safety}.
\end{proof}

For node $|\sigma x|< \phi-1$ identical arguments to those used in the proof of \claimref{c:range2} complete the proof of Item~\ref{c:safety} (Safety) for \claimref{c:range3}.
\end{proof}

\begin{proof}[Proof of Item~\ref{c:two-rounds} for \claimref{c:range3} ](Liveness)
The arguments for this item are the same for $k=\phi$ and $k=\phi+1$.
As we mentioned before, it is enough to prove that if $p\in G$ puts $\sigma x\in\PT_p$ in some round $r$, then by the end of $\max(r+2,\phi+1)$ $\sigma x\in\RT_q$, for every $q\in G.$
The proof is by backward induction on $\ell=|\sigma x|.$  

Case $\ell=\phi+1.$ The only round at which a process can put a value to a node in level $\phi+1$ in it's $\RT$ is during round $\phi+1.$ At that round, every correct process that doesn't have $\sigma x$ in its $\RT$ as a colored node, will insert it to its $\RT$ using \lastroundrule.

Case $\ell=\phi$. Either $p$ used \earlyitrule,  or \strongitrule  or it set the value in  round $\phi+1.$
If it used \earlyitrule,  or \strongitrule, then Statement~\ref{c:good-silent} completes the proof. 
Now we need to consider the various potential put rules $p$ applied in round $\phi+1$ in order to put the value for $\sigma x.$  \itrule and \gcrule are  not applicable in this case.

\ding{228}  Case $p$ applied \rgcrule: If exists a correct process in $\sigma$ then by Item~\ref{c:good-sending} we are done. If
$x\not\in G$ then all children nodes of $\sigma x$ are either correct or silent, and if $p$ applies the rule, every correct process can apply the same rule.  If $x\in G$, then there are $n-t-1$ correct children nodes of $\sigma x$, all of which will send the same value, and all will apply the \rgcrule, completing the proof of this case.

\ding{228}  Case $p$ applied \srule: using similar arguments as above, this case is applicable only if $x\not\in G$, and as the arguments above show, if any correct process applies this rule, all will.

For node $|\sigma|< \phi-1$ identical arguments to those used in the proof of \claimref{c:range2} complete the proof of this case.
\end{proof}

This completes the proof of \claimref{c:range3}.
	\end{proof}
	
	We now prove the theorem for the case of $\phi=1.$
	
	By assumption there is at most one faulty process, say $b$, that doesn't appear in $\FA$ of any correct process. 
	There are at most  two rounds of information exchange.  
	
	In the first round every process sends its input value.  By the end of the first round, at every $z$, $\IT_z(\bar\epsilon)=d_z.$ $\IT_z(z)=d_z$, and for every $x\in N\setminus \F_z$, $\IT_z=d_x,$ where $d_x$ is the value received from $x$, and for every $y\in \F_z$, $\IT_z=\bot$.  
	The only rule that may be applied by a correct process by the end of this round is the \earlyitrule.  
	
	Assume that $z$ applies the \earlyitrule by the end of round 1. 
	This can happen only when all inputs are $\bot$ or when $F_z=\emptyset$ and all input values are identical.  If this happen, $z$ sets $\RT_z(\bar\epsilon)=\IT_z(\bar\epsilon)=d_z.$  $z$ does not send any message in round 2.  
	Following that, every correct process $p$ that doesn't stop sends to every process the set of values it entered to $\IT_p(x)$ for every $x\in N.$
	If a correct process $z$ stops, all these values are identical, other than the values associated with $b$.  
	Moreover, for $z$ and any other correct process that did not send a message, all correct processes add to their $\IT$ the same value for it. 
	By the end of round 2, every correct process, $p$, that did not stop applies \lastroundrule to copy $\IT_p(\sigma)$, for $\sigma\in\Sigma_2$ to $\RT_p(\sigma)$.  
	By the previous discussion it is clear that all will have identical values regarding all node, other than maybe the nodes on $\sigma$ that include $b.$  
	Therefore, every correct process $p$ will be able to apply \rgcrule and will put the same value to $\bar\epsilon.$  
	
	This discussion shows, implicitly that all the items of the theorem hold for this case.
	
	Now consider the case that no correct $z$ stopped at the end of round 1.  In the second round every correct process sends 
	to every process the set of values it entered to $\IT_p(x)$ for every $x\in N.$ By the end of round 2 every correct process, $p$,  applies \lastroundrule to copy $\IT_p(\sigma)$, for $\sigma\in\Sigma_2$ to $\RT_p(\sigma)$.   By the end of the second round, for every $p,q,z$ correct processes $\RT_p(zp)=\RT_p(zq)=\RT_q(zp).$ 
	
	Since $b$ is the only potentially faulty process we conclude that for every $p,q,z\in N\setminus \{b\}$, $\RT_p(bz)=\RT_q(bz).$ 
	
	We now show that the theorem holds in this case. 
	
	To prove 
	that Item~\ref{c:good-sending} (Validity) holds, let's look at its three statements. 
	Statement~\ref{c:good-silent}  vacuously holds. Statement~\ref{c:good-sent} holds, since for every correct process that sends in the last round there is consensus.  For every $p$ in $G$ that sends in the first round, as we mentioned before, all processes, but $b$, sent the same value $d_p$ that $p$ sent in the first round, and by applying \rgcrule, which can be applied to node $p$, all reach consensus.  For every $p\in\FA$, the same arguments hold.
	
	To prove that Item~\ref{c:safety} holds, let's look at its four statements. 
	
	\begin{proof}[Proof of Statement~\ref{c:lock2d} when $\phi=1$]  By definition, node $p$, $p\in G$, can apply \gcrule only on node $\bar\epsilon$. Assume it resolved to $d,$ $d\neq\bot.$  
By definition $p$ observed at least 2 processes as voters to $d$, and it identified $n-t$ $\RT$-confirmed nodes.  All correct processes among them will never resolve to $\bot.$  
The only possibility that another correct process $q$ can resolve any to $\bot$ is node $b$. If node $b$ is $\RT$-confirmed, it has at least 2 children nodes $x,y$ such that $\RT_p(bx)=\RT_p(by)=\bot$.  
Since both $x$ and $y$ are necessarily correct processes, we conclude that $\RT_q(bx)=\RT_q(by)=\bot$.  
Moreover, for all $\RT$-confirmed nodes $z$ in $\RT_p$, except of node $b$, $\RT_p(z)=\RT_q(z)=d,$ since all are correct.  
The only rule $q$ may be able to apply to resolve $b$ to $\bot$ is \srule.  But \srule is not applicable to nodes of level 1.
	\end{proof}
	
	\begin{proof}[Proof of Statement~\ref{c:not-bot} and Statement~\ref{c:no-srule}  when $\phi=1$] These statements clearly hold since $\PT_p$ is defined only for nodes in level 1, and $\PT_q$ is not defined for level $\phi+1$.
	\end{proof}
	
	\begin{proof}[Proof of Statement~\ref{c:full-safety}  when $\phi=1$] Consider three cases,  if $x\in G$, then by Item~\ref{c:good-sending} we conclude equality.  Consider the case that $x=b$.  In this case, as we wrote above, for every $p,q,z\in N\setminus \{b\}$, $\RT_p(bz)=\RT_q(bz).$  Therefore, if $p$ applied a rule to conclude $b\in \PT_p$, so will $q$.
We are left with the case of $x=\bar\epsilon.$ As we just proved, on every node of level 1, $p$ and $q$ agrees.  All but one of them are nodes associated with correct processes.  The only node on level 2 on which $p$ and $q$ differ is node $xb$.  But because of coloring, both color node $xb$ by the value of $x$.  Therefore, on every node $\sigma$, $|\sigma|\ge 1$ if $\sigma\in\RT_p$, then $\sigma\in\RT_q$.  Therefore, every rule $p$ applies holds also for $q$.  This completes the proof of Statement~\ref{c:full-safety}.
	\end{proof}
	
	To prove that Item~\ref{c:two-rounds} holds consider the 3 possible levels.  \lastroundrule implies that it holds for  level $\phi+1$. Statement~\ref{c:full-safety} proves the rest of the cases.
	
	To prove that Item~\ref{c:good-safe} holds observe that by \propertyref{prop:init-fail}, it holds initially.  In the first round, no detection takes place. In the 2nd round, no correct process suspects any other correct process.

	This completes the proof of \theoremref{thm:main}.
\end{proof}

\else 
\fi

The following Theorem summarizes the properties needed from our protocol. 

\begin{theorem}\label{thm:agree}
For a $(t,\phi)$-adversary and protocol $\Df$ and $n\ge 3t+1$ 
and assuming that all correct processes participate in the protocol:
\beginsmall{enumerate}
\item\label{t:agree}
Every correct process outputs the same value. 
\item\label{t:same}
If the input values of all correct processes  are the same, this is the output value. Every correct process outputs it by round $2$ and stops by round $3$.
\item\label{t:tp1}
If $t+1$ of the correct processes hold an input value of $\bot$, then all correct processes output $\bot$ by the end of round $3$ and stop by the end of round $4$.
\item\label{t:fp2}
If the actual number of faults is $f_\phi<\phi$, then all correct processes  complete the protocol by the end of round $f_\phi+2$. 
\item\label{t:f0}
If the actual number of faults is $f_\phi=0$, and all correct processes start with the same initial value, then all correct processes  complete the protocol by the end of round $1.$
\item\label{t:f1}
If the actual number of faults is $f_\phi=1$, and all correct processes start with the same initial value, then all correct processes  complete the protocol by the end of round $2.$
\item\label{t:within1}
If a correct process outputs in round $k$, it stops by the end of round $k+1$.
\item\label{t:within2}
If a correct process stops in the end of  round $k$, all correct processes output by round $k+1$ and stop by round $k+2$.
\endsmall{enumerate}
\end{theorem}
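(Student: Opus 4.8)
The plan is to derive every item as a corollary of \theoremref{thm:main} (Detection/Validity/Safety/Liveness), \lemmaref{lem:input-agree}, \corollaryref{cor:partial} and \lemmaref{lem:FA}, reserving genuinely new work only for the early-stopping bound. Throughout I would use that the root $\bar\epsilon$ is only ever \emph{put} (never colored), so $\RT(\bar\epsilon)=\PT(\bar\epsilon)$ whenever it is defined. For \emph{Agreement} (item~\ref{t:agree}) I first establish termination: $\Df$ runs at most $\phi+1$ rounds, and by the end of round $\phi+1$ the \lastroundrule resolves every node of $\Sigma_{\phi+1}$, so a frontier exists everywhere and each correct process outputs by round $\phi+1$. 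For the value, by the Liveness part~(\ref{c:two-rounds}) of \theoremref{thm:main} all correct processes share the same set of resolved nodes at round $\phi+1$, so either all have $\bar\epsilon\in\RT$---and then the Safety part~(\ref{c:full-safety}) applied to $\bar\epsilon$ forces a common value and common Early Outputs---or none do, and all fall through to Final Output and emit $\bot$. It remains to reconcile this with \emph{early} outputs: an early Early Output of $v=\RT_p(\bar\epsilon)$ is consistent because Safety and Liveness propagate the same $v$ to every correct process. The delicate case is an early Final Output of $\bot$, where I must show that a correct process seeing a frontier with $\bar\epsilon\notin\RT$ is never contradicted by another process resolving $\bar\epsilon$ to $v\neq\bot$; the argument is that the frontier's resolved nodes carry globally consistent values (Safety), the root outcome is a deterministic function of that cut, and an unresolved root under a full frontier is exactly the ``no non-$\bot$ majority'' configuration that no resolve rule can later turn into a non-$\bot$ value (while \srootrule would instead yield $\bot$).

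\emph{Validity and the $\bot$-special cases} (items~\ref{t:same},~\ref{t:tp1}) are immediate from \lemmaref{lem:input-agree}: a $(t,\phi)$-adversary is a special case of a $(t,t)$-adversary, so item~\ref{t:same} is its first claim (equal inputs $d$ give output $d$ by round $2$, stopping by round $3$) and item~\ref{t:tp1} is its second claim ($t+1$ correct inputs of $\bot$ give output $\bot$ by round $3$, stopping by round $4$); the no-false-detection clause of that lemma also supports the stopping claims.

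For the \emph{stopping relations}, item~\ref{t:within1} follows because a correct process that outputs at round $k$ has either $\bar\epsilon\in\RT[k]$ or a frontier in $\RT[k]$, so every branch $\sigma$ has a prefix in $\RT[(k+1)-1]$; hence the \decayrule closes all branches in round $k+1$ and the Stopping rule fires by the end of round $k+1$. Item~\ref{t:within2} follows by applying the Validity statement~(\ref{c:good-silent}) to the silence of a process $p$ that stopped at the end of round $k$: its non-sending on every $\sigma\in\Sigma_k$ in round $k+1$ forces a resolved proper prefix of each $\sigma$ at every correct process by the end of round $k+1$, i.e.\ a frontier, so all output by round $k+1$, after which item~\ref{t:within1} gives stopping by round $k+2$.

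The \emph{early-stopping} items~\ref{t:fp2},~\ref{t:f0},~\ref{t:f1} are the heart of the statement and the only place needing a new argument on top of \theoremref{thm:main}; I expect this to be the main obstacle. Writing $f_\phi$ for the number of active faults and using \lemmaref{lem:FA} to track the commonly-detected set, I would run a clean-round argument: that set is non-decreasing and bounded by $f_\phi$, so among rounds $1,\dots,f_\phi+1$ some round $r^\ast\le f_\phi+1$ detects no new process, and in such a round the non-masked children (resp.\ grandchildren) of the root are seen to agree by every correct process, triggering \earlyitrule (resp.\ \strongitrule) at $\bar\epsilon$; the root is then resolved by the end of round $r^\ast$, and item~\ref{t:within1} gives completion by round $f_\phi+2$. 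Items~\ref{t:f0} and~\ref{t:f1} are the $f_\phi=0$ and $f_\phi=1$ specializations with equal inputs, where \earlyitrule fires at $\bar\epsilon$ already in round $1$ or round $2$. The hardest part will be the round accounting: proving that a clean round forces a \emph{root-level} early resolve rather than one deep in the tree that does not yet close all branches, and pinning the exact indices so the bound is $f_\phi+2$ (and $1$, $2$ in the best cases) rather than one round more.
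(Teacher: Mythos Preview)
Your derivations of items~\ref{t:same}, \ref{t:tp1}, \ref{t:within1}, \ref{t:within2} from \lemmaref{lem:input-agree} and from the Validity/Liveness clauses of \theoremref{thm:main} track the paper's own proof closely, and your treatment of item~\ref{t:agree} (including the delicate Final-Output case you isolate) is, if anything, more explicit than the paper's brief ``shared front'' argument. Those parts are fine.

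The gap is in item~\ref{t:fp2}. Your clean-round argument is the right instinct in a crash-fault model, but it does not transfer to the Byzantine setting: a round in which no \emph{new} process enters $\F$ tells you nothing about whether the children or grandchildren of $\bar\epsilon$ agree in $\IT$. Byzantine processes can send inconsistent values without being detected in that very round, and the \earlyitrule\ and \strongitrule\ fire at a node $\sigma$ only when \emph{that node's} children (resp.\ grandchildren) are seen to agree; there is no mechanism by which a detection-free round at depth $r^\ast$ forces agreement among the depth-$1$ or depth-$2$ nodes under $\bar\epsilon$. You yourself flag exactly this (``proving that a clean round forces a \emph{root-level} early resolve rather than one deep in the tree'') as the hard part, and I do not see how to close it along this line.

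The paper's argument is a pigeonhole on \emph{paths}, not on rounds. With only $f_\phi$ active faulty processes, every path in $\Sigma_{\phi+1}$ has a prefix $\tau p$ of length $k\le f_\phi+1$ whose last label $p$ is correct. One then shows \emph{that branch} closes by round $f_\phi+2$, by a case split on $k$: for $k\le f_\phi-1$, Validity (\theoremref{thm:main}, item~\ref{c:good-sent}) puts $\tau p$ into every correct $\RT$ by round $k+2\le f_\phi+1$, and the \decayrule\ closes the branch one round later; for $k=f_\phi+1$, the prefix $\tau$ already contains all the faulty processes, so all children of $\tau p$ are correct and \earlyitrule\ at $\tau p$ fires by round $f_\phi+2$; for $k=f_\phi$, at most one faulty process lies outside $\tau p$, and \strongitrule\ at $\tau p$ fires by round $f_\phi+2$. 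The early rules are applied at $\tau p$, not at the root; stopping follows because \emph{every} branch is closed this way. Items~\ref{t:f0} and~\ref{t:f1} are then just the special cases where the correct-ending prefix is $\bar\epsilon$ itself (so there your root-level intuition is correct), via \earlyitrule\ in round~$1$ and \strongitrule\ in round~$2$ respectively.
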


\ifdefined\LONG
\begin{proof} [Proof of \theoremref{thm:agree}]
	\ \\
	\noindent{\bf Proof of Statement~\ref{t:agree}}:
	By definition a correct process outputs a value once it identifies a frontier.  It is clear that by the end of round $\phi+1$ there is a frontier for every correct process.
	Define the front of $\RT$ to be: $\sigma x$ is in the front of $\RT$ if exists $p\in G$ such that $\sigma x\in\RT_p$ and for every $q\in G$,  $\sigma \not\in\RT_q$. \theoremref{thm:main} implies that if $\sigma$ is in the front of process $p$, within two rounds it will be in the front of any other correct process. Since all correct processes shares the front, then if $\bar\epsilon\in\RT_p$, it will be at every other correct process and vice versa.  Since a process does not stop for two rounds after it holds a frontier the first claim holds.
	
	\noindent{\bf Proof of Statement~\ref{t:same}}: \lemmaref{lem:input-agree} proves the second claim. 
	
	\noindent{\bf Proof of Statement~\ref{t:tp1}}:
	The proof of \theoremref{thm:main}  implies that \srootrule can be applied by the end of round 3  if there are $t+1$ correct processes that start with input $\bot.$ Thus, the third claim holds.
	
	\noindent{\bf Proof of Statement~\ref{t:fp2}}:
	Observe that if the actual number of faults is $f_\phi$ and $f_\phi<\phi,$ then for every $\sigma\in\Sigma_{\phi+1}$ there is a prefix of length $k$, $k\le f_\phi+1$ in which a correct process appears as the last node. If $k\le\phi-1$ then by \theoremref{thm:main}, by $k+2$ every correct process will have that prefix in its $\RT$ and will be able to apply \decayrule to close the branch by the end of round $\phi+2$.
	
	Consider a prefix $\tau p$ of length $\phi+1$.  
	By assumption $\tau$ contains all faulty  processes. 
	Therefore, by the end of round $\phi+2$, every correct process will be able to apply \earlyitrule to add $\tau p$ to $\RT$ and will close the branch.  Observe that sometimes more than one rule can be applied, but since we go down from the later rounds to the earlier ones, we happen to close the branch earlier.
	
	We are left with the case of $\tau p$ of length $\phi.$ There is at most one corrupt node, say $x$, that can send values relating to $\tau p$ that will be added to the $\IT$ of correct processes in rounds $\phi+1$ and round $\phi+2$. In round $\phi+1$ all correct processes  becomes children nodes and by the end of round $\phi+2$ all will add $\tau p$ to their $\PT$ and would be able to apply \strongitrule to close the branch.
	
	Thus, in all cases, by the end of round $\phi+2$ all correct processes  will close all branches and can output a value.
	
	\noindent{\bf Proof of Statement~\ref{t:f0}}: since there are no faults, all correct processes  apply \earlyitrule by the end of the first round to set a value to $\bar\epsilon$.
	
	\noindent{\bf Proof of Statement~\ref{t:f1}}: since there is a single fault, all correct processes  apply  \strongitrule by the end of the 2nd round  to set a value to $\bar\epsilon$.
	
	\noindent{\bf Proof of Statement~\ref{t:within1}}: The branch closing rules immediately imply that there can be at most one round between adding the final value to $\RT$ that produces the frontier, thus providing output, and closing of all branches that imply stopping the protocol.
	
	\noindent{\bf Proof of Statement~\ref{t:within2}}:  The first part of the statement holds, since if $p$ stops by the end of round $k$, it doesn't send anything in round $k+1$. \theoremref{c:good-sending} (Statement~\ref{c:good-silent}) imply that by the end of that round every correct process will output a value, and by the previous statement all will stop by the end of $k+2.$
\end{proof}
\else 
\fi


\section{Monitors}\label{sec:monitor}
We follow the approach of \cite{BG93-votes,GM93,GM98} with some modifications for guaranteeing early stopping.

In round $r=1$ we run $\mathcal{D}_t$ using the initial values. For each integer $k$, in round $1<r=1+4k < t-1$ we invoke  protocol $\mathcal{D}_{t-1-4k}$ whose initial values is either $\bot$ (meaning 
everything
is OK) or \bad (meaning that too many corrupt processes were detected). We call this sequence of protocols the {\em basic monitor sequence}.  
We will 
actually
run 4 such sequences.

\subsection{The Basic Monitor Protocol}
Each process $z$ stores two variables: $v \in D$, the current value, and $early$, a boolean value. Initially $v$ equals the initial input of process $z$ and $early :=\false$. Later, $early=\true$ will be an indicator that the next decision protocol must decide $\bot$ (because there is not enough support for \bad). Each process remembers the last value of $early_q$ it received from every other process $q$, even if $q$ did not send one recently.

Throughout this section we use the notation: $\bar{r}\equiv r \pmod 4$.
\begin{algorithm}[!ht]
\footnotesize
\SetNlSty{textbf}{}{:}

\lnl{m:1}  {\bf if} $\bar{r}= 1$:   \\

\nl\tb  \emph{if} $r<t-1$ \emph{then} \emph{invoke} protocol $\mathcal{D}_{t+1-r}$ with initial value $v_z$;\\

\lnl{m:2}  {\bf if} $\bar{r} = 2$: \\

\nl\tb at the end of the round:\\
\lnl{m:2a}\tb\tb  \emph{if} $|\FA| \ge r+3$ \emph{then} set $v_z:=\bad$

\nl\tb\tb\tb \emph{otherwise} set $v_z:=\bot$;\\

\lnl{m:3}  {\bf if} $\bar{r} = 3$:

\nl\tb send $v_z$ to all;\\

\nl \tb at the end of the round: \\
\nl\tb\tb \emph{if} $|\{q \mid v_q = \bad\}| \le  t$ \emph{then} set $early_z:=\true$

\nl \tb\tb\tb \emph{otherwise} set $early_z := \false$;\\

\lnl{m:0} {\bf if} $\bar{r}= 0$: 

\nl  \tb send $early_z$ 
to all;\\

\nl \tb at the end of the round: \\
\nl\tb\tb \emph{if} $|\{q \mid early_q=\true\}| \ge t+1$ \emph{then} 
set $v_z:=\bot$;

\nl\tb\tb \emph{if} every previously invoked protocol produced an output then set  $v_z:=\bot$.\\

\caption{
The Basic Monitor protocol (at process $z$)
}\label{alg:monitor}
\end{algorithm}

The monitor protocol runs in the background until the process halts.
The monitor protocol invokes a new $\D_\phi$ protocol every 4 rounds.
In each round, the monitor's lines of code are executed before running all the other protocols, and its end of round lines of code are executed 
before
ending  the current round in all currently running protocols. This is important, since it needs to detect, for example, whether all currently running protocols produced outputs for determining its variable for the next round. 
At the end of each round the monitor protocol applies the monitor\!\_halting and monitor\!\_decision rules below to determine whether to halt all the running protocols at once, or only to commit to the final decision value.

When a process is instructed to apply a monitor\!\_decision it applies the following definition. If it is instructed to halt (monitor\!\_halting), then if it did not previously apply the monitor\!\_decision, it applies monitor\!\_decision first and then halts all currently running protocols that were invoked by the monitor at once.

\begin{definition}[monitor\!\_decision]
A process that did not previously decide, {\bf decides} \bad, if any previously invoked protocol outputs \bad. Otherwise, it decides on the output of $\D_t$.
\end{definition}

When a process is instructed to  decide without halting, it may need to continue running all protocols for few more rounds to help others to  decide. We 
define
``halt by  $r+x$" to mean continue to run all active protocols until the end of round $\min\{r+x,t+1\}$, unless an halt is issued earlier.

\subsection{Monitor Halting and Decision Conditions}
Given that different processes may end various invocations of the protocols in different rounds we need a rule to make sure that all running protocols end by the end of round $f+2.$  
The challenge in stopping all protocols by the end of $f+2$ is the fact that individual protocols may end at round  $f+2$ and we do not have a room to exchange extra messages among the processes. 
This also implies that we need to have a halting rule at every round of the monitor protocol, since $f+2$ may occur at any round.

Each halting rule implies how  other rules need to be enforced in later rounds, since any process may be the first to apply a monitor\!\_halting  at a given round and we need to ensure that for every extension of the protocols, until everyone decides, all will reach the same decision despite the fact that those that have halted are not participating any more. The conditions take into account processes that may have halted. A process considers another one as halted if it doesn't receive any message from it in any of the concurrently running set of invoked protocols, monitors and the gossiping of $\F$.

To achieve that we add the following set of rules.

\noindent {\bf Monitor Halting Rules:}\\
\vspace{-1.5em}
\beginsmall{enumerate}
\item[$H_{\footnotesize\mbox{\sc \bad}}$.] Apply monitor\!\_halting if any monitor stops with output \bad.
Otherwise if any monitor outputs \bad, apply monitor\!\_decision now and monitor\!\_halting by $r+2.$

\item[$H_1$.\ \ \ \ ] Case $\bar{r}=1$:  
\beginsmall{enumerate}
\item
If all previously invoked protocols stopped, apply monitor\!\_halting. 
\item
Otherwise, if only the latest invoked protocol  did not stop and 
 $|\{q \mid early_q=\true  \mbox{ or $q$ halted}\}| \ge n-t$, 
 then apply monitor\!\_halting. 
 \item
Otherwise, if only the  latest invoked protocol  did not stop and 
$|\{q \mid early_q=\true  \mbox{ or $q$ halted}\}| \ge t+1$, 
 then apply monitor\!\_decision now and monitor\!\_halting by $r+2.$ 
\endsmall{enumerate}

\item[$H_2$.\ \ \ \ ] Case $\bar{r}=2$:  
\beginsmall{enumerate}
\item
 If all previously invoked protocols stopped, apply monitor\!\_halting.
 \item
Otherwise, if only the latest invoked protocol did not stop and 
 $|\{q \mid early_q=\true  \mbox{ or $q$ halted}\}| \ge n-t$
 was true in the previous round, 
 then apply monitor\!\_halting. 

 \item
Otherwise, if only the latest invoked protocol did not stop and 
$|\{q \mid early_q=\true  \mbox{ or $q$ halted} \}| \ge t+1$ 
was true in the previous round, then apply monitor\!\_decision and now and monitor\!\_halting by $r+1.$ 
\endsmall{enumerate} 

\item[$H_3$.\ \ \ \ ] Case $\bar{r}=3$: 
If all previously invoked protocols stopped, 
apply monitor\!\_halting.

\item[$H_4$.\ \ \ \ ]  Case $\bar{r}=0$: 
If all previously invoked protocols stopped and 
$|\{q\mid early_q=\true \mbox{ or $q$ halted}\}|\ge n-t$ 
then apply monitor\!\_halting.
\endsmall{enumerate}

\begin{lemma}\label{lem:haltf2}
If $n>3t$ and there are $f$, $f\le t$, corrupt processes then all correct processes apply monitor\!\_halting by the end of round $\min(t+1,f+2)$.
\end{lemma}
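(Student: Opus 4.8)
The plan is to show that by the end of round $\min\{t+1,f+2\}$ every correct process is forced to apply one of the monitor halting rules $H_{\bad},H_1,H_2,H_3,H_4$, and to organize the whole argument around a dichotomy on the number of commonly detected faults. The engine underneath is \theoremref{thm:agree}: the main invocation $\D_t$ has parameter $\phi=t$ and sees $f\le t$ faults, so by item~\ref{t:fp2} it completes by the end of round $f+2$ whenever $f<t$, and by the generic ``at most $\phi+1$ rounds'' bound it completes by round $t+1$ when $f=t$. Hence $\D_t$ itself stops by round $\min\{t+1,f+2\}$; the real content of the lemma is that the four background monitor sequences neither delay this nor leave the halting rules un-triggered.

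First I would dispose of the crude regime $f\ge t-1$, where $\min\{t+1,f+2\}=t+1$. Every invoked protocol $\D_{t+1-r}$ is started at some round $r<t-1$ and runs for at most $(t+1-r)+1$ rounds, hence stops by round $t+1$; the same bound holds for the three offset copies of the basic monitor sequence. Consequently, by the end of round $t+1$ \emph{all} previously invoked protocols have stopped, and because the four sequences are offset by one round, at round $t+1$ at least one of them is in a phase $\bar r\in\{1,2,3\}$ whose unconditional clause ($H_1$, $H_2$, or $H_3$) then applies. This forces monitor halting at every correct process by round $t+1$.

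The substantive case is $f<t-1$, with target round $f+2$, and here I would split on whether the commonly detected set ever grows large relative to the current round. If at some $\bar r=2$ round a correct process has $|\FA|\ge r+3$, then by the gossip consistency of the fault sets (\lemmaref{lem:FA}) enough correct processes feed \bad into the monitor $\D_{t+1-r}$, so that by multi-valued validity that monitor outputs \bad and rule $H_{\bad}$ forces halting; since $|\FA|\le f$ this can only occur at rounds $r\le f-3$, so the output and the ensuing halt land well inside round $f+2$. Otherwise no correct process ever raises \bad, so every freshly invoked monitor receives the unanimous input $\bot$ and, by \theoremref{thm:agree} (items~\ref{t:same}, \ref{t:f0}, \ref{t:f1}), stops within a constant number of rounds of its invocation; moreover at most $f\le t$ processes signal \bad, so $early=\true$ is computed at all $\ge n-t$ correct processes. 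Feeding this into the $early$-based clause of whichever of $H_1,H_2,H_4$ is active at round $f+2$ (``only the latest invoked protocol did not stop and $|\{q: early_q=\true \text{ or } q \text{ halted}\}|\ge n-t$'') then forces halting there.

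The step I expect to be the main obstacle is the exact timing coordination in this last regime. The \bad and $early$ signals are computed and exchanged over a full four-round monitor cycle, correct processes may have stopped at staggered rounds, and the $early$-based clauses tolerate only \emph{one} un-stopped (latest) protocol even though up to four monitors across the offset sequences are young at round $f+2$. I would therefore have to argue carefully that (i) the fast-termination items of \theoremref{thm:agree} applied to the all-$\bot$ monitors leave at most one genuinely un-stopped protocol at the target round, (ii) the disjunct ``or $q$ halted'' correctly absorbs the correct processes that stopped earlier, so their silence is not misread as lack of $early$-support, and (iii) the moving threshold $r+3$ for raising \bad and the thresholds $t+1$ and $n-t$ for the $early$-based rules leave no intermediate configuration in which neither halting path fires. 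Reconciling these thresholds with the validity precondition $|\FA|\ge r-1$ needed to treat $\D_{t+1-r}$ as a bona fide $(t,t+1-r)$-adversary instance is precisely where the four-round slack built into the construction must be spent.
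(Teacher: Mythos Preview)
Your dichotomy is the wrong one, and it leaves a genuine hole. You split on ``does some correct process ever set $v:=\bad$'' and, in the affirmative branch, argue that the monitor will output $\bad$ and $H_{\bad}$ fires. But neither step is justified. The test at $\bar r=2$ is on $\FA$, and \lemmaref{lem:FA} only gives $\FA_p\subseteq\F_q$, not $\FA_p\subseteq\FA_q$; so one correct process having $|\FA|\ge r+3$ does not force the others to set $v:=\bad$. You can therefore have a mix of $\bad$ and $\bot$ inputs among the correct processes. Multi-valued validity runs the other direction (if the output is $v\neq\bot$ then $\ge t+1$ correct started with $v$), so nothing guarantees the output is $\bad$, and $H_{\bad}$ may never fire. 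In this mixed case the monitor is neither a ``fast'' all-$\bot$ instance nor one that outputs $\bad$, and your argument says nothing about when it stops.

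The paper's dichotomy is per monitor $\D_\phi$, not global, and crucially it never relies on $H_{\bad}$ for this lemma. If \emph{any} correct process raised $\bad$ at the preceding $\bar r=2$ round, then $|\FA_p|\ge r_\phi$ there, and via \lemmaref{lem:FA} and three rounds of gossip the common detected set at the invocation round $r_\phi$ is large enough that $\D_\phi$ faces a genuine $(t,\phi)$-adversary; \theoremref{thm:agree}(\ref{t:fp2}) then bounds its stopping round. If no correct process raised $\bad$, all correct inputs are $\bot$ and \lemmaref{lem:input-agree} gives the three-round bound. Either way every invoked protocol stops by $\min\{t+1,f+2\}$. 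The timing coordination you flag as ``the main obstacle'' is then discharged not by a generic $early$-threshold argument but by an explicit case analysis on $\min\{t+1,f+2\}\bmod 4$, showing for each residue which of $H_1$--$H_4$ is armed and why its side condition (all stopped, or all but the latest stopped together with the $n-t$ $early$ count) is met. Your sketch gestures at this but does not carry it out; in particular, the $\bar r=3$ case has no $early$ clause at all, so you must show outright that every protocol including the most recent one has stopped there.
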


\ifdefined\LONG
\begin{proof}

We need to show that all previously invoked protocols halt by the end of round $\min(t+1,f+2)$. 
Observe that \theoremref{thm:agree} (Statement~\ref{t:fp2}), implies that $\D_t$ itself is stopped by $\min(t+1,f+2)$.

By definition, protocol $\D_\phi$ is invoked in round $r_\phi$, where $\phi=t+1-r_\phi.$ 
By \theoremref{thm:agree} (Statement~\ref{t:fp2}), $\D_\phi$ is stopped by $\min(\phi+1,t_\phi+2)$, if the upper bound on the number of faults (that were not detected by all correct processes before invoking the protocol) is $t_\phi$.  Note that if the number of faults that are not detected by all is higher than $t_\phi$ the protocol may not stop by $\phi+1$.  

Let's study the  number of faults that are not detected by all correct processes when $\D_\phi$ is invoked.  
\figureref{alg:monitor} Line~\ref{m:2} indicates that if any correct $p$ set $v_p:=\bad$ in round $r_\phi-3$, then, by \lemmaref{lem:FA}, the number of faults that are not detected by all correct processes when $\D_\phi$ is invoked is at most $t-r_\phi.$  
In such a case, by \theoremref{thm:agree},  $\D_\phi$ will be stopped by round $\min(\phi+1,t_\phi+2)$, where $t_\phi\le t-r_\phi.$ Let us call these $\D_\phi$ regular-protocols.

If no correct $p$ sets  $v_p:=\bad$, then all correct processes invoke $\D_\phi$ with $v=\bot$, therefore no matter how many faults are present (as long as not more than $t$), \lemmaref{lem:input-agree} guarantees that $\D_\phi$ is stopped within 3 rounds, and all outputs are obtained within 2 rounds. Let us call these $\D_\phi$ fast protocols.

For regular-protocols we need to prove that the extra conditions  hold.  In addition, for fast-protocols we need also to prove that the protocol that was invoked recently will also stop in time. 

Let us consider the $r\pmod 4$ round at which $\min(t+1,f+2)$ falls. 

\noindent {\bf Case $\min(t+1,f+2)\pmod 4=0$}: By $H_4$ we need to show that all previously invoked protocols will be stopped and that 
$|\{q\mid early_q=\true \mbox{ or $q$ halted}\}|\ge n-t$, at every correct process. 

For  regular-protocols, since all are stopped by round $\min(t+1,f+2)$ then when correct processes executed Line~\ref{m:2}, just before stopping, none would set $v:=\bad.$  Therefore, all will set $v$ to $\bot$ and later $early$ to $\true$.  Thus, the extra property for $H_4$ holds, and all will halt.

For fast-protocols, since no process sets $v$ to $\bad$, every previously invoked protocol stops within at most 3 rounds (\theoremref{thm:agree}, Statement~\ref{t:same}).  The latest protocol was invoked 3 rounds ago, and we are done. The arguments for the extra condition in $H_4$ are the same as for the regular-protocols.

\noindent {\bf Case $\min(t+1,f+2)\pmod 4=3$}: By $H_3$ we need to show that all previously invoked protocols will be stopped.

The arguments for regular-protocols 
and for fast protocols are the same, the latest invocation was two rounds ago, and therefore, by \theoremref{thm:agree} (Statement~\ref{t:same}), by the end of the current round all will be stopped.

\noindent {\bf Case $\min(t+1,f+2)\pmod 4=2$}: By $H_2$ we need to show that either all previously invoked protocols have stopped by the end of the current round, or all but the last one and the extra condition holds. 

If $\min(t+1,f+2)=t+1,$ then no protocol was invoked in the previous round, by definition.  All previous regular or fast protocols will be stopped by the end of the current round.  

If $\min(t+1,f+2)=f+2,$ by \theoremref{thm:agree} (Statement~\ref{t:fp2}), using similar arguments as above, all previous protocols will be stopped by the end of the current round, except, maybe the last protocol that was invoked in the previous round. Observe that correct processes set up their $v$ four rounds ago. Since the current round is $f+2$, then the round at which the processes executed Line~\ref{m:2} in \figureref{alg:monitor}  is $f-2$ and therefore no process could have more than $f$ faults, and would have set $v:=\bot.$ Therefore, every correct process  that haven't halt yet would send $early=\true$ two rounds ago, and therefore the extra condition for $H_2$ holds.

\noindent {\bf Case $\min(t+1,f+2)\pmod 4=1$}: By $H_1$ we need to show that either all previously invoked protocols have stopped by the end of the current round, or all but the last one and the extra condition holds. 

If $\min(t+1,f+2)=t+1,$ then no protocol was invoked in the current round, by definition.  All previous regular or fast protocols will be stopped by the end of the current round.

If $\min(t+1,f+2)=f+2,$ by \theoremref{thm:agree} (Statement~\ref{t:fp2}) using similar arguments as above, all previous protocols will be stopped by the end of the current round, except, maybe the last protocol that was invoked in the previous round. Observe that correct processes set up their $v$ three rounds ago. Since the current round is $f+2$, then the round at which the processes executed Line~\ref{m:2} in \figureref{alg:monitor}  is $f-1$ and therefore no process could have more than $f$ faults, and would have set $v:=\bot.$ Therefore, every correct process  that haven't halt yet would send $early=\true$ two rounds ago, and therefore the extra condition for $H_1$ holds.
\end{proof}
\else 
\fi

\begin{lemma}\label{lem:haltdecide}
If the first process applies monitor\!\_halting in round $r$ on $d$ then every correct process applies 
monitor\!\_decision by round $\min \{r+4, f+2 ,t+1\}$, applies monitor\!\_halting by round $\min \{r+5, f+2 ,t+1\}$, and obtains the same decision value,
$d$.
\end{lemma}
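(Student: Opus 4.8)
The plan is to split the statement into its three assertions and dispatch the two bounds involving $f+2$ and $t+1$ immediately: \lemmaref{lem:haltf2} already guarantees that every correct process applies monitor\!\_halting by round $\min\{t+1,f+2\}$, and since applying monitor\!\_halting forces a prior monitor\!\_decision, the decision also lands by that round. Hence the quantities $\min\{r+4,f+2,t+1\}$ and $\min\{r+5,f+2,t+1\}$ only carry new content through their $r+4$ and $r+5$ terms, which bind precisely in the early-stopping regime where the first halter fires well before round $f+2$. So the real work is (i) the $r+4$/$r+5$ propagation and (ii) agreement on the single value $d$.

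For agreement I would argue that the decided value is a deterministic function of the invoked protocols' outputs, with \bad given priority. By \theoremref{thm:agree}(\ref{t:agree}) each invoked $\Df$ enjoys agreement among correct processes, and by \theoremref{thm:agree}(\ref{t:within2}) a stop propagates so that every correct process obtains each protocol's output within one further round; thus ``some invoked protocol output \bad'' is a consistent event across correct processes. Rule $H_{\footnotesize\mbox{\sc bad}}$, with its ``decide now, halt by $r+2$'' grace, is what guarantees that if $d=\bad$ (the first halter saw a \bad output) then that output reaches every correct process before it halts, so all decide \bad; and if $d\neq\bad$ then $d$ is the output of $\D_t$, identical everywhere by agreement. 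The subtle point, which I treat as the safety core, is ruling out that a process deciding the non-\bad value is later contradicted by a \bad output at another process. First I would use the semantics of $early$ together with the multi-valued validity of $\Df$ (a protocol outputs \bad only if at least $t+1$ correct processes input \bad) to show that whenever a non-\bad rule fires --- which requires $|\{q\mid early_q=\true \text{ or } q \text{ halted}\}|\ge t+1$ --- too few correct processes remain to feed \bad into any pending protocol, so no pending invocation can output \bad.

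For the timing bounds I would perform a case analysis on the rule and residue $\bar r$ used by the first halter. The engine of propagation is twofold: every protocol stopped at the first halter by round $r$ is stopped at every correct process by round $r+2$ (\theoremref{thm:agree}(\ref{t:within2})), and a halted process is permanently counted in the sets $\{q\mid early_q=\true \text{ or } q \text{ halted}\}$, so these counts are monotone and grow as processes halt. Since there is a halting rule for each residue class ($H_1,H_2,H_3,H_4$), a halting opportunity recurs every round, so once all relevant protocols are stopped everywhere (by $r+2$) the next round whose rule needs only ``all previously invoked protocols stopped'' fires quickly; the residue-$0$ rule $H_4$ and the just-invoked-protocol clauses $H_1(b,c)$/$H_2(b,c)$ additionally demand the $n-t$ or $t+1$ count, and here I would invoke the snowball --- the first halter plus the correct processes that follow it, together with those already reporting $early=\true$, cross the $n-t$ threshold in time. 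The ``decide now / halt by $r+1$ (or $r+2$)'' clauses are exactly what produce the one-round gap, giving decision by $r+4$ and halting by $r+5$.

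The step I expect to be the main obstacle is the safety half of the agreement claim kept simultaneously inside the timing window: ruling out the race in which one correct process halts on the non-\bad value while a still-running invocation is about to output \bad at another correct process, while also certifying that halting reaches every correct process within four/five rounds. Making the $early$/halted counters, the multi-valued validity thresholds, and the per-residue grace periods line up so that both agreement and the $r+4$/$r+5$ bounds hold at once is where the real effort lies; the residue bookkeeping itself is routine once \theoremref{thm:agree} and \lemmaref{lem:haltf2} are in hand.
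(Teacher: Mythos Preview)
Your overall structure matches the paper's: dispatch the $f+2$ and $t+1$ bounds via \lemmaref{lem:haltf2}, then case-analyze on the rule used by the first halter, with the safety race (non-\bad halter versus a later \bad output) identified as the crux. Your appeal to what is essentially \theoremref{thm:agree}(\ref{t:tp1}) through the meaning of $early$ is the right tool where it applies.

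The gap is in your safety argument. You assert that a non-\bad halting rule ``requires $|\{q\mid early_q=\true \text{ or } q \text{ halted}\}|\ge t+1$'', but $H_3$ and subrules $H_1(a)$, $H_2(a)$ fire on ``all previously invoked protocols stopped'' alone, with no $early$ condition. For these cases your validity argument is unavailable, and you need a different mechanism to rule out a \bad output from a protocol invoked \emph{after} the first halter stops. The paper's device is the last line of the $\bar r = 0$ step: once every previously invoked protocol has produced an output, each remaining correct process sets $v:=\bot$; hence any subsequently invoked $\Df$ has all its correct participants starting with $\bot$. Crucially, the halted processes do not participate in that new invocation, so \theoremref{thm:agree} (which assumes full correct participation) does not directly apply; the paper invokes \corollaryref{cor:partial} to conclude the output is $\bot$ nonetheless. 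You mention neither the $v:=\bot$ reset nor the partial-participation issue, and without them the $H_3$ and $H_1(a)/H_2(a)$ cases do not close, either for safety or for the $r+4/r+5$ timing.

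A smaller point: the $early$ values are exchanged only in the $\bar r=0$ round, once per four-round cycle, so the ``snowball'' you describe is not round-by-round; the per-case analysis in the paper tracks which $early$ snapshot is visible at each step, and the $r+5$ bound is actually tight in the $H_2(a)$ case, so calling the residue bookkeeping routine understates it.
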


\ifdefined\LONG

\begin{proof}
Let $p$ be a correct process applying monitor\!\_halting in the earliest round that any correct process applies it. 

Observe that in some of the halting rules a process decides before the last invoked protocol outputs a value.  There may be cases that one process halts and other processes continue to run and even invoke an additional protocol after the halting.  We later prove that whenever these cases happen, the decision value is the same and it not \bad. We show that any protocol whose output is not taken into account by any correct process must output $\bot$.

Consider first the case that $p$ halts with output \bad.  
By \theoremref{thm:agree} (Statement~\ref{t:agree} and Statement~\ref{t:within2}), if  $p$  halts with output \bad
and if the output of that protocol is not ignored by any correct process then all correct processes will output \bad by next round and will halt within two rounds. This will lead to unanimous decision. 

So pending on the fact that we later prove that any protocol whose output is not taken into account by any correct process will output $\bot$, we are left to consider the case that $p$ does not output \bad.

If $r=\min(t+1,f+2)$, we are done by \lemmaref{lem:haltf2} (and  \theoremref{thm:agree}, Statement~\ref{t:agree}). Since every correct process considers the outputs of the same set of protocols,  the decision value is the same at every correct process.

Consider the various halting rules used by $p$ to apply monitor\!\_halting, and let $r$ be the round at which it was applied.

\noindent {\bf Case $p$ uses $H_1$}:  There are three possibilities, one in which $p$ noticed that all previously invoked protocols stopped.  In this case,   \theoremref{thm:agree} (Statement~\ref{t:within2}) implies that all correct processes 
will observe that all previously invoked protocols reported output by the end of $r+1$  and
will observe that all previously invoked protocols have stopped by the end of round $r+2$ and will use rule 
$H_3$ 
to apply monitor\!\_halting. All correct obtain the same decision value, since all will consider the same set of protocols and, by 
 \theoremref{thm:agree} (Statement~\ref{t:agree}) and the decision rule, will decide the same.

Otherwise, when $p$ executed round $r$ it noticed that by the end of that round all previous protocols stopped and only the one that started at the beginning of round $r$ did not stop yet
and the values of $early$ that $p$ received in round $r-1$ imply that   $|\{q \mid early_q=\true  \mbox{ or $q$ halted}\}| \ge n-t$.  Since no process halted earlier, in round $r-1$ every correct process  sets $v:=\bot.$ By \lemmaref{lem:input-agree}, the protocol that started in round $r$ will produce output of $\bot$ in round $r+1$ at all correct processes that did not stop earlier, and will stop by round $r+2.$ Thus, every correct process will apply either $H_2$ or $H_3$ and will reach the same decision.

Otherwise, when $p$ executed round $r-2$ it noticed that by the end of that round all previous protocols stopped and only the one that started at the beginning of round $r-2$ did not stop yet. 
Moreover, $p$ received at the beginning of round $r-3$,  $|\{q \mid early_q=\true  \mbox{ or $q$ halted}\}| \ge t+1$.  
Since no correct halted earlier, the instruction to set the value for $early$ implies that there was a correct process $q$ that set its $early_q$ to $\true$ in round $r-4$. 
Thus, $q$ received less than $t$ \bad. 
This implies that there are $t+1$ correct processes with $v=\bot$.  \lemmaref{lem:input-agree}, implies that the last protocol starting in the beginning of round $r-2$ will output value $\bot$ by the end of round $r$ and stop by the end of round $r+1$.  
By the end of round $r$ all correct processes will observe the outputs of all previously invoked protocols.
Therefore, by the end of round $r+1$ all correct processes that did not apply monitor\!\_halting already, will either be able to apply monitor\!\_halting by the end of that round, or will set $v:=\bot$, since all previously invoked protocols produced output and even stopped.  
Since the latest invoked protocol is guarantee to produce an output of $\bot$, those that have halted will reach the same decision. 
Notice that those processes that do not halt will start another protocol in which every correct process  that invoked it has input $\bot$ and the rest are not participating. By \corollaryref{cor:partial},  by the end of round $r+3$ they will decide the same decision value and will halt by the end of round $r+4.$

\noindent {\bf Case $p$ uses $H_2$}: As in the previous case,
there are three possibilities, one in which $p$ noticed that all previously invoked protocols stopped.  In this case,   \lemmaref{lem:input-agree} implies that all correct processes will observe that all previously invoked protocols reported output by the end of $r+1$ and have stopped by the end of round $r+2$. Some may use rule $H_3$ or rule $H_4$ to  apply monitor\!\_halting and decide the same, and some will invoke the next protocol with input $\bot$ and will reach the same decision by round $r+4$ and will halt by the end of round $r+5.$ 

Otherwise, when $p$ executed round $r$ it noticed that by the end of that round all previous protocols stopped and only the one that started at the beginning of round $r-1$ did not stop yet
and the values of $early$ that $p$ received in round $r-2$ imply that   $|\{q \mid early_q=\true  \mbox{ or $q$ halted}\}| \ge n-t$.  Since no correct process  halted earlier, in round $r-2$ every correct process  sets $v:=\bot.$ The protocol that started in round $r-1$ will produce output of $\bot$ in round $r$ and stop by round $r+1.$ Thus,  every correct process  will reach the same decision and will use rule $H_3$ to halt by the end of round $r+1$.

Otherwise, when $p$ executed round $r-1$ it noticed that by the end of that round all previous protocols stopped and only the one that started at the beginning of round $r-2$ did not stop yet. 
Moreover, $p$ received in round $r-3$,  $|\{q \mid early_q=\true  \mbox{ or $q$ halted}\}| \ge t+1$.  And since no correct process  halted earlier, as in the case for halting rule $H_1$, we are done. 

\noindent {\bf Case $p$ uses $H_3$}: 
Here 
we need to consider 
the case
were all previously invoked protocols were stopped.
In 
this case
every other correct process that did not apply monitor\!\_halting in round $r$ will notice currently running protocols producing outputs by the end of round $r+1$ (\theoremref{thm:agree}, Statement~\ref{t:within2}) and stopping by the end of round $r+2.$  Therefore, by the end of in round $r+1$ every correct process  that will not halt by the end of round $r+1$  will set $v:=\bot$.  Thus, all correct processes participating in the new protocol in round $r+2$ will have an input $\bot$, and every correct process  not participating  will assume to have an input $\bot$. Thus, 
 (\corollaryref{cor:partial}) by the end of round $r+3$ that protocol produces an output, and all decides the same decision value and  halt by the end of round $r+4.$

\noindent {\bf Case $p$ uses $H_4$}: 
Here  we need to consider the case where all previously invoked protocols were stopped, and, in addition,  $p$ observes $|\{q\mid early_q=\true \mbox{ or $q$ halted}\}|\ge n-t$, which leads to halting by the end of round $r$. 
In this case, every other correct process that did not apply monitor\!\_halting in round $r$ will notice all previously invoked protocols  producing outputs by the end of round $r+1$ and stopping by the end of round $r+2$ (\theoremref{thm:agree}, Statement~\ref{t:within2}). The property $|\{q\mid early_q=\true \mbox{ or $q$ halted}\}|\ge n-t$ implies that by the end of round $r+1$ or $r+2$ every correct process will notice $|\{q\mid early_q=\true \mbox{ or $q$ halted}\}|\ge t+1$.   
By the end of round $r+1$ all correct processes that did not halt in round $r$, but noticed that all previously invoked protocols stopped by the end of round $r+1$ will apply monitor\!\_halting in that round.  Those that will notice that all previously invoked protocols, except the one starting in round $r+1$, have stopped, will apply monitor\!\_halting.  The same arguments as for the case of using rule $H_3$, the decision value is identical at all correct processes.

By the end of round $r+2$, all other correct processes, that did not already apply monitor\!\_halting, will either observe that  all previously invoked protocols have stopped  and will  apply monitor\!\_halting, or
 will observe that all previously invoked protocols   except the one starting in round $r+1$ have stopped  and will have the condition that $|\{q\mid early_q=\true \mbox{ or $q$ halted}\}|\ge t+1$ and will  apply monitor\!\_decision by the end of round $r+2$ and will halt by the end of round $r+3$, thus potentially ignoring the output of the last protocol. Again, using previous arguments, all decision values are the same.
\end{proof}
\else 
\fi

Lemma \ref{lem:haltf2} and \ref{lem:haltdecide} complete the correctness part of \theoremref{thm:full}.  
To simplify the polynomial considerations we look at a pipeline of monitors.

\subsection{Monitors Pipeline}\label{sec:monitor-pipeline}
The basic monitor protocol runs a sequence of monitors and tests the number of faults' threshold every 4 rounds (Line~\ref{m:2a}).   This allows the adversary to expose more faults in the following round, and be able to further expand the tree before the threshold is noticed the next time the processes execute Line~\ref{m:2a}. To circumvent this we will run a pipeline of 3 additional sequences of monitors on top of the basic one appearing above. Doing this we obtain that in every
round $r$ one of the 4 monitor sequences will be testing the threshold on the number of faults

Monitor sequence $i,$ for $1\le i\le 4$ begins in round $i$ and invokes protocols every 4 rounds, in every round 
 $r$,   $1<r=i+4k < t-1$, it  invokes  protocol $\mathcal{D}_{t-i-4k}$.  Monitor sequence 1 is the basic monitor sequence defined in the previous subsection. 
Each monitor sequence independently runs the basic monitor protocol ( \figureref{alg:monitor}) every 4 rounds.  In the monitor protocol, the test $\bar{r} = j$, which stands for $\bar{r}\equiv r \pmod 4$ in the basic monitor sequence, is replaced with $\bar{r_i}=j$, which stands for $\bar{r_i}\equiv r+1-i \pmod 4=j$  (naturally only for $r+1-i >0$).  Each of the four monitor sequences decides and halts separately, as in the previous section above.

Notice that protocol $\mathcal{D}_{t}$ is invoked only by the basic sequence ( Sequence 1).  For each of the three other monitor sequences, the decision rule is: decide  \bad, if any invoked protocol (in this sequence) outputs \bad, and $\bot$ otherwise. Observe that Lemma \ref{lem:haltf2} and \ref{lem:haltdecide} hold for each individual sequence.

We now state the global decision and global halting rules:

\begin{definition} [Global Halting]
If any monitor sequence halts with $\bad$, or all 4 monitor sequences halt, the process halts. 
\end{definition}

\begin{definition}
The global\_decision is the output of $\mathcal{D}_{t}$, unless  
any monitor sequence returns \bad, in which case the decision is $\bad.$
\end{definition}

The following are immediate consequences of Lemma \ref{lem:haltf2} and \ref{lem:haltdecide} and the above definitions. 
\begin{corollary}\label{lem:haltf2-glboal}
	If $n>3t$ and there are $f$, $f\le t$, corrupt processes then all correct processes halt by the end of round $\min(t+1,f+2)$.
\end{corollary}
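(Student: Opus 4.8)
The plan is to reduce the global halting statement to the per-sequence guarantee of \lemmaref{lem:haltf2} using the Global Halting definition, and to use \lemmaref{lem:haltdecide} only to handle the single cross-sequence wrinkle introduced by the $\bad$ trigger. First I would recall, as noted immediately before the corollary, that \lemmaref{lem:haltf2} and \lemmaref{lem:haltdecide} apply to each of the four monitor sequences individually, with all round bounds measured in the global round number. Thus for each $i\in\{1,2,3,4\}$, \lemmaref{lem:haltf2} yields that every correct process applies monitor\!\_halting in sequence $i$ — that is, sequence $i$ halts at that process — by the end of round $\min(t+1,f+2)$. The bridge to the global statement is the Global Halting definition: a process halts globally once all four sequences halt at it, or earlier if some sequence halts with $\bad$. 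Since each of the four sequences halts at every correct process by the end of round $\min(t+1,f+2)$, the ``all four sequences halt'' trigger fires by that round, and every correct process halts globally by the end of round $\min(t+1,f+2)$.

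The point requiring care — and the reason \lemmaref{lem:haltdecide} is cited alongside \lemmaref{lem:haltf2} — is that when a correct process halts some sequence with output $\bad$ it halts \emph{globally} at once, thereby abandoning the other three sequences; I must ensure this early drop-out cannot push any sequence's halting past the bound. I would resolve this with a case split on whether any sequence is ever halted with $\bad$ at some correct process. If some sequence $i$ is first halted on $\bad$ at round $r$, then \lemmaref{lem:haltdecide} applied to sequence $i$ shows every correct process halts sequence $i$ on the same value $\bad$ by round $\min\{r+5,\,f+2,\,t+1\}\le\min(t+1,f+2)$, and by the Global Halting rule each such $\bad$-halt forces an immediate global halt; hence the global halt occurs by $\min(t+1,f+2)$ in this case. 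If on the other hand no sequence is ever halted with $\bad$, then no cross-sequence global halt ever precedes the natural halts: each sequence evolves purely under its own logic until it halts on $\bot$, so the hypotheses of \lemmaref{lem:haltf2} hold for every sequence and the two-step argument of the first paragraph applies directly.

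I expect the main (and essentially the only) obstacle to be exactly this bookkeeping around the $\bad$ trigger: confirming that a process abandoning its remaining sequences upon a global $\bad$-halt does not invalidate the per-sequence hypotheses of \lemmaref{lem:haltf2}. This is handled by the observations that \lemmaref{lem:haltdecide} already accounts for processes that halt at differing rounds, that a single $\bad$-halt by itself suffices to trigger a global halt, and that in the no-$\bad$ case no premature abandonment occurs at all. Everything else is a direct combination of \lemmaref{lem:haltf2} and \lemmaref{lem:haltdecide} with the Global Halting definition, matching the paper's description of the corollary as an immediate consequence.
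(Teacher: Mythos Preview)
Your proposal is correct and matches the paper's approach: the paper offers no proof beyond declaring the corollary an ``immediate consequence of Lemma~\ref{lem:haltf2} and~\ref{lem:haltdecide} and the above definitions,'' and your argument is precisely the natural unpacking of that claim via the Global Halting definition. Your case split on whether some sequence ever halts with $\bad$, and the use of \lemmaref{lem:haltdecide} to handle the cross-sequence interference in the $\bad$ case, is more explicit than anything the paper writes down but is exactly the bookkeeping one would need to justify the word ``immediate.''
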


\begin{corollary}\label{lem:haltdecide-global}
	If the first correct process halts in round $r$ on $d$ then every correct process applies 
	global\!\_decision by round $\min \{r+4, f+2 ,t+1\}$, halts by round $\min \{r+5, f+2 ,t+1\}$, and obtains the same decision value.
\end{corollary}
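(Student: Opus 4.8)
The plan is to derive this global statement directly from the two per-sequence results, \lemmaref{lem:haltf2} and \lemmaref{lem:haltdecide}, which (as already observed) hold verbatim for each of the four pipelined monitor sequences. Let $p$ be the first correct process to halt globally, in round $r$ with global\_decision value $d$. By the Global Halting rule, $p$'s halt was triggered in exactly one of two ways: either (a) some monitor sequence $j$ halted at $p$ with output $\bad$, or (b) all four monitor sequences halted at $p$ and none of them halted on $\bad$. In either situation $p$ had applied monitor\_halting, by the end of round $r$, to every sequence named by the triggering condition. Hence for each such sequence $i$, the \emph{first} correct process to apply monitor\_halting in sequence $i$ did so in some round $r_i\le r$, which is the hook that lets us invoke \lemmaref{lem:haltdecide} per sequence.

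In case (a) I would apply \lemmaref{lem:haltdecide} to sequence $j$ with first-halting round $r_j\le r$. Since \lemmaref{lem:haltdecide} makes the sequence-$j$ decision identical at every correct process, and $p$ obtained $\bad$, that common value is $\bad$. Therefore every correct process applies monitor\_decision for sequence $j$ (obtaining $\bad$) by round $\min\{r_j+4, f+2, t+1\}\le \min\{r+4, f+2, t+1\}$ and applies monitor\_halting for $j$ by $\min\{r_j+5, f+2, t+1\}\le \min\{r+5, f+2, t+1\}$. By the global\_decision rule, a $\bad$ returned by any sequence forces the global decision to $\bad=d$, and the Global Halting rule fires as soon as sequence $j$ halts on $\bad$; so all correct processes decide $d$ and globally halt within the claimed bounds.

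In case (b) I would apply \lemmaref{lem:haltdecide} to each sequence $i\in\{1,2,3,4\}$ with its first-halting round $r_i\le r$, giving a common per-sequence decision $d_i$ reached by monitor\_decision at every correct process by $\min\{r+4, f+2, t+1\}$ and followed by monitor\_halting by $\min\{r+5, f+2, t+1\}$. Because $p$ saw no $\bad$, uniformity forces $d_i\neq\bad$ for all $i$, so no correct process ever observes a $\bad$ sequence output; by the global\_decision rule each correct process then outputs the value of $\D_t$. That value is common across correct processes by \theoremref{thm:agree} (Statement~\ref{t:agree}), so all obtain the same $d$, and the Global Halting rule fires once all four sequences have halted. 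For the $f+2$ and $t+1$ caps, independent of $r$, \lemmaref{lem:haltf2} guarantees that every correct process applies monitor\_halting in each sequence by the end of round $\min(t+1,f+2)$, so global halting and global\_decision occur by $\min(t+1,f+2)$ as well; combining this with the $r+4$ and $r+5$ bounds yields the stated minima.

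The main obstacle is purely the bookkeeping that links the \emph{single} global-halting round $r$ of $p$ to the per-sequence first-halting rounds $r_i$, and that correctly propagates $\bad$/non-$\bad$ uniformity through the global\_decision rule so that $d$ is provably the same everywhere. All of the genuinely delicate content — the intra-sequence consistency, the ``halt by $r+x$'' continuation semantics, and the behaviour of processes that keep a few protocols running to help stragglers decide — is already absorbed into \lemmaref{lem:haltdecide} and \corollaryref{cor:partial}, so no new low-level argument is required; the proof is essentially a disjunction over the two Global Halting triggers followed by monotonicity of $\min$.
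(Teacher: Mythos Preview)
Your proposal is correct and matches the paper's intent: the paper states this corollary as an ``immediate consequence'' of \lemmaref{lem:haltf2}, \lemmaref{lem:haltdecide}, and the global halting/decision definitions, and you have accurately filled in the two-case bookkeeping (some sequence halts on \bad\ versus all four sequences halt without \bad), together with the monotonicity of $\min$ under $r_i\le r$, that the paper leaves implicit.
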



\section{Bounding the size of the tree}\label{sec:tree-size}

Following the approach is \cite{GM98}, we make the following definitions:
\begin{definition}
	A node $\sigma z \in \Sigma$ is \emph{fully corrupt} if there does not exist $p \in G$ and  $\sigma' \sqsupseteq \sigma z$ such that $\sigma' \in \RT_p[|\sigma z|+2].$
\end{definition}

\begin{definition}
	A process $z$ is \emph{becomes fully corrupt at} $i$ if exists a node $\sigma z \in \Sigma$ that is fully corrupt, $|\sigma z|=i$ and for every previous node $|\sigma' z|<i$, node $\sigma' z$ is not fully corrupt.
\end{definition}

The following is immediate from the definitions above.
\begin{claim}
	If process $z$ becomes fully corrupt at $i$ then of all the nodes of $\Sigma$ that end with $z$ only nodes of round $i$ and $i+1$ can be fully corrupt. 
\end{claim}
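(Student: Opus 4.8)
The plan is to show that, given a process $z$ that becomes fully corrupt at $i$, no node of $\Sigma$ ending in $z$ other than those at levels $i$ and $i+1$ can be fully corrupt. The levels strictly below $i$ cost nothing: the definition of \emph{becomes fully corrupt at $i$} already demands that every node $\sigma' z$ with $|\sigma' z| < i$ is not fully corrupt. So the whole task reduces to ruling out every node $\tau z$ with $\ell := |\tau z| \ge i+2$, and the two-round gap between $i+1$ and $i+2$ is exactly the detection-plus-masking latency of the protocol.

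First I would turn the witness into a detection. Let $\sigma z$ with $|\sigma z| = i$ be the fully corrupt node supplied by the hypothesis. By definition no correct process has any $\sigma' \sqsupseteq \sigma z$ in its $\RT$ by the end of round $i+2$; in particular $\sigma z \notin \RT_p[i+2]$, and no proper prefix of $\sigma z$ is in $\RT_p[i+2]$ either, since resolving such a prefix would colour $\sigma z$ into $\RT_p$ and contradict full corruption. The first fact forces $z$ to lack the $n-t$ voters that \itrule would need for $(\sigma,z,\cdot)$ at every correct $p$ (otherwise $\sigma z$ would have been put into $\RT_p$), and the second fact discharges the ``no resolved prefix'' side-condition of the detection rules. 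Consequently the \textbf{Not IT-to-RT} rule fires at round $i+2$ at every correct process, so $z$ lies in $\F_q$ for every correct $q$ by the end of round $i+2$; from that round on all correct processes mask $z$ to $\bot$, and the re-masking triggered when $z$ is inserted into $\F$ also rewrites the level-$(i+2)$ entries $\IT_q(\tau z)$ to $\bot$.

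It then remains to convert common masking into resolution. For any node $\tau z$ at level $\ell \ge i+3$ the value $\IT_q(\tau z)$ is recorded as $\bot$ at every correct $q$ already when it first appears, while for $\ell = i+2$ it is rewritten to $\bot$ by the end of round $i+2$; in both cases every correct process henceforth reports $\bot$ along the whole subtree of $\tau z$. This is precisely the situation handled by the \textbf{Masking} guarantee (\theoremref{thm:main}, Item~\ref{c:masking}), whose proof treats a masked node ``as if a correct process sent $\bot$'': within two rounds every correct process either puts $\RT(\tau z)=\bot$ or resolves a proper prefix of $\tau z$, which colours $\tau z$. Either way some $\sigma' \sqsupseteq \tau z$ enters $\RT_q[\ell+2]$, so $\tau z$ is not fully corrupt, completing the argument.

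The step I expect to be the main obstacle is the tight round accounting at the boundary level $\ell = i+2$. Detection only guarantees $z \in \F_q$ by the \emph{end} of round $i+2$, so the level-$(i+2)$ node must be rescued through the re-masking of already-received round-$(i+2)$ messages together with the fact that its subtree is built in rounds $i+3$ and $i+4$; I must check that this yields resolution by round $(i+2)+2 = i+4$ and invoke the masking argument with index $k=\ell+2$ (using its sharpened $k=\phi+1$ form when $\tau z$ is near the leaves). Levels $\ell \ge i+3$ are then strictly easier, since $z$ is masked before such nodes are ever recorded, so the masking hypothesis holds with room to spare; and the same latency is what legitimately leaves levels $i$ and $i+1$ as possibly fully corrupt.
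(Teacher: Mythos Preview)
Your proposal is correct and follows essentially the same approach as the paper: detect $z$ by round $i+2$, then argue that masking forces every later node $\tau z$ into $\RT$ with value $\bot$ within two further rounds. The paper's proof is a two-sentence sketch (``all correct processes will have $z\in\F$ in round $i+2$; so in that round and later all nodes will put $\bot$ in $\RT$ for $z$''), and your write-up is a faithful expansion of exactly that argument, including the explicit appeal to \textbf{Not IT-to-RT} for the detection step and to the masking/validity machinery of \theoremref{thm:main} for the resolution step; you also correctly flag the boundary level $\ell=i+2$ as the place where the round accounting is tight.
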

\begin{proof}
	 By definition of fully corrupt, all correct processes will have $z\in \F$ in round $i+2$. So in that round and later all nodes will put $\bot$ in $\RT$ for $z$. 
\end{proof}

Let $\CT$, the \textit{corrupt tree}, be a dynamic tree structure. $\CT$ is the tree of all fully corrupt nodes (note that due to coloring, the set of fully corrupt nodes is indeed a tree). We denote by $\CT[i]$ the state of $\CT$ at the end of round $i$. By the definition of fully corrupt, at round $i$ we add nodes of length $i-2$ to $\CT$. 


We \textit{label} the nodes in $\CT$ as follows: a node $\sigma z \in \CT$ is a \textit{regular} node if process $z$ becomes fully corrupt at $|\sigma z|$ and  $\sigma z \in \CT$ is a \textit{special} node if process $z$ becomes fully corrupt at $|\sigma z| -1$.

Let $\alpha_i$ denote the distinct number of processes that become fully corrupt at round $i$. For convenience, define $\alpha_0=0$ (this technicality is useful in \lemmaref{lem:two-zero-get-stuck}). Let $A=\alpha_0,\alpha_1,\dots$ be the sequence of counts of 
process that become fully corrupt in a given execution.

Following the approach of \cite{GM98}, we define $waste_i = (\sum_{j\le i} \alpha_i) - i$. So $waste_i$ is the number of processes that became fully corrupt till round $i$ minus $i$ (the round number). The following claim connects $waste_i$ to $\cap_{p \in G} \FA[i+3]_p$ the set of fully detected corrupt processes at round $i+3$.

\begin{claim}\label{waste-imples-FA}
	For any round $4\le r \le t+1$, and any correct process we have $\left|\FA[r]\right| \geq \sum_{j\le r-3} \alpha_i$.
\end{claim}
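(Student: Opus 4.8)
The plan is to route every process that becomes fully corrupt into the common detected set $\FA$ within three rounds, and then simply count. The proof splits into a single-process lemma and a disjointness/bookkeeping argument.

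First I would establish the key single-process statement: \emph{if a process $z$ becomes fully corrupt at round $i$, then $z\in\FA_p[i+3]$ for every correct $p$.} The claim immediately preceding this one already records that full corruption of $z$ at round $i$ forces $z\in\F_p$ by the end of round $i+2$ at every correct $p$, and I would reuse exactly that reasoning. In detail: if $\sigma z$ with $|\sigma z|=i$ is fully corrupt then no $\sigma'\sqsupseteq\sigma z$ lies in $\RT_p[i+2]$; moreover no prefix $\sigma'\sqsubseteq\sigma$ can lie in $\RT_p[i+2]$ either, since by coloring that would color $\sigma z$ and place it in $\RT_p$, contradicting full corruption. Hence the prefix guard of the \textbf{Not IT-to-RT} detection rule is satisfied at round $i+2$, and since $\sigma z\notin\RT_p$ the node $z$ cannot have accumulated $n-t$ voters (otherwise \itrule would have resolved $\sigma z$ into $\RT_p$), so the rule adds $z$ to $\F_p$ by the end of round $i+2$. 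Because $n>3t$ gives $n-t\ge 2t+1$, at least $2t+1$ correct processes hold $z$ in their $\F$ set at the end of round $i+2$; when the $\F$-lists are gossiped in round $i+3$, $z$ therefore appears in at least $2t+1$ of them and every correct process inserts $z$ into its $\FA$. Thus $z\in\FA_p[i+3]$, and by \lemmaref{lem:FA} it stays there in all later rounds.

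Next I would carry out the counting. Fix $4\le r\le t+1$ and a correct process $p$. By definition a process becomes fully corrupt at exactly one round (the first round carrying a fully corrupt node ending in that process), so the processes counted by $\alpha_j$ for distinct values of $j$ are pairwise disjoint, and $\sum_{j\le r-3}\alpha_j$ is exactly the number of distinct processes that become fully corrupt at some round $j\le r-3$. For every such $j$ we have $j+3\le r$, so by the single-process statement each of those $\alpha_j$ processes lies in $\FA_p[j+3]\subseteq\FA_p[r]$, where the inclusion is the monotonicity of $\FA$ from \lemmaref{lem:FA}. Hence all $\sum_{j\le r-3}\alpha_j$ distinct processes are contained in $\FA_p[r]$, yielding $|\FA_p[r]|\ge\sum_{j\le r-3}\alpha_j$ as required.

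The main obstacle is the single-process statement, specifically certifying that full corruption genuinely triggers detection rather than silently leaving $z$ undetected. The two delicate points are (i) the coloring argument showing that no prefix of $\sigma z$ can be in $\RT_p[i+2]$, which is what lets the detection rule's prefix guard pass, and (ii) ruling out that $z$ has $n-t$ voters at round $i+2$, since that would have resolved $\sigma z$ into $\RT_p$ and contradicted full corruption. Everything after that — the one-round gossip promotion from $\F$ to $\FA$ using $n-t\ge 2t+1$, the disjointness of the $\alpha_j$ across rounds, and the monotonicity of $\FA$ — is routine.
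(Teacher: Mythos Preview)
Your proposal is correct and follows essentially the same approach as the paper: full corruption at round $i$ forces $z\in\F$ at all correct processes by round $i+2$, gossip then pushes $z$ into $\FA$ by round $i+3$, and summing over the disjoint $\alpha_j$'s gives the bound. The paper's own proof is a two-sentence sketch that asserts these two steps without justification; your version simply fills in the detection-rule and gossip mechanics that the paper leaves implicit.
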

\begin{proof}
	By the definition of $z$ becoming fully corrupt at $i$, all correct processes will have $z \in \F$ in round $i+2$. Due to the gossiping of $\F$, all correct processes will have $z \in \FA$ in round $i+3$. 
\end{proof}

So if $waste_i \ge 6$ then in round $r=i+3$ we will have $\left(\sum_{j\le i} \alpha_i \right) - i \ge 6$ so by  \lemmaref{waste-imples-FA} for each correct process we have $|\FA[r] |\ge r+3$. In this case all correct processes will start in the associated monitor sequence the next protocol with initial value \bad and the protocol and monitor sequence and global protocol will reach agreement and halt on \bad by round $i+6$ (by \lemmaref{lem:input-agree}).

We will now show that if the adversary maintains a small waste (less than 6 by the argument above, but this will work for any constant) then the $\CT$ tree must remain polynomial sized.

The following key lemma shows that the adversary cannot increase the number of leaves by ``cross contamination". In more detail, if the adversary causes two fully corrupt processes at round $i_1$ followed by a sequence of rounds with exactly one fully corrupt process at each round followed by a round with no fully corrupt process at that round then this action essentially keeps the tree $\CT$ growing at a slow (polynomial) rate.
We note that the focus on ``cross contamination" follows the approach of \cite{GM98}. But they only verify the case of two fully corrupt followed by a round with no fully corrupt. We have identified a larger family of adversary behavior that does not increase the waste (in the long run).  Our proof covers this larger set of behaviors and this requires additional work.

\begin{lemma}\label{lem:two-bad-in-two-rounds-ok}
	Assume $0 < i_1 < i_2 $ such that  $ \alpha_{i_1} = 2 $,  $ \alpha_{i_2}=0$ and   for all $i_1<i<i_2$, $\alpha_i=1$ then  for any $\sigma \in \Sigma_{i_1-1} \cap \CT$ it is not the case that there exists $\sigma p \tau \in \Sigma_{i_2+1} \cap \CT$ and there exists  $\sigma q \tau \in \Sigma_{i_2+1} \cap \CT$ (so there is at most one extension). Moreover the size of the subtree starting from $\sigma p$ or $\sigma q$ and ending in length $i_2+1$ is bounded by $O((i_2 -i_1)^2)$.
\end{lemma}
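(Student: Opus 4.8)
The plan is to work almost entirely with the regular/special labelling of $\CT$ together with a charge-counting argument keyed to the sequence $\alpha_0,\alpha_1,\dots$, invoking the protocol semantics (\theoremref{thm:main}) only where the combinatorics alone falls short. The charging map is the workhorse: by the Claim preceding the lemma, every fully corrupt node $\nu z\in\CT$ at level $\ell$ has $z$ born (first fully corrupt) either at round $\ell$ (a regular node) or at round $\ell-1$ (a special node); write $b(z)$ for that birth round and charge the node to round $b(z)$. Since a root-to-leaf path of $\CT$ never repeats a process (EIG structure), the number of path-nodes charged to any fixed round $k$ is at most $\alpha_k$. Assume for contradiction that both $\sigma p\tau$ and $\sigma q\tau$ lie in $\Sigma_{i_2+1}\cap\CT$ with $p\ne q$, and write $\tau=y_{i_1+1}\cdots y_{i_2+1}$. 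Because $\CT$ is prefix-closed, for each $i_1+1\le j\le i_2+1$ both $\sigma p y_{i_1+1}\cdots y_j$ and $\sigma q y_{i_1+1}\cdots y_j$ are fully corrupt level-$j$ nodes ending in the same process $y_j$, so $y_j$ has one well-defined birth round $b(y_j)\in\{j-1,j\}$ shared by the two branches.

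The first substantive step forces the whole shared suffix to be special. Since $\alpha_{i_2}=0$, no process is born at round $i_2$, which forces $y_{i_2}$ to be special ($b=i_2-1$) and $y_{i_2+1}$ to be regular ($b=i_2+1$). Next, for $i_1+1\le k\le i_2-1$ the two distinct processes $y_k$ and $y_{k+1}$ would both be charged to round $k$ if $y_k$ were regular and $y_{k+1}$ special; as $\alpha_k=1$ this is impossible, so ``$y_{k+1}$ special'' propagates to ``$y_k$ special''. Descending from $y_{i_2}$ this yields that every $y_j$ with $i_1+1\le j\le i_2$ is special; in particular $y_{i_1+1}$ is born exactly at round $i_1$ and is charged there.

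The contradiction is then sought at the divergence level $i_1$. Counting charges to round $i_1$ across the two branches, the shared $y_{i_1+1}$ already contributes one, and $\alpha_{i_1}=2$ leaves room for at most one more, so at most one of $p,q$ can be regular; at least one, say $q$, is special, i.e.\ born at round $i_1-1$. The clean base case is $i_1=1$: a level-$1$ node cannot be special because $\alpha_0=0$ (there is no fully corrupt node of length $0$), so both $p$ and $q$ are regular and, together with $y_2$, give three processes born at round $1>\alpha_1$, a contradiction. For $i_1>1$ I would reduce to this boundary: a special $q$ has a unique minimal-level regular occurrence $\mu q$ at level $i_1-1$, and I would push the two-branch configuration one level up toward the $\alpha_0=0$ boundary in the spirit of \lemmaref{lem:two-zero-get-stuck}. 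The main obstacle is precisely the mixed case (one of $p,q$ regular, the other special), which is consistent with the $\alpha$-budget alone, so the kill must use the protocol layer. Here I would invoke \theoremref{thm:main}: a fully corrupt special node cannot be a colored node (coloring would place it in some correct $\RT$, contradicting full corruption), so each $y_j$ must itself drive fresh disagreement about its parent's value in both the $\sigma p\cdots$ and the $\sigma q\cdots$ copy. Sustaining two such parallel special chains with a common label sequence requires, at the round where each chain must be independently re-confirmed in $\RT$, more than one fresh actively-faulty process — which contradicts $\alpha_k=1$ and, through the $\srule$/$\rgcrule$ threshold interplay, the fact that $\srule$ can alter at most one child. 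This is the step I expect to be the crux and the one demanding the most care.

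Finally, for the $O((i_2-i_1)^2)$ bound I would argue level by level inside the subtree rooted at $\sigma p$. A fully corrupt child at level $j+1$ of a level-$j$ node must end in a process born at round $j+1$ (regular) or $j$ (special), so the out-degree is at most $\alpha_{j+1}+\alpha_j\le 2$ in the interior of the window. The naive bound from this is exponential, but the ``at most one extension'' property just established, applied at every internal node rather than only at $\sigma$, forbids two distinct children from both reaching the bottom level $i_2+1$ through a common continuation; combined with the fact that the special label at each interior level is the unique process born the previous round, this collapses the branching so that only $O(i_2-i_1)$ nodes survive to each level. Summing over the $O(i_2-i_1)$ levels gives $O((i_2-i_1)^2)$.
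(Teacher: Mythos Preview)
Your charging argument is clean, but it proves a statement strictly weaker than what is needed, and the gap cannot be closed combinatorially.

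You read the lemma as forbidding two extensions $\sigma p\tau$ and $\sigma q\tau$ with the \emph{same} suffix $\tau$. The parenthetical ``so there is at most one extension'' and the paper's use of the lemma in \sectionref{sec:tree-size} make clear that the intended (and required) statement is stronger: at most one child of $\sigma$ has \emph{any} descendant in $\CT$ at level $i_2+1$, with no constraint that the suffixes agree. The canonical bad configuration is exactly the pair of ``special branches'' $\sigma\,p\,q\,x_3\cdots x_\ell$ and $\sigma\,q\,p\,x_3\cdots x_\ell$, where $p,q$ are the two processes born at $i_1$ and $x_j$ is the unique process born at round $i_1+j-2$. These two branches have \emph{different} suffixes ($q\,x_3\cdots$ versus $p\,x_3\cdots$), and each of them is fully consistent with the $\alpha$-budget: every node on either path is charged to a distinct round with the correct multiplicity. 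Your charge-counting therefore says nothing about this pair; yet ruling out the coexistence of these two branches is precisely the content of the lemma.

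The paper's proof is not combinatorial here at all. It analyzes the protocol at node $\sigma p$: it partitions the correct children into ${\it correctVoter}$/${\it correctDetector}$ via the \textbf{Not Voter} rule, partitions the faulty children by whether their fixed $\RT$-value agrees with the majority, and then shows a trichotomy. Either enough children agree and the $\rgcrule$ fires on $\sigma p$; or enough correct detectors exist and the $\srule$ fires on $\sigma q p$; or there are enough ${\it faultyEchoOther}$ children, in which case the \textbf{Not Masking} rule forces each of them to send $\bot$ for the sibling branch $\sigma q p$, again triggering the $\srule$ there. This three-case analysis, and in particular the \textbf{Not Masking} mechanism that transfers information from the $\sigma p$ branch to the $\sigma q p$ branch, is the substance of the lemma; the generalization to $i_2-i_1>1$ then iterates this argument down the chain of $x_j$'s. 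Your brief appeal to \theoremref{thm:main} is aimed at the wrong configuration (the ``mixed case'' where one of $p,q$ is special at level $i_1$), and even there it does not identify the actual mechanism (the $\srule$/$\rgcrule$/\textbf{Not Masking} interplay) that does the work.

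For the $O((i_2-i_1)^2)$ bound, your argument leans on the ``at most one extension'' property in its strong form, which you have not established; and the final collapse step (``only $O(i_2-i_1)$ nodes survive to each level'') is asserted rather than derived. The paper instead writes down the subtree explicitly: two special branches plus $O(i_2-i_1)$ regular branches, each a path of length $O(i_2-i_1)$, and the bound is immediate from that enumeration.
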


See the additional analysis in \sectionref{sec:proofs}.

To bound the size of $\CT$, we partition the sequence $A=\alpha_0,\alpha_1,\dots$ by iteratively marking subsequences  using the following procedure. For each subsequence we mark, we prove that it either causes the tree to grow in a controllable manner (so the ending tree is polynomial), or it causes the tree to grow considerably (by a factor of $O(n)$ ) but at the price of increasing the waste by some positive constant. Since the waste is bounded by a constant, the result follows.
 
\beginsmall{enumerate}
	\item By \lemmaref{lem:two-zero-get-stuck} we know that if $A$ contains a $0(1)^*0$ (a sequence starting with 0 then some 1's then 0) then it contains it just once  as a suffix of $A$. Moreover, this suffix does not increase the size of the tree by more than $O(n)$. Let $A_1$ be the resulting unmarked sequence after marking such a suffix (if it exists).
	
	\item Mark all subsequences in $A_1$ of the form $2(1)^*0$ (a sequence starting with 2 then some 1's then 0). By \lemmaref{lem:two-bad-in-two-rounds-ok} each such occurrence  will not increase the number of leafs in $\CT$ (but may add branches that will close whose total size is at most $n^2$ over all such sequences). Let $A_2$ be the remaining unmarked subsequences.
	
	\item Mark all subsequences in $A_2$ of the form $X(1)^*0$ where $X \in \{3,\dots,t\}$ (a sequence starting with 3 or a larger number followed by some 1's then 0). By \lemmaref{lem:bounded-blowup} each occurrence of such a sequence may increase the size of the tree multiplicatively by $O(n)$ leafs and $O(n^2)$ non-leaf nodes, but this also increases the $waste$ by $c-1>1$ (where $c$ is the first element of the subsequence). Observe that the remaining unmarked subsequences do not contain any element that equals 0. Let $A_3$ be the remaining unmarked subsequences.
	
	\item Mark all subsequences of the form $Y(1)^*$ where $Y \in \{2,\dots,t\}$ (a sequence whose first element is 2 or a larger number followed by some 1's but no zero at the end). Again, by \lemmaref{lem:bounded-blowup} each such occurrence may increase the size of the tree by $O(n)$ leafs and $O(n^2)$ non-leafs, but this also increases the $waste$ by $c>1$. Let $A_4$ be the remaining unmarked.

	\item Since $A_3$ contains no element that equals zero and we removed all subsequences that have element of value 2 or larger as the first element then $A_4$ must either be empty or $A_4$ is a prefix of $A$ of the form $(1)^*$ (a series of 1's ). Since it is a prefix of $A$ then a sequence of 1's keeps at most one leaf. So the tree remains small.
\endsmall{enumerate}

Thus, the size of $\CT$ is polynomial, which by \lemmaref{IT-bounded-by-CT} bounds the size of $\IT$.  This completes the proof of  \theoremref{thm:full}.

\subsection{Additional Analysis}\label{sec:proofs}

The following lemma bounds the size of $\IT$ as a function of the size of $\CT$ times $O(n^7)$.
\begin{lemma}\label{IT-bounded-by-CT}
	If $\sigma \in \IT$ and $|\sigma|>7$ then there exists $\sigma' \sqsubset \sigma$ with $|\sigma'| \ge |\sigma|-7$ such that $\sigma' \in \CT$.
\end{lemma}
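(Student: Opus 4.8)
The plan is to fix a correct process $z$ with $\sigma\in\IT_z$ and $\ell:=|\sigma|>7$, and to prove the statement by contradiction after first extracting a clean \emph{branch-open} consequence. Since node $\sigma$ is received into $\IT_z$ during round $\ell$, it must belong to the receive set $\mathcal{S}_\ell$, so the branch leading to $\sigma$ is still open through round $\ell-1$. By the \decayrule, a branch is closed at the end of round $r$ as soon as some prefix already lies in $\RT_z[r-1]$; thus if any proper prefix $\sigma'\sqsubset\sigma$ had entered $\RT_z$ by the end of round $\ell-2$, the \decayrule would have closed the branch by round $\ell-1$ and $\sigma$ would never reach $\mathcal{S}_\ell$. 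Hence the first step establishes: $\sigma\in\IT_z$ forces that \emph{no proper prefix of $\sigma$ is in $\RT_z[\ell-2]$}.

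The second step is the contrapositive: assume no prefix $\sigma'$ with $\ell-7\le|\sigma'|\le\ell-1$ is fully corrupt, and aim to exhibit a proper prefix in $\RT_z[\ell-2]$. Unwinding the definition of fully corrupt applied to the on-path prefix $\rho=\sigma_{\ell-6}$ yields a correct process $p$ and a node $\tau\sqsupseteq\rho$ with $\tau\in\RT_p[\,|\rho|+2\,]=\RT_p[\ell-4]$. I would then replace $\tau$ by its highest ancestor $\tau''$ that was \emph{put} (not merely colored) into $\RT_p$, tracked via $\PT_p$. Using the round offsets of the put rules (a level-$m$ node can be put only from round $m+1$ onward, earliest via \earlyitrule, while \itrule and \strongitrule act at round $m+2$), the level of $\tau''$ is at most $|\rho|+1$. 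If $|\tau''|\le|\rho|$, then $\tau''$ coincides with the on-path prefix $\sigma_{|\tau''|}$, so $\sigma_{|\tau''|}\in\RT_p[\ell-4]$, and by \theoremref{thm:main} (Liveness, item~\ref{c:two-rounds}) we get $\sigma_{|\tau''|}\in\RT_z[\ell-2]$, contradicting the branch-open consequence. The constant $7$ is sized exactly to absorb the $+2$ in the definition of fully corrupt, the $+2$ liveness delay, and the small put-round offsets, all while landing at or before round $\ell-2$.

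The main obstacle is the remaining case $|\tau''|=|\rho|+1$, where $\tau''=\rho y$ is a child of $\rho$ that branches \emph{off} the path to $\sigma$: then $\tau''$ is resolved at $z$ (again by liveness, $\rho y\in\RT_z[\ell-2]$) but, not being a prefix of $\sigma$, it does not close $\sigma$'s branch, and a single resolved off-path child does not force its parent $\rho$ into $\RT$. This is precisely why one must work with a window of consecutive prefixes rather than a single level. I would resolve it by combining the non-full-corruptness hypotheses across the whole window $[\ell-7,\ell-1]$: each such on-path prefix contributes a resolved descendant at some correct process, and I would use coloring together with the upward resolve rules (\gcrule and \rgcrule) and validity (\theoremref{thm:main}, item~\ref{c:good-sending}) to propagate these resolutions onto an on-path ancestor. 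The round bookkeeping for this up-propagation — each application of \gcrule/\rgcrule costing a bounded number of rounds and the \rgcrule threshold $n-t-1$ tolerating at most one off-path child — is what pins the window width to $7$ and produces an on-path prefix in $\RT_z[\ell-2]$, contradicting the branch-open consequence.

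Finally, the counting payoff is immediate once the statement is proved. Nodes of depth at most $7$ number at most $n+n^2+\dots+n^7=O(n^7)$, and every $\IT$-node of depth exceeding $7$ is, by the lemma, a descendant within $7$ levels of some node of $\CT$; since each $\CT$-node has at most $n$ children it has $O(n^7)$ such descendants. Summing over $\CT$ gives $|\IT|\le|\CT|\cdot O(n^7)$, which is exactly the bound needed to conclude that a polynomial $\CT$ forces a polynomial $\IT$.
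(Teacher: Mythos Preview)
Your first step — that $\sigma\in\IT_z$ forces no proper prefix of $\sigma$ to lie in $\RT_z[\ell-2]$ — matches the paper. But from there the paper's proof is far shorter than yours, and the complication you spend most of your effort on comes from a misreading of the definition of \emph{fully corrupt}.

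The paper writes the definition with $\sigma'\sqsupseteq\sigma z$, but every subsequent use (this lemma's proof, and also the proof of \lemmaref{lem:two-bad-in-two-rounds-ok}) treats ``$\sigma z$ not fully corrupt'' as ``$\sigma z$ itself lies in $\RT_p[|\sigma z|+2]$ for some correct $p$.'' That is the reading one gets with $\sqsubseteq$ (ancestor-or-self), which by coloring is equivalent to the node itself being resolved; the parenthetical ``due to coloring, the set of fully corrupt nodes is indeed a tree'' is further confirmation. Under that reading your off-path case never arises. The paper simply takes $\sigma'w=\sigma_{\ell-6}$ (your $\rho$), concludes $\sigma'w\in\RT_p[\ell-4]$ for some correct $p$, applies liveness to get $\sigma'w\in\RT_z[\ell-2]$, and lets the \decayrule close the branch at round $\ell-1$. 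That is the whole argument; only the single prefix at depth $\ell-6$ is needed, not a window.

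Your proposed resolution of the off-path case is not only unnecessary, it is not actually carried out. You assert that combining the non-fully-corrupt witnesses across levels $\ell-7,\dots,\ell-1$ and propagating upward via \gcrule/\rgcrule will eventually land on an on-path prefix in $\RT_z[\ell-2]$, but you give no mechanism for why several off-path resolved children — possibly at different correct processes, at different rounds, and under different parents — would ever meet the $n-t-1$ threshold of \rgcrule at a single on-path node within the required round budget. So under your literal reading the proof has a genuine gap; under the paper's intended reading the whole detour is superfluous.
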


\ifdefined\LONG

\begin{proof}
	Seeking a contradiction let $\sigma = \sigma' \tau$ be of minimal length such that $\sigma \in IT$, $|\sigma|>7$, $|\tau|=7$ and there does not exist $\sigma' \tau' \in  \CT$ such that $\tau' \sqsubseteq \tau$.
	
	Let $w$ be the first element in $\tau$ so $\sigma' w \sqsubseteq \sigma' \tau$ then since $\sigma' w \notin \CT$ then by definition, some correct process will have $\sigma' w \in \RT[|\sigma' w|+2]$. By \theoremref{thm:main} statement~\ref{c:two-rounds} all correct processes will have $\sigma' w \in \RT[|\sigma'|+5$ and will close the branch $\sigma' w$ by round $|\sigma '|+6$ (see \decayrule) a contradiction to the assumption that $\sigma \in IT$ and $|\tau|=7$.
\end{proof}
\else 
\fi

The following lemma shows that the protocol stops early if the adversary causes two rounds with no new fully corrupt and only one fully corrupt per round between them.

\begin{lemma}\label{lem:two-zero-get-stuck}
	If exists $0 \le i_1 < i_2 $ such that $ \alpha_{i_1}= 0 $, $ \alpha_{i_2}=0$ and for all $i_1<i<i_2$, $\alpha_i=1$ then all processes will halt by the end of round $i_2 +5$. 
\end{lemma}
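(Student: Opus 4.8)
The plan is to reduce the statement to the monitor halting machinery: I will show that, under the stated $\alpha$-pattern, some correct process must apply monitor\_halting by the end of round $i_2$, and then invoke \corollaryref{lem:haltdecide-global} with $r\le i_2$ to conclude that every correct process halts by round $\min\{r+5,f+2,t+1\}\le i_2+5$. The essential point is that once a halting trigger is satisfied at round $\le i_2$ it is locked in, so the adversary's behaviour after $i_2$ (which the hypothesis says nothing about) is irrelevant.

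First I would record the elementary waste bookkeeping. With $waste_i=\big(\sum_{j\le i}\alpha_j\big)-i$, the hypotheses $\alpha_{i_1}=0$, $\alpha_j=1$ for $i_1<j<i_2$, and $\alpha_{i_2}=0$ give $waste_j=waste_{i_1-1}-1$ for $i_1\le j<i_2$ and $waste_{i_2}=waste_{i_1-1}-2$; in particular $waste$ is non-increasing on $[i_1,i_2]$ and attains its maximum there near the left endpoint. I then split into two cases. If $waste$ reaches $6$ at some round $i\le i_2$, then by non-monotonicity this already happens at a round $i\le i_1-1$, and by \claimref{waste-imples-FA} every correct process has $|\FA[i+3]|\ge i+3$; this is exactly the already-described $\bad$ path, which makes every correct process set its monitor value to \bad at the matching \figureref{alg:monitor} Line~\ref{m:2a} step and halt on \bad by round $i+6\le i_1+5\le i_2+5$. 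So the main case is $waste<6$ throughout $[i_1,i_2]$.

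In the low-waste case no correct process ever sets its monitor value to \bad in this window, so only faulty processes can report \bad at a $\bar r=3$ step, hence at most $f\le t$ do, and every correct process keeps $\early=\true$. Consequently the $early$-clauses of the halting rules $H_1,H_2,H_4$ are automatically met, and the only thing left to verify is the stopping condition ``all previously invoked protocols, except possibly the latest, have stopped'' at some correct process by round $i_2$. Here I would use the structural content of the hypothesis by an argument analogous to \lemmaref{lem:two-bad-in-two-rounds-ok}, strengthened by the leading $\alpha_{i_1}=0$: the two zeros bracket a single growing thread of $\CT$ that terminates at $i_2$, so the number of processes that ever become fully corrupt and influence any monitor protocol $\Df$ whose invocation window lies inside $[i_1,i_2]$ is bounded by $i_2-i_1-1$. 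Feeding this active-fault count into the early-stopping bound \theoremref{thm:agree}(Statement~\ref{t:fp2}) forces each such $\Df$ into the early-stopping regime, and \lemmaref{IT-bounded-by-CT} guarantees that $\IT$ drains on schedule; thus all but the most recently invoked protocol have stopped by round $i_2$, the $\early=\true$ clause fires, and a correct process applies monitor\_halting by round $i_2$.

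The main obstacle is precisely this last step: converting the combinatorial hypothesis on the sequence $\alpha_0,\alpha_1,\dots$ into a bound on the active faults seen by the monitor protocols that are still live near round $i_2$, and then checking the index arithmetic so that the halting rule fires at round $i_2$ rather than a round or two later, since the $+5$ slack inherited from \corollaryref{lem:haltdecide-global} leaves essentially no room. I expect to spend most of the effort tracking, round by round, which $\Df$ invocations are still active at round $i_2$ and verifying that the single-thread structure of $\CT$ pushes each of them below its early-stopping threshold in time; by comparison the waste bookkeeping and the $\early=\true$ argument are routine.
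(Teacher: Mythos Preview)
Your approach is far more indirect than the paper's, and the detour through the monitor machinery introduces a genuine gap.

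The paper's proof is a three-line structural argument on $\CT$ alone. The key observation (which you actually state yourself, buried in your third paragraph) is that a process that becomes fully corrupt at round $i$ can appear in $\CT$ only at levels $i$ and $i+1$. Since $\alpha_{i_1}=0$, nothing from levels $\le i_1$ survives to level $i_1+1$, so the only fully corrupt node at level $i_1+1$ is the single new process of that round. A trivial induction shows that at each level $i_1+j$ there is exactly one fully corrupt node, the new process of that round, and these lie on a single path. At level $i_2$, with $\alpha_{i_2}=0$, the only candidate would be the process from level $i_2-1$ appearing as its own child, which the EIG structure forbids; hence $\CT$ has no node at level $i_2$. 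All branches therefore close and every process halts by $i_2+5$. No monitor reasoning, no waste case split, no \corollaryref{lem:haltdecide-global}.

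Your route instead tries to fire a monitor halting rule at some round $\le i_2$ and then invoke \corollaryref{lem:haltdecide-global}. The gap is precisely the step you flag as requiring ``most of the effort'': you want to conclude that all previously invoked $\Df$ have stopped by round $i_2$ from a bound on the number of processes that \emph{become fully corrupt} in $[i_1,i_2]$. But the early-stopping clock in \theoremref{thm:agree}~(\ref{t:fp2}) is governed by $f_\phi$, the number of faulty processes not yet in everyone's $\FA$ when $\Df$ is invoked, not by how many become fully corrupt in the $\CT$ sense. A faulty process can keep $f_\phi$ large without ever becoming fully corrupt, and monitor instances invoked before $i_1$ are not covered by your count at all. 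So the implication ``single thread in $\CT$ $\Rightarrow$ all monitor $\Df$ stop by $i_2$'' is not established. The waste bookkeeping and the $early=\true$ discussion are correct but unnecessary: once you have the single-thread observation, use it directly on the tree rather than routing it back through the monitors.
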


\ifdefined\LONG

\begin{proof}
	The only fully corrupt process that can appear in round $i_1+1$ is the new one from $\alpha_{i_1+1}=1$ (because $\alpha_{i_1}=0$ and a process can be as a node in $\CT$ for only two rounds starting from the first round it is fully corrupt). A simple induction shows that at round $i_1+j$ only the new fully corrupt node of round $i_1+j$ can appear. Once we reach round $i_2$ then no node can be fully corrupt so all branches will close and all processes will halt by the end of round $i_2 +5$.
\end{proof}
\else 
\fi

We now prove the main technical result of this section  \lemmaref{lem:two-bad-in-two-rounds-ok}. It shows that having two fully corrupt then a series of one fully corrupt then a round with no fully corrupt does not increase the number of leafs in the tree. This can add some non-leaf nodes to the tree, but the overall addition of such nodes is bounded by a multiplicative factor of $O(n^2)$ over all such sequences.

\ifdefined\LONG

\begin{proof}[Proof of Lemma \ref{lem:two-bad-in-two-rounds-ok}]

	Let processes $p,q$ be the two that become fully corrupt at $i_1$. We begin with the case that $i_2=i_1+1$ such that there is no process that becomes fully corrupt at $i_2$. Consider any $\sigma \in \CT$ where $|\sigma|=i_1 -1$. The following is the subtree of $\sigma \in \CT$ that we will analyze:
	
	\Tree [.$\sigma$  [.p  q ].p  [.q  p ].q   ].$\sigma$

	The following analysis for process $p$ shows that either $\sigma p$ or $\sigma q$ or $\sigma qp$ will quickly be in $\RT$. Note that this implies that $p,q$ can extend any node $\sigma \in CT$ into at most one node of length $i_2$ in $\CT$. 
	
	Let {\it correctDetector} be the set of correct processes that detect $\sigma p$ via the \textbf{Not Voter} detection rule in round $|\sigma p| +1$. Let  {\it correctVoter} be the remaining correct processes (that are not in {\it correctDetector}). Note that by definition of \textbf{Not Voter}, the value of all those in {\it correctVoter} must be the same. Let $d$ be this value.
	
	For each $\sigma p u \in\Sigma$ with $u\neq q$ we have that $\sigma p u \notin \CT$ (because $\alpha_{i_1}=0$). So $\sigma p u \in \RT[|\sigma p u|+2]$ for some correct processes and hence their value is fixed (otherwise $\sigma p \in \RT$ and we are done) and all correct processes will have $\sigma pu \in \RT[|\sigma p|+5]$. Let {\it faultyEcho} be the set of corrupt children of $\sigma p$ whose value is fixed to $d$. Let {\it faultyEchoOther} be the remaining corrupt process that are children of $\sigma p$ whose value is fixed to $\ne d$. Note that $\sigma p$ has $n-|\sigma p|$ children of which all but child $\sigma p q$ must be fixed. Hence $|{\it faultyEcho}| + |{\it faultyEchoOther}| \ge n-|\sigma p|-1$.  
	
	There are three cases to consider:	
	
	\textbf{Case 1}: If $|correctVoter|+ |{\it faultyEcho}| \ge n-t -1$ then $\sigma p \in \RT[|\sigma p|+5]$ for all correct processes since all these $n-t-1$ children of $\sigma p$ will appear in $\RT[|\sigma p|+5]$ and so $\sigma p\in \RT[|\sigma p|+5]$  using the \rgcrule.

	Otherwise, $|correctVoter|+ |{\it faultyEcho}| \le n-t -2$ so it must be that $|correctDetector| + |{\it faultyEcho}Other|  \ge t+1 - |\sigma p| = t+2 - |\sigma pq|$. This is because  $\sigma p$ has $n-|\sigma p|$ children and each one of them except of child $q$ must fix their value in $\RT[|\sigma p|+3]$.
	
	\textbf{Case 2}: If $|correctDetector|\ge t+2 - |\sigma p q| $ then \srule will fire on the level $i_1+1$ node $\sigma q p$. This will occur because all other children of $\sigma q$ are not fully corrupt - hence will appear in $\RT [|\sigma q|+5]$. The only case in which \srule may not fire is if in the meantime $\sigma q \in \RT$ in which case we are done.
	
	\textbf{Case 3}: It must be that ${\it correctDetector} \le t$, hence ${\it correctVoter} \ge t+1$ on value $d$ (because $\sigma$ contains no correct process). Since $correctVoter \ge t+1$ then all correct processes  will see that $\sigma p$ is \textit{leaning towards} $d$ (see definitions 3. and 4. in the fault detection rules).
	
	For any $w \in {\it faultyEchoOther}$, since $w$ does not become fully corrupt at $i_1$ or $i_1+1$ it must be that are at least $t+1$ correct processes that are children of $\sigma p w$ that hear from $\sigma p w$ a value $d'$, $d' \ne d$. So the conditions of \textbf{Not Masking} for $\sigma p w$  hold.

	This implies that $w$ is `forced' to send $\bot$ for $\sigma qp w$ to all correct processes. For if $w$ sends $d' \neq \bot$ to any correct processes for $\sigma qp w$ then by  \textbf{Not Masking} rule at round $|\sigma p w|+2 = |\sigma qpw| +1$ these correct processes will detect $w$ as corrupt and in the same round mask $\sigma qpw$ to $\bot$.
	
	Therefore there will be $|correctDetector| + |{\it faultyEcho}Other|  \ge t+2 - |\sigma qp|$ children of $\sigma qp$ that will appear in $\IT$ with value $\bot$ and since there is no process that becomes fully corrupt at $i_1+1=i_2$ then all other children of $\sigma q$ must appear in $\RT [|\sigma q|+5]$. So the \srule will fire on the level $i_1+1$ node $\sigma q p$. This completes the proof for the case $i_1-i_2=1$.

	We can now consider the case where $i_1-i_2>1$. The key observation is that the above argument required two properties for a process $z$ that becomes fully corrupt at round $i$. The first is that all the level $i$ nodes of the form $\sigma z$ have all their children (except one) fixed to some value. The second is that the level $i+1$ nodes of the form $\sigma' z$ have the property that all other children of $\sigma '$ are fixed. 
	
	Intuitively, if a child $\sigma z u$ is fixed to the majority value of $\sigma z$ then $\sigma z u $ will help fix $\sigma z$ using the relaxed rule. Otherwise, $\sigma zu$ is fixed to some $d'$, which implies that at least $t+1$ correct processes received $d'$ from $\sigma zu$. Hence $\sigma z u$ must be a masker for the round $i+1$ node $\sigma' z$.
	
	Next we observe the structure of $\CT$ given a sequence with $i_1-i_2>1$. Let $p,q$ be the two processes in $i_1$, let $\ell=i_2-i_1+1$ and denote by $x_3,\dots,x_{\ell}$ the remaining fully corrupt by order of appearance.  Using an inductive argument one can show that any $\CT$ graph will be a subgraph of the following: for every node $\sigma \in \CT$ of length $i_1-1$ there will be two branches that we call \textit{special branches}. These branches will be $\sigma p q x_3 \dots x_\ell$ and $\sigma q p x_3 \dots x_\ell$. Observe that these branches contain only special nodes.  In addition, there will be regular branches as follows: $\sigma p x_3 \dots x_\ell$, $\sigma q x_3 \dots x_\ell$, $\sigma p q x_4 \dots x_\ell$, $\sigma q p x_4 \dots x_\ell$, \dots $\sigma p q x_3 \dots  x_i x_{i+2} \dots x_\ell$, $\sigma q p x_3 \dots x_i x_{i+1} \dots x_\ell$, $\dots$, $\sigma p q x_3 \dots x_{\ell -2} x_\ell$, $\sigma q p x_3 \dots x_{\ell-2}, x_\ell$. Observe that all these regular branches contain regular nodes and that all their children will be fixed due to round $i_2$ having no fully corrupt process. The number of regular branches is $O(i_2 -i_1)$ and the length of each branch is bounded by $O(i_2 -i_1)$.

	\Tree [.$\sigma$  [.p  [.q  [.$x_3$ [.$x_4$ $x_5$ ].$x_4$  $x_5$  ].$x_3$ [.$x_4$ $x_5$ ].$x_4$ ].q  [.$x_3$ [.$x_4$ $x_5$ ].$x_4$ ].$x_3$ ].p  [.q  [.$x_3$ [.$x_4$ $x_5$ ].$x_4$ ].$x_3$ [.p  [.$x_4$ $x_5$ ].$x_4$ [.$x_3$ $x_5$ [.$x_4$ $x_5$ ].$x_4$ ].$x_3$  ].p  ].q   ].$\sigma$
	
	The above tree is an example for $i_2-i_1=4$. The two special branches are the rightmost and leftmost paths. All other leafs are the endpoints of all the regular branches. Observe that given one more fully corrupt, each special branch is split into two branches, one extends the original special branch and the other is a new regular branch that continues as a path. Also observe that one more fully corrupt will simply extend the path of each regular branch by one.
	
	As all the regular branches will have all their children fixed, they cannot be used as leafs to extend the tree. Since there are $O(i_2-i_1)$ regular branches and each of them is of length at most $O(i_2-i_1)$ then the total amount of nodes added in this process is $O((i_2-i_1)^2)$ per each leaf in $\CT$ of length $i_1-2$. So if the size of the tree without this subtree is $x$ then the total number of non-tree nodes added by these types of sequences is at most $O(x n^2)$ (this is a crude bound that can be improved).
	
	We now need to show that at least one of the special branches gets fixed.	
	Since all the regular branches cannot expand, our goal is to prove that it cannot be the case that both  special branches are not fixed (in the $i_1-i_2=1$ the analogue is that either $\sigma pq$ or $\sigma qp$ is fixed). Given the key observation and the structure statement we can now apply a similar argument as we did for $p$ in the $i_1-i_2=1$ case. We start with $x_\ell$ and going towards $p,q$. We will show that in each iteration on level $i$ we either fix one of the special branches (and we are done) or we have sufficient conditions to use main argument on level $i-1$.
	
	For the base case, consider $x_\ell$. Because $i_2=0$ then all the level $i_1+\ell-2$ nodes of the form $\sigma' x_\ell$ (for any $\sigma'$) have all their children fixed. So we can apply the main argument: if all these level $i_1+\ell-2$ nodes get fixed using the \rgcrule then all the regular branches ending with $x_{\ell-1}$ have all their children fixed and the two special branches ending $x_{\ell-1}$ each have their parent with $x_{\ell-1}$ as a only child. Therefore we continue by induction. 
	Otherwise, by the argument above, all the level $i_1+\ell-2 +1$ nodes of the form  $\sigma' x_\ell$ (for any $\sigma'$) will be fixed \srule. In particular this includes the special branch. So we are done.
	
	For the general case, we assume that all level $i_1+j-2$ nodes of the form $\sigma' x_j$ (for any $\sigma '$) have all their children fixed and that for the two special branches, the parents of $x_j$ have $x_j$ as their only  child. 
	Again we can apply the $i_i-i_2=1$ arguments: If all these level $i_1+j-2$ node get fixed using the \rgcrule then we continue by induction to $j-1$. Otherwise, by the argument above, all the level $i_1+j-2 +1$ nodes of the form  $\sigma' x_\ell$ (for any $\sigma'$) will be fixed by the \srule. In particular this includes the special branch. 
	So we are done since the special branch is fixed	
\end{proof}
\else 
\fi

The following lemma shows that having a large number (3 or more) of processes becoming fully corrupt at a given round, followed by a sequence of 1's and then maybe followed by 0 does increase the number of leafs considerably. Note that if $\alpha_{i_1-1} + \alpha_{i_1} \ge 6$ then the monitor process will cause the protocol to reach agreement and stop in a constant number of rounds. So we only look at the case that $\alpha_{i_1-1} + \alpha_{i_1} < 6$.

\begin{lemma}\label{lem:bounded-blowup}
	If $2 <\alpha_{i_1}$, $\alpha_{i_1-1} + \alpha_{i_1} < 6$,  $ \alpha_{i_2} \in \{0,1\}$ and   for all $i_1<i<i_2$, $\alpha_i=1$ then  for any $\sigma \in \Sigma_{i_1-1} \cap \CT$ there are at most $O(i_2-i_1)$ nodes of the form $\sigma \tau \in \Sigma_{i_2+1} \cap \CT$. Moreover the size of the subtree starting from $\sigma$ and ending in length $i_2+1$ is bounded by $O ( (i_2 -i_1)^2)$.
\end{lemma}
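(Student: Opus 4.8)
The plan is to generalize the structural analysis in the proof of \lemmaref{lem:two-bad-in-two-rounds-ok} from two simultaneous corruptions to $c:=\alpha_{i_1}$ of them, where $3\le c$ and $c+\alpha_{i_1-1}<6$, so that $c=O(1)$. The workhorse throughout is the two-level claim: a process that becomes fully corrupt at round $i$ can label fully corrupt nodes only at levels $i$ and $i+1$. First I would bound the ``seeds'' of the subtree below a fixed $\sigma\in\Sigma_{i_1-1}\cap\CT$. Node $\sigma$ has at most $\alpha_{i_1}+\alpha_{i_1-1}<6$ children in $\CT$ (each ending in a process that becomes fully corrupt at $i_1$ or at $i_1-1$), and applying the two-level claim once more shows there are at most $(\alpha_{i_1}+\alpha_{i_1-1})^2=O(1)$ fully corrupt descendants at level $i_1+1$. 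From level $i_1+2$ on the $c$ processes of round $i_1$ can no longer appear, and since $\alpha_i=1$ for $i_1<i<i_2$, every fully corrupt node at level $i\ge i_1+2$ must end in the unique process that became fully corrupt at $i$ (a \emph{regular} node) or at $i-1$ (a \emph{special} node).

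With this labeling I would run the level-by-level recursion that underlies the explicit tree drawn in \lemmaref{lem:two-bad-in-two-rounds-ok}: a special node at level $i$ produces exactly one special child (ending in the round-$i$ process) and one regular child (ending in the round-$(i+1)$ process) at level $i+1$, whereas a regular node produces only a single regular child. Hence the number of special nodes per level is invariant and equals the $O(1)$ count of special seeds, while the number of regular nodes grows by the special count each level, i.e.\ linearly in $i-i_1$. Thus below $\sigma$ there are $O(1)$ special branches, each a path of length $O(i_2-i_1)$ that sheds at most one regular branch at each level; each regular branch is itself a path of length $O(i_2-i_1)$, and there are $O(i_2-i_1)$ of them in total. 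Summing over all $O(i_2-i_1)$ branches of length $O(i_2-i_1)$ yields the claimed $O((i_2-i_1)^2)$ bound on the subtree down to level $i_2+1$, and the number of branch endpoints at level $i_2+1$ is $O(i_2-i_1)$.

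The remaining and hardest step is to show that this combinatorial frontier is not enlarged at the boundary round $i_2$, where $\alpha_{i_2}\in\{0,1\}$. Here I would reuse, essentially verbatim, the branch-closing dichotomy from \lemmaref{lem:two-bad-in-two-rounds-ok}: every regular node has all but at most one of its children fixed (since at most one new process becomes fully corrupt at $i_2$), so either the \rgcrule fixes the node from its $n-t-1$ agreeing children, or the \textbf{Not Masking} rule forces the outlier child to $\bot$ and the \srule then fixes it; in either case the regular branch closes and contributes no further growth. The main obstacle is carrying this argument through with $c\ge 3$ competing children rather than the single extra child handled in \lemmaref{lem:two-bad-in-two-rounds-ok}: one must verify that the \srule (which touches only one child) together with the $n-t-1$ threshold of the \rgcrule still suffice to close every regular branch, and that the special branches, although they need not collapse to one as in the two-corruption case, remain $O(1)$ in number and therefore inflate the leaf count only multiplicatively by the path length $O(i_2-i_1)$. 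Confirming that the fresh fully-corrupt children that may appear at level $i_2+1$ are exactly the $O(i_2-i_1)$ surviving branch endpoints, which seed the next marked subsequence rather than re-expanding here, is the delicate part of the accounting.
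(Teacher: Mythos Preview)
Your first two paragraphs already constitute the paper's entire proof, and in fact give more detail than the paper does. The paper's argument is a two-line ``overly pessimistic'' count: at most $\alpha_{i_1-1}\cdot\alpha_{i_1}\le 16=O(1)$ descendants of $\sigma$ at level $i_1+2$, and from each such seed, $O(i_2-i_1)$ rounds of a single fully corrupt process generate at most $O(i_2-i_1)$ regular branches, each a path of length $O(i_2-i_1)$. Your special/regular recursion makes this explicit and is correct.

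Your third paragraph, however, is work you do not need to do, and the obstacles you worry about there are not obstacles at all. The lemma claims only that there are $O(i_2-i_1)$ descendants at level $i_2+1$, not that branches close. That bound follows purely from the combinatorics you already established: you have $O(i_2-i_1)$ branch endpoints at level $i_2$, and passing to level $i_2+1$ multiplies by at most a constant (since $\alpha_{i_2}\in\{0,1\}$ and $\alpha_{i_2-1}=1$). No appeal to the \rgcrule, the \srule, or \textbf{Not Masking} is required here. You are conflating this lemma with \lemmaref{lem:two-bad-in-two-rounds-ok}, whose conclusion is the much stronger ``at most one extension'' and which therefore genuinely needs the protocol-level branch-closing argument. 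The present lemma is deliberately weaker: sequences with $\alpha_{i_1}\ge 3$ increase the waste by at least $2$, so they occur only $O(1)$ times, and an $O(n)$ multiplicative blowup in the leaf count is acceptable. Drop the third paragraph and you are done.
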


\ifdefined\LONG
\begin{proof}
	Using an overly  pessimistic argument, every node $\sigma \in \Sigma_{i_1-1} \cap \CT$ can have at most $\alpha_{i_1-1} \cdot \alpha_{i_1} \le 16 =O(1)$ nodes of length $i_1+2$ in $\CT$. Even if each such node is a special node then after $O(i_2-i_1)$ rounds of just one fully corrupt each round, each such node of length $i_1+2$ will generate at most $O(i_2-i_1)$ regular branches, each is a path with at most $O(i_2-i_1)$ nodes.	


\end{proof}		
\else 
\fi

\section{Conclusion}
In this paper we resolve the problem of the existence of a protocol with polynomial complexity and optimal early stopping and resilience. The main remaining open question is reducing the complexity of such protocols to a low degree polynomial. Another interesting open problem is obtaining unbeatable protocols \cite{CGM14} (which is a stronger notion than early stopping).

We would like to thank Yoram Moses and Juan Garay for insightful discussions and comments.


\bibliographystyle{alpha}
\newcommand{\etalchar}[1]{$^{#1}$}

\end{document}